\algrenewcommand\textproc{}
\newcommand{\OO}[1]{\mathcal{O} \left( #1 \right)}
\newcommand{\OMG}[1]{\Omega \left( #1 \right)}
\newcommand{\Exp}[1]{\mathbb{E} \left[ #1 \right]}
\newcommand{\LExp}[2]{\mathbb{E} \left[ #1 \; \middle| \; #2 \right]}
\newcommand{\ceil}[1]{\left\lceil #1 \right\rceil}
\newcommand{\floor}[1]{\left\lfloor #1 \right\rfloor}
\newcommand{\flfrac}[2]{\floor{\frac{#1}{#2}}}
\newcommand{\cefrac}[2]{\ceil{\frac{#1}{#2}}}
\newcommand{\bra}[1]{\left( #1 \right)}
\newcommand{\set}[1]{ \left\{ #1 \right\} }
\newcommand{\longset}[2]{ \left\{ #1 \; \middle| \; #2 \right\} }
\newcommand{\abs}[1]{\left| #1 \right|}
\newcommand{\term}[1]{\textbf{#1}}
\DeclareMathOperator{\repr}{repr}
\DeclareMathOperator{\rev}{rev}
\DeclareMathOperator{\full}{full}
\DeclareMathOperator{\inv}{inv}
\DeclareMathOperator{\dist}{dist}
\newcommand{\mname}[1]{\mathrm{ #1 }}
\title{Sublogarithmic Approximation for Tollbooth Pricing on a Cactus}
\titlerunning{Tollbooth Pricing on a Cactus} 
\author{Andrzej Turko}{University of Wrocław, Poland}{andrzej.turko@gmail.com}{}{}
\author{Jarosław Byrka}{University of Wrocław, Poland}{jby@cs.uni.wroc.pl}{https://orcid.org/0000-0002-3387-0913}{NCN grant number 2020/39/B/ST6/01641.}
\authorrunning{A. Turko, J. Byrka} 
\keywords{Envy-free pricing, tollbooth problem, cactus graphs} 
\begin{document}

\maketitle

\begin{abstract}

We study an envy-free pricing problem, in which each buyer wishes to buy a~shortest path connecting her individual pair of vertices in a~network owned by a~single vendor.
The vendor sets the prices of individual edges with the aim of maximizing the total revenue generated by all buyers.
Each customer buys a~path as long as its cost does not exceed her individual budget.
In this case, the revenue generated by her equals the sum of prices of edges along this path.
We consider the unlimited supply setting, where each edge can be sold to arbitrarily many customers.
The problem is to find a~price assignment which maximizes vendor's revenue.
A~special case in which the network is a~tree is known under the name of the tollbooth problem.
Gamzu and Segev proposed a~$\OO{\frac{\log m}{\log \log m}}$-approximation algorithm for revenue maximization in that setting. 
Note that paths in a~tree network are unique, and hence the tollbooth problem falls under the category of single-minded bidders, i.e., each buyer is interested in a~single fixed set of goods.

In this work we step out of the single-minded setting and consider more general networks that may contain cycles.
We obtain an algorithm for pricing cactus shaped networks, namely networks in which each edge can belong to at most one simple cycle.
Our result is a~polynomial time $\OO{\frac{\log m}{\log \log m}}$-approximation algorithm for revenue maximization in tollbooth pricing on a~cactus graph.
It builds upon the framework of Gamzu and Segev, but requires substantially extending its main ideas: the recursive decomposition of the graph, the dynamic programming for rooted instances and rounding the prices.


\end{abstract}

\newpage

\section{Introduction}
\label{sect:intro}


The problem of maximizing revenue by setting optimal prices has been widely studied in various settings (see, e.g., \cite{largeMarkets, priceDoubling, impreciseDistribution}).
This work discusses the problem of envy-free pricing for revenue maximization.
In general, this problem can be modeled as a two phase game.
In the first step, vendor assigns prices to the offered goods.
Then, each buyer purchases her most preferred subset of goods based on given prices and her own preferences.
Every buyer aims to maximize her utility, and the seller aims to maximize the total price paid by customers.
The problem is to find an optimal strategy for the vendor.

More precisely, an instance of the envy-free pricing problem consists of $m$ goods and $n$ buyers.
Each buyer is defined by a function which assigns a non-negative valuation to every subset of the goods.
It is assumed that the valuation of an empty set for each customer equals zero.
A solution to the problem is formed by non-negative prices of goods and an envy-free allocation of goods to the buyers.
Utility of a buyer from a set of goods equals her valuation of this set minus the total price of its elements.
An allocation is envy-free when no buyer would like to change her assigned set of goods.
In other words, the set assigned to her must maximize her utility.

In this work we focus on the unlimited supply setting, where each one of the $m$ goods can be sold to arbitrarily many buyers.
Such goods may be thought of as intellectual property or access to infrastructure.
Sometimes the limited supply setting is also considered, where each good is available only in a certain number of copies.
In that case, the solution must not only satisfy the envy-freeness constraints, but also the number of buyers any good is allocated to must not exceed its supply.

We study a natural case of the envy-free pricing with unlimited supply, where the goods can be modeled by edges in a graph and buyers wish to purchase cheapest paths.
More precisely, each buyer has equal positive valuations for paths connecting a certain pair of vertices and zero valuation for all the other sets of goods.
Such a problem may be used to model a situation where the vendor is an owner of a road network and buyers are drivers wishing to travel from one city to another.

Guruswami et al. \cite{Guruswami05} have defined and studied two subcases of this scenario called: the tollbooth and the highway problems.
In the former the underlying graph is a tree and in the latter it is a path. We extend this collection by adding cactus graphs that allow edge-disjoint cycles and hence allow more than one path being attractive for a client.
To the best of our knowledge this is the first work that addresses envy-free tollbooth pricing of networks where clients have alternative routes (are not single-minded).

\subsection{Related work}

The problem of envy-free pricing for revenue maximization has been studied in various settings.
We are going to survey mostly the results for single-minded buyers, a model where each buyer has positive valuation for exactly one set of goods.


Guruswami et al. \cite{Guruswami05} defined the single-minded buyers setting and presented a polynomial $\OO{\log m + \log n}$-approximation algorithm for the variant with unlimited supply.
Also for the unlimited supply setting, Balcan, Blum and Mansour \cite{Balcan08} have shown that a logarithmic guarantee on expected revenue can be achieved by randomly setting a single price to all the goods. Notably, this result holds for buyers with arbitrary valuations. By taking it as a reference point, a natural question is: For what valuation classes sublogarithmic approximation of revenue is possible?

Indeed, for special cases of the  unlimited supply setting with single-minded buyers, such results were obtained. For the tollbooth problem, Gamzu and Segev \cite{Gamzu10} achieved a $\OO{\frac{\log m}{\log \log m}}$-approximation of revenue with a polynomial algorithm.
For the highway problem, Grandoni and Rothvoß \cite{grandoni2016pricing} have designed a polynomial time approximation scheme (PTAS).

Already for these two problems hardness results for envy-free pricing with single-minded buyers are known.
Guruswami et al. \cite{Guruswami05} have proven that the tollbooth problem is NP-hard.
This was followed by a result from Briest and Krysta \cite{Briest06}, who showed the same for the highway problem.

For the general envy-free pricing Demaine et al.~\cite{demaine2008combination} showed  several inapproximability results under various complexity assumptions.
They proved a lower bound of $\Omega(\log n)$, under a hardness hypothesis regarding the balanced bipartite independent set problem.
In this context, the result of Gamzu and Segev~\cite{Gamzu10} shows that pricing is strictly simpler to approximate on trees. We extend it to show that sublogarithmic approximation of revenue is also possible on cactus graphs.

Of course, the mentioned impossibility results hold for limited supply as well.
In that setting there also are several approximation results.
Cheung and Swamy \cite{Cheung08} have designed a $\OO{\sqrt{m}\log u_{max}}$-approximation algorithm for the general envy-free pricing problem with single-minded buyers ($u_{max}$ denotes the maximal number of copies of a single good).
In the tollbooth and highway problems they have obtained approximation ratio of $\OO{\log u_{max}}$.
Elbassioni, Fouz and Swamy~\cite{elbassioniWINE10} have obtained a matching approximation guarantee for the non-envy-free tollbooth problem without the single-mindedness constraint.
More recently, Grandoni and Wiese~\cite{grandoni_et_al:LIPIcs:2019:11175} obtained a PTAS for the limited supply version of the highway problem.

\subsection{Our result}

We consider the tollbooth problem on cactus graphs, a natural generalization of the original tollbooth problem (on trees) with unlimited supply.
Instead of requiring the graph to be a tree, we only require that the underlying graph is a cactus, i.e. its every edge belongs to at most one simple cycle. 
The main difference between the two models is that, unlike in a tree, in a cactus there can be multiple simple paths connecting a single pair of vertices.
Thus, each buyer can be interested in purchasing multiple sets of goods, i.e. is not single-minded.   
We obtain the following result:
\begin{theorem}
\label{theorem:main}
There exists a polynomial time approximation algorithm for the tollbooth problem on cactus graphs with unlimited supply which achieves an approximation guarantee for revenue of $\OO{\frac{\log m}{\log \log m}}$, where $m$ is the number of edges of the graph.
\end{theorem}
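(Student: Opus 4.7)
The plan is to adapt the three main ingredients of the Gamzu--Segev framework for tollbooth pricing on trees to the cactus setting: a recursive graph decomposition that halves a relevant size parameter at each level, a dynamic program that solves each rooted subinstance subject to a small number of guessed boundary prices, and a price-rounding step that restricts edge prices to $\OO{\frac{\log m}{\log \log m}}$ discrete values while losing only a constant factor of the optimal revenue.

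For the decomposition I would exploit the fact that a cactus is a block-tree whose two-connected blocks are either single edges or simple cycles. Contracting each cycle to a single vertex yields an auxiliary tree, and a heavy-path or centroid decomposition of this tree produces a hierarchy of logarithmic depth in which every level splits the current instance across a single separator---either a cut vertex or an entire cycle. When the separator is a cycle, every buyer whose path crosses it can choose between two arcs, which is the essential new phenomenon absent from the tree case.

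For the dynamic program on a rooted subinstance, I would follow Gamzu--Segev but enlarge the state: besides the guessed prices on the boundary above the root, the state records a short certificate describing the prices around each separating cycle, e.g.\ a rounding of the prices of the $\OO{\frac{\log m}{\log \log m}}$ heaviest edges of the cycle together with a ``split vertex'' indicating how the cycle is traversed by buyers entering from each side. Given such a certificate, each affected buyer deterministically picks the cheaper arc, which locally restores single-mindedness and lets the DP branch independently into the two resulting subcacti. Careful bookkeeping must keep the total number of certificates per cycle bounded by $m^{\OO{1}}$ so that the DP remains polynomial.

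The hard part will be the interaction between the cycle-choice certificates and the rounding step. Small perturbations of prices around a cycle may cause a buyer to switch arcs, and the switched arc may then exceed her budget, so naive rounding can cascade into a catastrophic revenue loss. I expect the analysis will require a two-stage rounding: first, within each cycle, commit for every relevant buyer to which arc she will ultimately pay for, and only then apply a Gamzu--Segev-style randomized price rounding independently on each side of each cut. The heart of the proof will be showing that this coupled procedure still loses only a constant factor of the optimum in expectation, while the enumeration over cycle certificates needed to guide it contributes merely a polynomial overhead to the running time.
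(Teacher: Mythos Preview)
Your outline has two genuine gaps, both concerning cycles.

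First, a decomposition that merely \emph{halves} a size parameter yields $\Theta(\log m)$ levels and hence only an $O(\log m)$ guarantee; to reach $O(\log m/\log\log m)$ you must shrink each fragment by a factor $k\approx\mathrm{polylog}\,m$ per level, as Gamzu--Segev do. More importantly, the proposed decomposition---contract each cycle and centroid-decompose the resulting block tree---implicitly assumes cycles can serve as atomic separators. But a single cycle may have $\Theta(m)$ edges. If you refuse to cut through it, the fragment containing it never shrinks; if you \emph{do} treat the whole cycle as the separator at some level, the DP at that level must price all of its edges simultaneously. Your proposed certificate, ``the $O(\log m/\log\log m)$ heaviest edge prices plus a split vertex,'' is not sufficient: a buyer can enter and exit the cycle at arbitrary vertices, and which arc is cheaper for her depends on the \emph{cumulative} price between those two vertices, not on a handful of heavy edges. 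There is no evident polynomial bound on the number of such certificates that would suffice.

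The paper confronts exactly this issue and resolves it quite differently. Its decomposition (Lemma~\ref{lemma:decomposition}) deliberately cuts long cycles across several fragments while keeping each cycle's two topmost edges together, and bounds the number of border vertices per fragment by $O(k)$. The per-level problem is then split into a \emph{non-skeleton} part (rooted instances) and a \emph{skeleton} part. For rooted instances the cycle is tamed not by guessing prices but by guessing the single \emph{unused edge}---the one edge of the cycle absent from a shortest-path tree to the root---after which the cycle behaves like a path (Section~\ref{subsect:rootedCase}). For the skeleton part the paper compresses it into $O(k)$ \emph{segments}, guesses rounded segment lengths and merely an \emph{interval} for each outer extension (the portion of a shared cycle lying in other fragments), and reconciles fragments along a shared cycle with a knapsack-style DP (Section~\ref{subsubsect:open}). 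The arc-switching phenomenon you highlight is handled inside each segment by a four-option randomized scheme (Sections~\ref{subsubsect:cyclicSegments}--\ref{subsubsect:acyclicSegments}), not by precommitting each buyer to an arc before rounding. Your two-stage ``commit then round'' idea would need to explain why committing an arc per buyer per cycle is consistent with a globally near-optimal price vector; the paper sidesteps this entirely.
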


Our approximation algorithm utilizes a similar framework as the algorithm by Iftah Gamzu and Danny Segev \cite{Gamzu10} for the classical tollbooth problem (on trees). However, various parts of the algorithmic construction are carefully adapted to handle cycles and the freedom of clients to choose one of two routes on each cycle they have on their way.

To the best of our knowledge, this is the first such result for graphs more general than trees, and hence the first one not restricted to the single-minded bidder case.

\subsection{Model and preliminaries}

Let us consider an instance of the tollbooth problem on cactus graphs with $m$ goods and $n$ buyers.
Its description consists of a simple graph $G$ with $m$ edges such that no edge lies on two simple cycles and a set $B$ of buyers.
Each buyer $i \in B$ is described by a pair of vertices $u_i$ and $v_i$, and her budget $b_i > 0$.
For each subset of edges $S$, her valuation is defined in the following way:
$$ f_i(S) = \begin{cases} b_i, & \text{if } S \text{ consists of edges along a } u_i\text{-}v_i \text{ path} \\
						  0, & \text{otherwise} \end{cases} $$	  
A solution is a real vector $p$ assigning non-negative prices to the edges of $G$.
Let us treat the prices as lengths of edges and let $d_i$ denote the distance between $v_i$ and $u_i$.
If $b_i \geq d_i$, $i$-th buyer purchases all edges along a shortest $u_i$-$v_i$ path.
Otherwise, she buys nothing. 
Such an allocation is envy-free.
Note that if there are many shortest $u_i$-$v_i$ paths, choosing either one does not change the revenue.

In this work we present an algorithm for finding such prices that the above-mentioned way of allocating goods to buyers results in revenue $\OO{\frac{\log m}{\log \log m}}$ times smaller than optimal.

\subsection{Overview of techniques}

Our algorithm follows the classify-and-select paradigm of Gamzu and Segev. Buyers are split into $\OO{\frac{\log m}{\log \log m}}$ subsets, which define separate instances of the problem.
Those subproblems are processed independently and constant factor approximations for each of them are computed.
The final solution is obtained by choosing the one yielding the biggest revenue.
Supply of the goods is unlimited and, thus, the revenue of a solution to a subproblem does not decrease when applied to the initial instance with all the buyers.
This gives an $\OO{\frac{\log m}{\log \log m}}$~approximation of revenue, because the total revenue is at most the sum of the revenues of the subproblems.

For each subproblem the algorithm constructs a subgraph of $G$, called the skeleton, such that all paths desired by a buyer in the given instance enter and leave the skeleton exactly once and in the same vertices.
This way, each such path is split into three parts, one of which is in the skeleton and the other two are not.
Note that, for a constant factor approximation, it suffices to collect revenue either only on the skeleton or only outside the skeleton.
In the former case the revenue is achieved by setting appropriate prices of the skeleton's edges.
In the latter it suffices to focus on groups entering or leaving the skeleton through the same vertex leading to rooted instances.
Due to cycles in the underlying graph, several significant challenges arise in both subproblems.

\textbf{Rooted instances: }
The algorithm by Gamzu and Segev solved rooted instances with a black-box dynamic programming algorithm from \cite{Guruswami05}.
In our case the subgraphs forming rooted instances can contain cycles.
Thus, that dynamic programming, which has been designed for trees, could not be applied verbatim.
In order to define its subproblems, we have generalized the notion of a subtree using the tree-like structure of biconnected components.
Another key element of our solution is a technique which effectively transforms cycles into paths.
It is based on the observation, that, as far as shortest paths from vertices to the root are concerned, one edge of a cycle is always redundant.
By guessing this edge, one can tackle the problem on a cycle as if it was a path.
For an optimization problem it is enough to iterate over all possible choices of this edge and calculate the solutions independently using the dynamic programming for a tree.
We believe that this technique has a wide range of applications in generalizing algorithms for trees to cyclic graphs whose biconnected components have simple structure.
Its usage for our problem is described in Section \ref{subsect:rootedCase}.

\textbf{Dependent subgraphs of the skeleton: }
The classification of buyers is based on a recursive decomposition of the input graph.
The algorithm for the original tollbooth problem in each step splits the tree into several connected subgraphs and processes buyers who wish to buy paths with endpoints belonging to different ones.
Those connected subgraphs can be then processed completely independently because all paths relevant to the next instances are fully contained in the individual subgraphs.
This is not the case in cactus graphs, namely for subgraphs which contain edges lying on the same cycle.
As the cycles can be arbitrarily long, our algorithm may need to divide them when splitting the skeleton into smaller parts.
It turns out that the dependencies between resulting subgraphs regard the cost of paths in the cycle shared between them.
By making assumptions about their costs, the algorithm can isolate the subgraphs and process them independently.
This approach, however, results in multiple solutions for each subgraph based on different assumed costs of individual parts of shared cycles.
While merging those solutions into an approximately optimal global one, the procedure from Section \ref{subsubsect:open} controls the cost of each cycle using dynamic programming inspired by the knapsack problem.
In order to make this possible, we have extended the price rounding techniques, which allowed to relax the assumptions about the costs of shared cycles and to calculate approximate revenue.
Those techniques are described in Section \ref{subsect:pricingStrategies}.

\textbf{Decomposing the graph: }
In the original tollbooth problem it was sufficient to split each subtree at a given level of decomposition into connected subtrees of the right size, which formed the next step of the decomposition.
Our solution for handling the dependencies between subgraphs of the skeleton has been made possible by additional properties ensured by the decomposition.
For example, we ensure that two subgraphs forming the decomposition can share at most one cycle.
Another way of limiting the dependencies between subgraphs, to which $G$ is split, is limiting the number of vertices each subgraph shares with the other ones.
The decomposition used by our algorithm is characterized in full detail by Lemma \ref{lemma:decomposition}.

\textbf{Pricing the segments: }
Segments are edge-disjoint subgraphs of the skeleton.
In each of them there are two vertices, called endpoints, and only they can be shared with other segments.
Thus, one can think of them as generalization of edges.
In one of the subproblems the algorithm fixes the lengths of whole segments and sets the prices of the edges inside a given segment so that the revenue from selling the paths starting inside and ending outside it is maximized.
In the original tollbooth problem the buyers are single-minded, so for each such path the endpoint through which it will leave the segment is fixed.
For a cactus graph it is not the case, a buyer may choose a path passing though any of the two endpoints depending on the prices.

We handle that additional complexity in the following way.
With fixed lengths of segments, it is possible to calculate the maximal amount of money a buyer is able to spend on edges inside the segment where her path begins.
Our algorithm uses this to split the buyers into two categories.
Each buyer who cannot afford to pay half of the segment's length, for fixed prices of its edges, can only purchase paths passing through a single fixed endpoint.
As for the remaining buyers, the vendor can charge half of the segment's cost for edges incident to any of the endpoints and they will be able to pay this much.
The procedure based on this idea is described in Section \ref{subsubsect:acyclicSegments}.

\section{Graph decomposition}
\label{sect:decomp}

Using a recursive decomposition of the cactus graph our algorithm splits the buyer set $B$ into disjoint subsets, which are later processed independently.
Here we define this partition.

\subsection{The tree of biconnected components}
\label{subsect:componentTree}

We begin by discussing the structure of biconnected components of the cactus graph.
Let us fix an arbitrary vertex of $G$, denoted $r_G$, as the root of $G$ for the duration of the whole algorithm.
A~vertex or edge is said to be above another one if it is closer to $r_G$.
Every biconnected component of a cactus is either a single edge or a simple cycle. 
Thus, it has a single \term{topmost vertex} and at most two \term{topmost edges}, which are exactly those adjacent to the topmost vertex.

\begin{definition}
\label{def:associated}
For each cycle in $G$, its two edges closest to $r_G$, i.e. topmost edges, form a \term{pair of associated edges}.
Note that every edge belongs to at most one such pair.
\end{definition}

Although each vertex can be a topmost vertex in arbitrarily many biconnected components, it can belong to at most one without being its topmost vertex.
Furthermore, all vertices except for the root belong to exactly one such biconnected component.
Let us call it the main component of this vertex.
For the root it is a special component, consisting only of itself (a single vertex).
Our algorithm uses the tree of biconnected components rooted in this special component.
Every other component is a child of the main component of its topmost vertex.
Note that such a tree is unique.

\begin{definition}
A \term{subtree graph of a component} $C$ is the graph consisting of all the edges and vertices belonging to any descendant of $C$ (inclusive) in the tree of biconnected components.
The subtree graph of the root component is the whole graph $G$.
\end{definition}

\begin{SCfigure}
\centering
\includegraphics[width=0.55\textwidth]{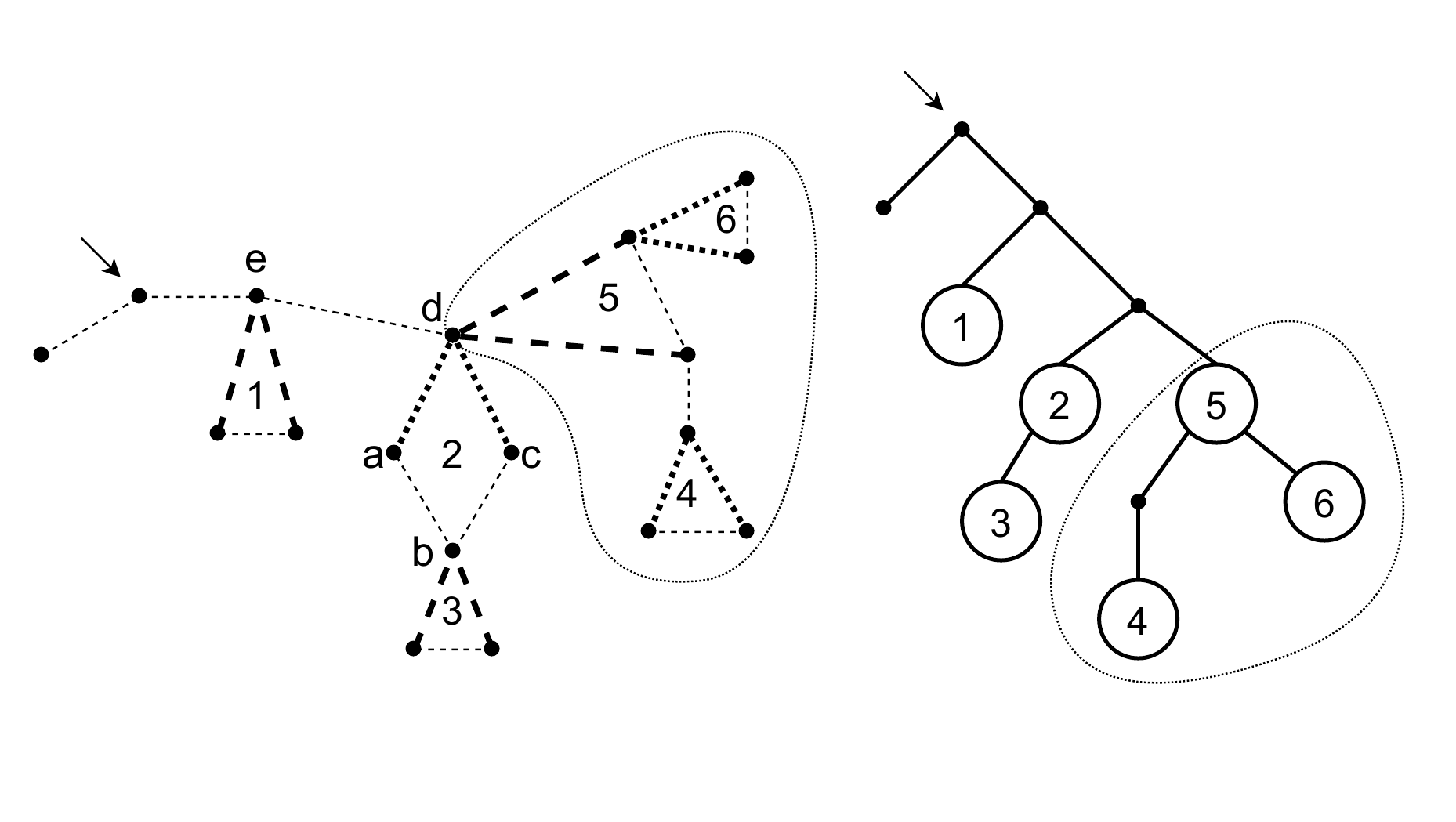}
\caption{
An example cactus graph with marked pairs of associated edges and its tree of biconnected components.
The arrows indicate respectively the root vertex and the root component.
On both drawings the subtree graph of the cycle number 5 is marked out.
Vertices $a$, $b$, $c$ and $d$ belong to beconnected component $2$.
It's also a main component of $a$, $b$ and $c$.
However, the main component of $d$ is the edge $e$-$d$.
Note that $e$ belongs to three distinct biconnected components and it's main component is the edge between $e$ and the root vertex.
}

\end{SCfigure}

\subsection{Balanced decomposition}
\label{subsect:decomposition}

Decompositions $D_1, D_2, \dots D_L$ of $G$ are defined recursively.
Each of them is a family of edge-disjoint subgraphs, called fragments, which cover the graph $G$.
In $D_1$ the whole graph $G$ forms a single fragment, and in $D_L$ each fragment consists of at most two edges.
For all $j < L$ each fragment in partition $D_j$ is split into a number of subgraphs, which become fragments in $D_{j+1}$.

\begin{definition}
A vertex which belongs to multiple fragments in partition $D_{j+1}$ is called a \term{border vertex} of $j$-th level.
Furthermore, every vertex of $G$ is considered to be a \text{border vertex} of $L$-th level.
\end{definition}

Note that a border vertex of $j$-th level is also a border vertex of $(j+1)$-th level.

\begin{lemma}
\label{lemma:decomposition}
Consider a family of decompositions $D_1, D_2, \dots D_L$ of a cactus graph $G$ satisfying the following invariants for each valid $j$:
\begin{enumerate}[nosep]
\item Each fragment in $D_j$ is split into $\OO{k}$ fragments in $D_{j+1}$.
\item The maximal number of edges in a fragment forming $D_{j+1}$ is $\OMG{k}$ times smaller than in $D_j$.
\item Each fragment forming $D_j$ contains at most $\OO{k}$ border vertices of $j$-th level.
\item Each pair of associated edges belongs to the same fragment of $D_j$.
\item All fragments forming $D_j$ are connected subgraphs of $G$.
\end{enumerate}
For $k$ being an unbounded and nondecreasing function of $m$ (the number of edges in $G$), such a family can be found in polynomial time.
\end{lemma}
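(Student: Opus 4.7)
We construct the decompositions recursively: set $D_1 = \{G\}$ and obtain $D_{j+1}$ from $D_j$ by independently refining each fragment $F \in D_j$ via a subroutine that produces $\OO{k}$ edge-disjoint, connected subfragments of $F$, each of at most $n_F/k$ edges (where $n_F = |E(F)|$), adding $\OO{k}$ new border vertices and keeping every associated pair inside a single subfragment. The recursion halts once every fragment has at most two edges, after $L = \OO{\log m / \log k}$ levels. Invariants 1, 2, and 5 will follow directly from the subroutine; invariants 3 and 4 are the ones requiring care.

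\textbf{Refining one fragment.} Given $F$, we consider its tree of biconnected components $T_F$ (as in Section~\ref{subsect:componentTree} but restricted to $F$), with each node weighted by the number of edges in the corresponding block. A block of $F$ is \emph{heavy} if it has more than $n_F/k$ edges; any heavy block must be a cycle. For each heavy cycle $C$ with topmost vertex $v$ and topmost edges $e_1, e_2$, we freeze $\{v, e_1, e_2\}$ as one atomic piece $C_{\top}$, and chop the non-topmost arc of $C$ (a simple path between the far endpoints of $e_1$ and $e_2$) into consecutive sub-paths of at most $n_F/k$ edges. Since the total length of heavy cycles is at most $n_F$, this produces $\OO{k}$ new pieces in total. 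The resulting family of pieces is organized as a refined block tree $T'_F$: the single cyclic redundancy introduced around each heavy cycle (because $C_{\top}$ shares a vertex with each of the two outer arc-pieces) is broken by recording one of those two attachments as a border vertex only, not as a tree adjacency in $T'_F$.

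\textbf{Balanced tree partition.} We then apply the classical balanced weighted-tree partition to $T'_F$: a weighted tree of total weight $N$ with per-node weights bounded by $W$ can be split into $\OO{N/W}$ connected subtrees of weight $\OO{W}$ by cutting $\OO{N/W}$ tree edges. With $W = n_F/k$, this yields $\OO{k}$ connected subfragments of $\OO{n_F/k}$ edges each. Every tree cut contributes one new border vertex; combined with the $\OO{k}$ internal chops inside heavy cycles, this gives invariant 3. Invariant 4 is preserved because every associated pair lies either inside a single light block or inside the single atomic piece $C_{\top}$; invariants 1 and 2 come from the size bounds; invariant 5 follows from the connectivity of each subtree of $T'_F$ glued along the shared vertices of consecutive nodes.

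\textbf{Main obstacle.} The crux is simultaneously satisfying invariants 3 and 4 on long cycles: the two topmost edges of a cycle cannot be separated, yet a heavy cycle may contain many more than $n_F/k$ edges. Freezing $\{v, e_1, e_2\}$ as one indivisible piece while chopping the rest of the cycle freely is the device that reconciles the two constraints. A secondary subtlety, which does not arise in the tree case of Gamzu and Segev, is that the refined block structure is not strictly a tree around a heavy cycle — the frozen piece $C_{\top}$ shares vertices with both endpoints of the arc chain — and we handle this by retaining only one of the two adjacencies as a tree edge in $T'_F$, the other being recorded as a shared vertex between the two corresponding subfragments in the final output. All steps — building the block tree, chopping heavy cycles, and the balanced tree partition — run in polynomial time, and with $\OO{\log m}$ levels of recursion the whole algorithm is polynomial in $m$, yielding the claimed family.
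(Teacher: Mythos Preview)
Your construction has two genuine gaps, one per non-trivial invariant.

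\textbf{Invariant 4 (associated pairs).} You claim that ``every associated pair lies either inside a single light block or inside the single atomic piece $C_\top$''. This is false once the recursion is past the first level. Associated pairs are defined globally with respect to $r_G$: they are the two topmost edges of a cycle $C$ of $G$. If at some earlier level the cycle $C$ was cut (which your own heavy-cycle chopping does), then in the current fragment $F$ the two associated edges are still present (by the inductive invariant~4) but they no longer lie on a cycle of $F$ --- they are just two edges sharing a vertex, sitting in two \emph{different} blocks of $T_F$. Your balanced tree partition can then separate them. The paper handles exactly this case: it removes one edge from each associated pair that still forms a cycle in $F$, roots the resulting tree at a globally-topmost vertex so that every remaining associated pair shares its upper endpoint, and explicitly merges any two subparts that contain associated edges (step~2(b) of the greedy bottom-up partition).

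\textbf{Invariant 3 (border vertices).} You only count the $\OO{k}$ \emph{new} border vertices created when splitting a single fragment $F$; you do not control the border vertices of level $j-1$ that $F$ already carries on its boundary, and these are inherited by the pieces of $F$. Your balanced tree partition equalises edge counts, not border-vertex counts, so nothing prevents all $\Theta(k)$ old border vertices of $F$ from landing in one piece. Iterating, one piece can accumulate $\Theta(k\cdot L) = \Theta(k\log_k m)$ border vertices, which is not $\OO{k}$. The paper fixes this with a second, separate phase: after the size-balancing split, each subpart is split once more (into at most three pieces) around a ``pivot component'' chosen so that no piece retains more than roughly $2b/3$ of the old border vertices; an induction (their Lemma on the $18k$ bound) then closes the argument. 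Your plan contains no analogue of this phase.
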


\begin{wrapfigure}[18]{r}{0.4\textwidth}
\centering
\vspace{-1.15cm}
\includegraphics[width=0.4\textwidth]{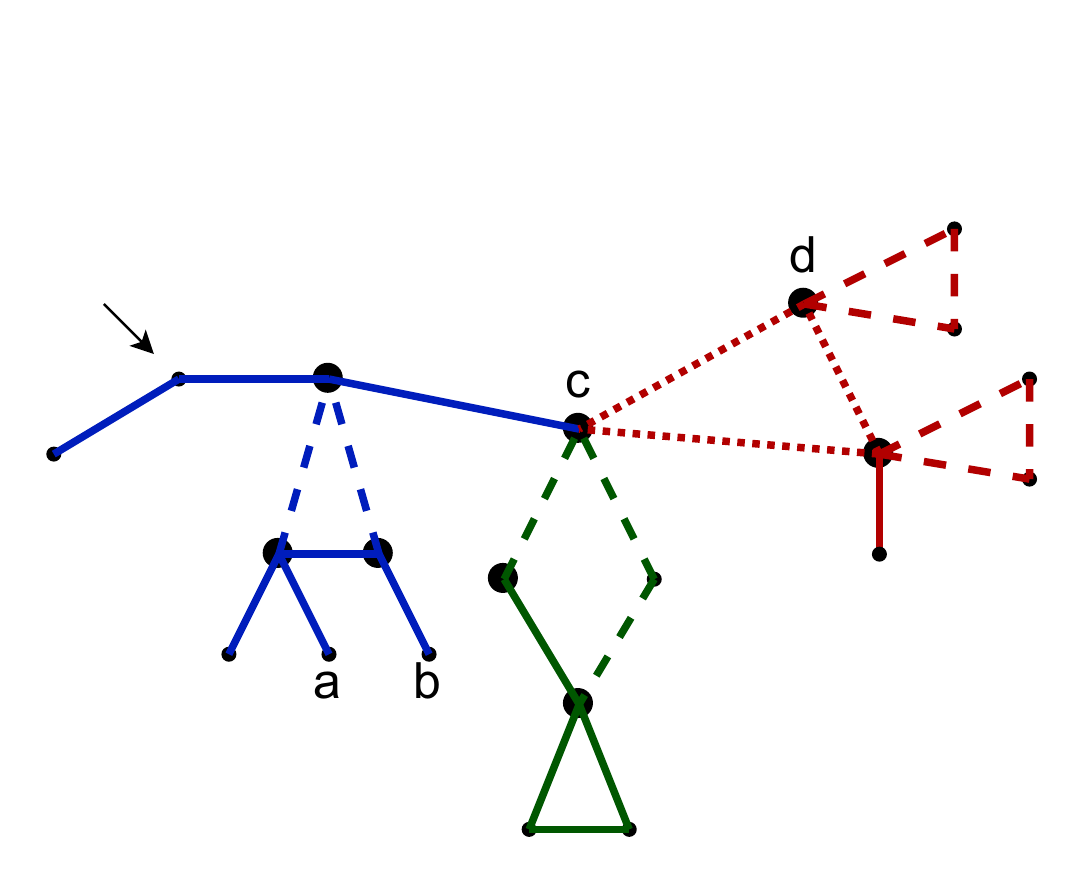}

\caption{
Two levels of a recursive decomposition satisfying Lemma \ref{lemma:decomposition}
.
Fragments from $D_j$ are marked with different colors and from $D_{j+1}$ with line styles.
The arrow indicates $r_G$, the root of~$G$.
Border vertices of $j$-th level are highlighted.
A buyer wishing to purchase an $a$-$b$ path is not assigned to $j$-th level, but a buyer interested in $a$-$c$ paths is.
Vertices $c$ and $d$ are connected at all levels of decomposition, so corresponding customers would be processed at its last level.
}
\label{fig:decomposition}
\end{wrapfigure}

The second invariant ensures that the number of levels is bounded by $\OO{ \log_k m}$.
By fixing $ k = \ceil{\log ^{\frac{1}{2}} m} $ we achieve an $\OO{ \frac{\log m}{ \log \log m}}$ bound on $L$.
We choose this value because some parts of the algorithm are exponential in $k$.

$D_{j+1}$ is a refinement of $D_j$ obtained by a two-phase procedure.
In the first step each fragment is split into subparts of balanced size.
The second phase refines this division in order to balance the number of border vertices in each resulting fragment.
Precise description of this process and the proof of Lemma \ref{lemma:decomposition} can be found in Appendix~\ref{subsect:decomp_apx}.


\subsection{Classification of buyers}

A pair of vertices $u$ and $v$ is said to be connected in a decomposition $D_j$ if there exists a path from $u$ to $v$ fully contained in a single fragment from $D_j$.
Buyer $i \in B$ will be processed at the last level where $u_i$ and $v_i$ are connected.
This way every buyer is assigned to a single level of decomposition.

\begin{remark}
	\label{rem:borderPass}
	If $j$ is the last level at which vertices $u$ and $v$ are connected, every $u$-$v$ path in the whole graph contains a border vertex on $j$-th level.
	\end{remark}


\section{Algorithm for a single decomposition}
\label{sect:algo}

In the previous section buyers have been divided into subsets by a recursive graph decomposition.
Now we focus on a single ($j$-th) level of decomposition.
By exploiting its properties our algorithm constructs prices which achieve a constant factor approximation of revenue with respect to buyers assigned to this level (denoted $B_j$).


The main idea behind the algorithm for a single decomposition is to split the paths desired by buyers into smaller sections and handle them separately.
In the following we define a partitioning of those paths and discuss that it suffices to be able to solve the natural two subcases. 

\subsection{The skeleton}
\label{subsect:skeleton}

\begin{definition}
\term{Skeleton on $j$-th level}, denoted $\mname{SK}_j$, is a minimal subgraph of $G$ containing all simple paths between border vertices of $j$-th level.
Equivalently, an edge belongs to the skeleton, i.e. is a \term{skeleton edge}, if and only if a simple path connecting two border vertices passes through it.
A vertex adjacent to a skeleton edge is a \term{skeleton vertex}.
\end{definition}

\begin{definition}
\label{def:nonSkeletonComponent}
A \term{non-skeleton component on $j$-th level} is a maximal connected subgraph of a fragment from $D_{j+1}$ containing no edges from $\mname{SK}_j$.
\end{definition}

Note that, by the definition of a border vertex, $\mname{SK}_{j+1}$ is always a superset of $\mname{SK}_j$ and $\mname{SK}_L = G$.
The following lemma allows for a clear distinction between the paths inside the skeleton and outside it.
The proof can be found in Appendix~\ref{sect:skeletonPath}.

\begin{lemma}
\label{lemma:skeletonPath}
Every simple path connecting two skeleton vertices passes only though skeleton edges. 
\end{lemma}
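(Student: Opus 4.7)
Proof proposal. The plan is to translate the lemma into a statement about the block-cut tree $T$ of the cactus $G$. The nodes of $T$ are the biconnected components (each of which is a bridge or a simple cycle) together with the cut vertices, and a cut vertex $c$ is joined in $T$ to every block containing it. For each border vertex $s$ of level $j$, let $\hat{s}$ denote $s$ itself if $s$ is a cut vertex and otherwise the unique block containing $s$; let $T_S$ be the minimal subtree of $T$ containing all the nodes $\hat{s}$. The first step is to check that the skeleton blocks---blocks containing at least one skeleton edge---are precisely the block-nodes of $T_S$, and moreover that \emph{every} edge of such a block is a skeleton edge. For bridges this is immediate. For a cycle $C$ lying on a $T$-path between two border vertices, the points at which that $T$-path enters and leaves $C$ split it into two arcs whose union covers all of $C$, so every edge of $C$ lies on some simple path between those two border vertices.

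The second step uses the standard dictionary between simple paths in a cactus and paths in its block-cut tree: a simple $u$-$v$ path $P$ in $G$ visits exactly the blocks appearing as block-nodes on the unique $T$-path from $\hat{u}$ to $\hat{v}$. So the lemma reduces to showing that, when $u$ and $v$ are skeleton vertices, every block-node on this $T$-path lies in $V(T_S)$. If $u$ is not a cut vertex then $\hat{u}$ is the only block containing $u$, and by assumption it must be a skeleton block, so $\hat{u} \in V(T_S)$. The analogous statement holds for $v$, and since $T_S$ is a subtree the $T$-path between $\hat{u}$ and $\hat{v}$ lies entirely in $V(T_S)$.

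The delicate case, which I expect to be the main obstacle, is when $u$ is a cut vertex with $\hat{u} = u \notin V(T_S)$. I would first observe that $u$ is adjacent in $T$ to exactly one skeleton block $B^*$: if two distinct skeleton blocks both contained $u$, the $T$-path between them would pass through $u$ and, by convexity of the subtree $T_S$, would force $u \in V(T_S)$, a contradiction. Hence $B^*$ is the unique gateway from the connected component of $T \setminus V(T_S)$ containing $u$ into $V(T_S)$. I would then argue that $\hat{v}$ cannot lie in this same component: otherwise $v$ would also be a cut vertex whose only skeleton-block neighbor in $T$ is $B^*$, and then $T$ would contain two distinct $u$-$v$ paths---one via $B^*$ and one within the component---contradicting that $T$ is a tree. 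Therefore the $T$-path from $u$ to $\hat{v}$ must leave $u$'s component through $B^*$, travel inside $V(T_S)$, and at most once more cross a symmetric gateway on $v$'s side. In every scenario all block-nodes on the $T$-path lie in $V(T_S)$, so every edge of $P$ is a skeleton edge, completing the proof.
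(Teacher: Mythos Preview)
Your proof is correct and takes a genuinely different route from the paper's. The paper argues by direct contradiction: it takes a maximal non-skeleton segment of an offending path, observes that its endpoints $u,v$ are joined by a skeleton path (forming a cycle), picks a border-to-border path through a skeleton edge on that cycle, and splices in the non-skeleton $u$-$v$ arc to build an explicit simple path between border vertices containing the allegedly non-skeleton edges. You instead give a structural characterisation via the block--cut tree: the skeleton is exactly the union of blocks in the subtree $T_S$, and any simple path between skeleton vertices visits only block-nodes of $T_S$. Your approach yields more: it shows the skeleton is a union of whole biconnected components, which is a useful picture in its own right. The paper's argument is shorter and needs no block--cut machinery.

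One wobble worth tightening: in your ``delicate case'' the two-paths argument is slightly tangled. You assert that if $\hat v$ lies in the same component $K$ of $T\setminus V(T_S)$ as $u$, then $v$'s unique skeleton-block neighbour is also $B^*$, and then derive two $u$-$v$ paths in $T$. But the first assertion already forces $v=u$: since $T$ is a tree and $T_S$ is connected, there is exactly one edge between $K$ and $V(T_S)$, namely $u\text{--}B^*$; hence $u$ is the \emph{only} node of $K$ adjacent to $V(T_S)$. Because $v$, being a skeleton cut vertex, must also be adjacent in $T$ to some skeleton block in $V(T_S)$, you get $v=u$ immediately. This replaces the two-paths step and makes the case go through cleanly.
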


\begin{corollary}
\label{coll:skeletonRepresentative}
Each non-skeleton component contains exactly one skeleton vertex.
\end{corollary}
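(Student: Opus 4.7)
The plan is to prove uniqueness and existence of a skeleton vertex in any non-skeleton component $H$ separately.

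Uniqueness will follow almost immediately from Lemma \ref{lemma:skeletonPath}. If two distinct skeleton vertices $u$ and $v$ lay in the same non-skeleton component $H$, then the connectedness of $H$ would yield a simple $u$-$v$ path whose edges all belong to $H$, and are therefore non-skeleton by the definition of $H$. This would contradict Lemma \ref{lemma:skeletonPath}, which forces every simple path between two skeleton vertices to consist entirely of skeleton edges.

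For existence, I would fix the fragment $F \in D_{j+1}$ containing $H$ and branch on whether $H$ spans all of $F$. If $H \neq F$, then by invariant 5 of Lemma \ref{lemma:decomposition} the fragment $F$ is connected, so there must exist an edge $e \in F \setminus H$ incident to some vertex $v \in H$. Maximality of $H$ forbids extending it by $e$ while staying a connected no-skeleton-edges subgraph of $F$, so $e$ must belong to $\mname{SK}_j$; then $v$ is adjacent to a skeleton edge and is therefore a skeleton vertex inside $H$. If instead $H = F$, then $F$ itself contains no skeleton edge at all, so I must locate a skeleton vertex externally: $F$ shares at least one vertex with another fragment of $D_{j+1}$, and any such shared vertex $v$ is a border vertex of level $j$ belonging to $H = F$; since $G$ is connected and other border vertices of level $j$ exist, $v$ lies on a simple path in $G$ to some other border vertex, whose edges all belong to $\mname{SK}_j$ by definition, and whose first edge witnesses that $v$ is a skeleton vertex.

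The main obstacle, though a minor one, is the subcase $H = F$: it is the only point in the argument where maximality of $H$ inside $F$ is not enough and one has to step outside $F$ to invoke the global structure of the decomposition, using that every fragment shares vertices with some other fragment and that every border vertex of level $j$ is itself a skeleton vertex. Once these two easy consequences of the definitions and invariant 5 are in hand, the corollary follows by combining uniqueness and existence.
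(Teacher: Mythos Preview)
Your proof is correct and matches the paper's intent: the paper presents the statement as an immediate corollary of Lemma~\ref{lemma:skeletonPath} without a written proof, and your uniqueness argument is precisely that deduction. Your explicit existence argument (the $H\neq F$ versus $H=F$ split) fills in a step the paper leaves tacit, under the same non-degeneracy assumption on the number of border vertices that the paper itself adopts (see the opening sentence of the proof of Lemma~\ref{lemma:skeletonPath} in Appendix~\ref{sect:skeletonPath}).
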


\begin{definition}
\label{def:skeletonRepresentative}
Let us define a \term{skeleton representative} of a vertex $v$ on $j$-th level denoted by $\repr_j(v)$.
If $v$ is a skeleton vertex in $D_j$, then $\repr_j(v) = v$.
Otherwise, the representative of $v$ is the unique skeleton vertex in the non-skeleton component on $j$-th level containing $v$.
\end{definition}

\begin{wrapfigure}[17]{r}{0.35\textwidth}
\vspace{-0.7cm}
\includegraphics[width=0.35\textwidth]{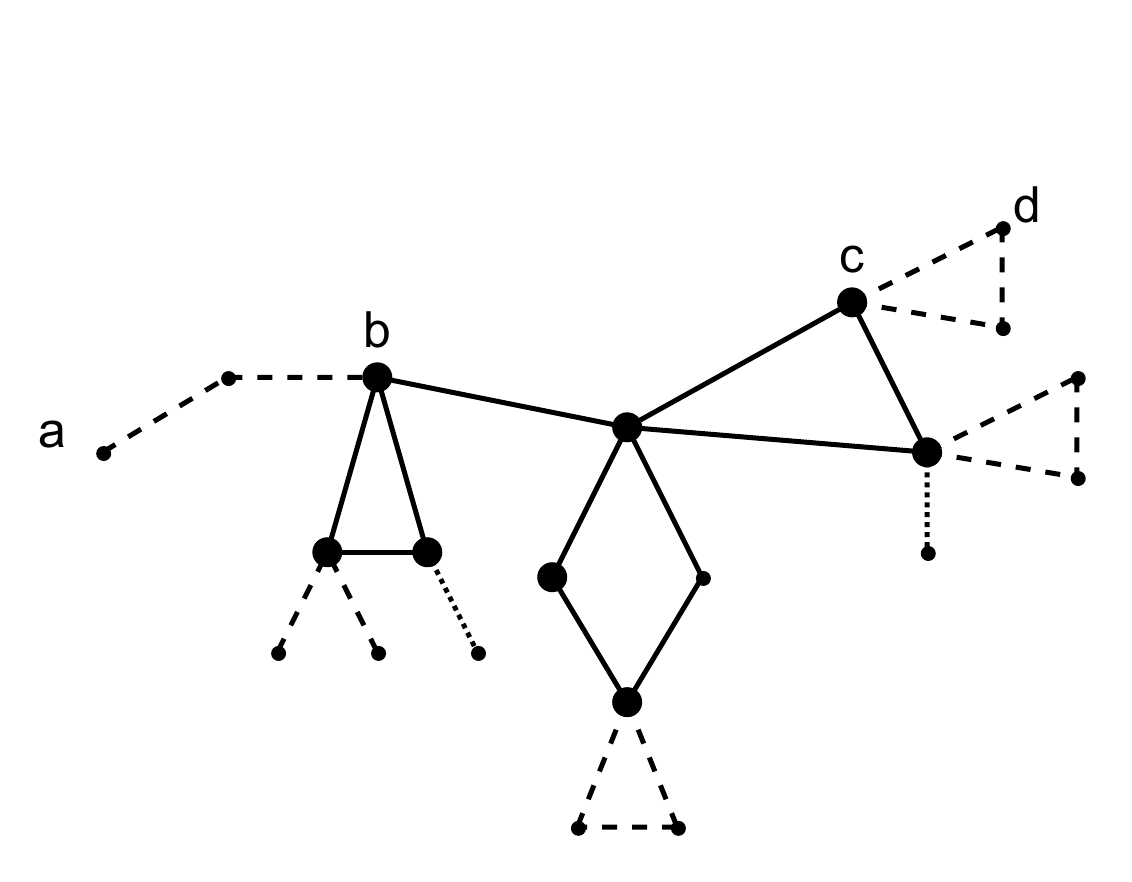}

\caption{The cactus from Figure \ref{fig:decomposition} with highlighted border vertices and the skeleton on $j$-th level.
Each connected group of edges dotted in a same style forms a single non-skeleton component.
Each $a$-$d$ path is split into a skeleton section (a $b$-$c$ path) and two non-skeleton sections: $a$-$b$ and $c$-$d$ paths.}

\end{wrapfigure}

Consider a buyer $i$ from $B_j$ wishing to buy the cheapest $u_i$-$v_i$ path.
Recall from Section \ref{sect:decomp} that each $u_i$-$v_i$ path contains at least one border vertex of $j$-th level.
By Corollary \ref{coll:skeletonRepresentative}, each path from $u_i$ to $v_i$ contains vertices $u_i$, $\repr_j(u_i)$, $\repr_j(v_i)$, $v_i$.
Although some of those four vertices may be equal, they are guaranteed to appear in this order.
This allows us to split every such path into three parts (some of which may be empty):
\begin{itemize}[nosep]
\item First \term{non-skeleton section} -- a simple path from $u_i$ to $\repr_j(u_i)$, which contains no skeleton edges.
\item A \term{skeleton section} -- a simple path from $\repr_j(u_i)$ to $\repr_j(v_i)$.
By Lemma \ref{lemma:skeletonPath}, it consists only of skeleton edges.
\item Second \term{non-skeleton section} -- a simple path from $\repr_j(v_i)$ to $v_i$.
Similarly to the first one, it does not contain skeleton edges.
\end{itemize}
\noindent Note that the endpoints of individual sections do not depend on the choice of the particular $u_i$-$v_i$ path.
Our algorithm uses this property to handle both kinds of sections individually.

\subsection{Splitting the graph into two independent subproblems}

The algorithm handles two subproblems: pricing the non-skeleton and skeleton edges to maximize revenue generated by respectively non-skeleton and skeleton sections.

\textbf{The skeleton subproblem:}
Consider a buyer $i \in B_j$ wishing to purchase the cheapest $u_i$-$v_i$ path.
In this subproblem she buys a cheapest $\repr_j(u_i)$-$\repr_j(v_i)$ path as long as its cost is at most $b_i$ (her original budget).
This situation achieved by setting the price of all non-skeleton edges to zero.

\textbf{The non-skeleton subproblem:}
In this case, we set the prices of all the skeleton edges to zero.
Each buyer $i \in B_j$ will purchase the cheapest paths from $u_i$ to $\repr_j(u_i)$ and from $v_i$ to $\repr_j(v_i)$ if their total cost does not exceed $b_i$. 

Let us introduce additional notation:
\begin{itemize}[nosep]
\item Let $\mname{OPT}_j$ be the maximal revenue obtained by any price vector and envy-free assignment of paths to the buyers from $B_j$.
\item Let $\mname{SKOPT}_j$ and $\mname{NSKOPT}_j$ be the maximal revenues for the skeleton and non-skeleton subproblem respectively.
\end{itemize}

Note that any envy-free solution for the whole graph immediately yields envy-free solutions for both subproblems.
Thus, $\mname{SKOPT}_j + \mname{NSKOPT}_j \geq \mname{OPT}_j$.
The algorithm solves both subproblems independently.
Then, the computed solutions are compared and the one with greater revenue is chosen.
Sections \ref{sect:skeleton} and \ref{sect:nonSkeleton} describe polynomial time approximation algorithms for the skeleton and non-skeleton subproblem respectively.

\section{Non-skeleton edges}
\label{sect:nonSkeleton}
\newcommand{\CrevToOpt}{4}

This section describes an algorithm for solving the non-skeleton subproblem on $j$-th level, that is pricing the non-skeleton edges and maximizing revenue generated by non-skeleton sections of paths allocated to buyers from $B_j$.
Prices of all edges in the skeleton on $j$-th level are set to zero.
On the last level of decomposition the skeleton contains the whole graph $G$.
Thus, we assume that $j < L$.
The algorithm presented here finds prices generating at least $\frac{\mname{NSKOPT}_j}{\CrevToOpt}$ revenue.

\subsection{The rooted case}
\label{subsect:rootedCase}

Before describing the method for pricing non-skeleton edges, let us discuss an easier problem, solution to which is a subprocedure used by the final algorithm.

\begin{definition}
\label{def:rootedInstance}
Consider an instance of the tollbooth problem on cactus graphs defined by a cactus $H$ and a set of buyers $B_H$.
We will say that it is a \term{rooted instance} if there exists a vertex in $H$, called root, which is an endpoint of every path desired by the buyers.
\end{definition}

\begin{definition}
Consider a buyer $i \in B_H$ in a rooted instance, who wishes to purchase a cheapest $u_i$-$v_i$ path.
Her \term{destination vertex} is the one of vertices $u_i$ and $v_i$ which is not the root.
\end{definition}

\begin{lemma}
\label{lemma:rootedCase}
Any rooted instance of the tollbooth problem on cactus graphs can be solved in polynomial time.
It is also true if the problem admits only those price assignments, under which the distances from the root to some vertices are equal to arbitrarily fixed constants. 
\end{lemma}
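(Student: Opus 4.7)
The plan is to lift the polynomial-time dynamic programming of Guruswami et al.\ for the tree (tollbooth) case to a cactus. Two ingredients are needed beyond the tree version: a recursion that follows the tree of biconnected components introduced in Section \ref{subsect:componentTree}, and a device for turning cycles into paths by \emph{guessing a redundant edge}.

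First I root $H$ at the given root $r$ and build the tree $T_H$ of biconnected components of $H$, so every path desired by a buyer runs from $r$ downward to her destination vertex. The DP will be indexed by pairs $(C,\pi)$, where $C$ is a biconnected component of $H$ and $\pi$ is a candidate price of the parent edge of $C$ (equivalently, a candidate distance from the topmost vertex $t_C$ of $C$ to $r$). As in the tree case, the set of candidate prices that need ever be tried is polynomial: it is enough to consider sums of buyer budgets restricted to buyers whose destinations lie in the subtree graph of $C$, which is a standard argument in envy-free pricing. For each $(C,\pi)$ the DP stores the maximum revenue collectable from buyers whose destination lies in the subtree graph of $C$, assuming the distance from $t_C$ to $r$ equals~$\pi$. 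For an edge component this is exactly the Guruswami et al.\ transition applied to the single child subtree graph; the new content is only in the cycle components.

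To handle a cycle component $C$ with topmost vertex $t_C$, I use the key observation that in any price assignment, as one walks from $t_C$ around the cycle, the cheapest way back to $r$ from each cycle vertex uses one of the two arcs of $C$, and there is a single \textbf{redundant edge} $e^{\star}\in C$ which is not used by any shortest $v$-$r$ path for any $v$ in the subtree graph of $C$; the algorithm guesses this edge by iterating over the $O(|C|)$ possibilities. Once $e^{\star}$ is removed, the cycle becomes a path attached to $t_C$ on both sides, and together with the subtree graphs of the other components hanging off its vertices it is structurally a tree rooted at $t_C$. I then apply the tree-style DP step: for each vertex $v\in C$ different from $t_C$, its distance to $t_C$ along the retained arc is a sum of edge prices along a fixed path, and for each candidate value of this distance I combine the stored values of its child components (by Guruswami et al.\ combining rule) with the revenue from the buyers whose destinations lie on $C$ itself. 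The correctness of the guess is immediate: for the optimal price vector at least one edge of $C$ is redundant as above, so the maximum over the $O(|C|)$ guesses equals the true optimum for $C$.

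Finally, the distance constraints required by the statement are absorbed into the DP without overhead: whenever a transition would commit to a distance from $r$ to a vertex $v$ whose distance is fixed to a constant $d_v$, it is simply discarded unless the committed distance equals $d_v$. Since constraints only prune the state space, the overall running time is still polynomial, and correctness is preserved because the set of feasible price vectors for the constrained problem is exactly the set of feasible DP branches. The main obstacle, and the source of the extra work over the tree case, is the cycle bookkeeping: one has to justify (i) that a single redundant edge per cycle suffices, so the number of guesses is $O(|C|)$ per cycle rather than exponential, (ii) that after the guess the cycle together with its hanging subcactuses really does behave as a tree rooted at $t_C$ for the purposes of the standard combine step, and (iii) that the polynomial candidate-price set from the tree case carries over to cacti without blow-up, which it does because each cycle adds only its own $O(|C|)$ guesses on top of the polynomially many candidate distances already considered.
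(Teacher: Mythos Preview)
Your overall architecture matches the paper's: a DP on the tree of biconnected components, indexed by a component and the depth of its topmost vertex, with cycles handled by guessing a single redundant edge and then treating the opened cycle as a path. The paper does exactly this; the cycle step even sets the price of the guessed edge to $b_{\max}+1$ and then runs the single-edge transitions along the opened cycle, precisely the ``behaves as a tree rooted at $t_C$'' claim you make in (ii).

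The one genuine gap is your justification that the DP has polynomially many depth states. You write that it suffices to consider ``sums of buyer budgets restricted to buyers whose destinations lie in the subtree graph of $C$,'' calling this standard. As stated this is wrong: the set of subset sums of budgets has exponential size in general, so this does not by itself give a polynomial state space. The paper proves a much sharper fact (Lemma~\ref{lemma:depths}): in an optimal solution the depth of \emph{every} vertex can be taken from $\{0\}\cup\{b_i:i\in B_H\}$, i.e.\ from the individual budgets themselves, not their sums. The argument is a rounding step---take a shortest-path tree, then round each vertex's depth up to the next budget---and it is what actually makes your DP polynomial. Your point (iii) asserts the right conclusion but supplies the wrong reason, and without this lemma the running-time claim is unsupported. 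The handling of the fixed-distance constraints is likewise done in the paper via the same lemma (Corollary~\ref{coll:fixedDepths}), which adds the prescribed constants $e_v$ to the candidate set; your ``prune infeasible states'' description is compatible with that, once the correct candidate set is in place.

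A minor inaccuracy: the DP index $\pi$ is the depth of the topmost vertex of $C$, not the price of a single ``parent edge''; these are not equivalent, since the topmost vertex may be reached through a cycle in the parent component.
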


For every possible price assignment and envy-free allocation each buyer is assigned a shortest path from the root to her destination vertex as long as its cost does not exceed her budget.
Thus, presenting a polynomial algorithm for finding optimal prices is sufficient to prove the above lemma.

The algorithm is based on dynamic programming whose subproblems mimic the structure of the tree of biconnected components of $H$ rooted in $r$ -- the root from Definition \ref{def:rootedInstance}.
For each biconnected component $C$ it calculates values $dp_{C, d}$, which are defined as the maximum revenue generated by buyers whose destination vertices are in the subtree graph of $C$ (excluding its topmost vertex) under the assumption that the distance i.e., cost of a cheapest path, from $r$ to $C$ (its depth) equals $d$.
Note that the distance from $C$ to the root is in fact the distance between $r$ and the topmost vertex of $C$.
The following lemma allows us to consider only polynomially many values $d$.

\begin{lemma}
\label{lemma:depths}
For any rooted instance of the tollbooth problem on cactus graphs there exists an optimal solution, such that the distance from each vertex to the root belongs to the set $\mathcal{D}$ containing zero and buyers' budgets: 
$ \mathcal{D} = \set{0} \cup \longset{b_i}{i \in B_H}$.
\end{lemma}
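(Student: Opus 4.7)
The plan is to start from any optimal price vector $p$, with induced root-distances $d(v) := \dist_p(r, v)$, and to construct a new price vector $p'$ whose induced distances all lie in $\mathcal{D}$ without reducing the revenue. The construction is a monotone rounding of distances: define $f \colon [0, \infty) \to \mathcal{D}$ by $f(x) = \min \set{\beta \in \mathcal{D} : \beta \geq x}$ when this set is nonempty, and $f(x) = \max \mathcal{D}$ otherwise. Then set $d'(v) := f(d(v))$ and, for every edge $\{u,w\}$, $p'(u,w) := \abs{d'(u) - d'(w)}$.

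The first step is to verify that the prices $p'$ actually realize the distances $d'$, i.e.\ $\dist_{p'}(r, v) = d'(v)$ for every vertex $v$. The triangle inequality over edge-by-edge absolute differences yields $\dist_{p'}(r, v) \geq d'(v)$ immediately. For the matching upper bound I would pick any simple shortest $r$-$v$ path $P$ under $p$; since every prefix of a shortest path in a nonnegatively-weighted graph is itself a shortest path, the values $d(\cdot)$ are nondecreasing along $P$, and because $f$ is nondecreasing the values $d'(\cdot)$ are nondecreasing along $P$ as well. Therefore the cost of $P$ under $p'$ telescopes to $d'(v) - d'(r) = d'(v)$.

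The second step is to confirm that the revenue does not decrease. For each buyer $i$ with destination $v_i$ and budget $b_i$: if she was purchasing under $p$, then $d(v_i) \leq b_i$, and since $b_i$ itself lies in $\mathcal{D}$ it witnesses $f(d(v_i)) \leq b_i$, so she still purchases under $p'$ and now pays $d'(v_i) \geq d(v_i)$. If she was not purchasing under $p$, her contribution was zero and under $p'$ it is nonnegative. Summing over buyers shows that the revenue of $p'$ is at least that of $p$, so $p'$ is also optimal and, by construction, places every vertex at a distance in $\mathcal{D}$.

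I expect the main obstacle to be the feasibility check in the first step: a priori, rounding the distances on a cactus could create an inconsistency between the two alternative $r$-$v$ paths offered by a cycle. The key observation that resolves this is that $f$, being nondecreasing, preserves the monotonicity of $d$ along any shortest path, which is precisely what the telescoping argument needs; the triangle inequality then handles every other path uniformly, so no cycle-induced inconsistency can arise.
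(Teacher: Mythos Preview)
Your proof is correct and follows essentially the same approach as the paper: round each root-distance $d(v)$ up to the nearest element of $\mathcal{D}$ (capping at $b_{\max}$ when $d(v)$ exceeds all budgets), then argue that revenue can only weakly increase. The only cosmetic difference is in how the rounded distances are realized as prices: the paper fixes a shortest-path tree, sets non-tree edges to $b_{\max}$, and prices tree edges to match the monotone $d'$ values along root-to-leaf paths, whereas you set $p'(u,w)=\lvert d'(u)-d'(w)\rvert$ on every edge and use the triangle inequality together with telescoping along an original shortest path; both constructions work and the underlying idea is identical.
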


In Section \ref{sect:skeleton} the algorithm needs to find optimal prices given constraints on the distance from certain vertices to the root.
The following corollary allows for handling such cases.
Both the lemma and corollary are proven in Appendix \ref{subsect:rootedCase_apx}.

\begin{corollary}
\label{coll:fixedDepths}
Consider a rooted instance of the tollbooth problem on cactus graphs and a subset $S$ of vertices of $H$ such that for each $v \in S$ required depth $e_v$ of this vertex is given.
Prices of edges in $H$ are said to be feasible if the cost of a cheapest $r$-$v$ path equals $e_v$ for each $v \in S$.
Let us assume that there is at least one such price assignment.
Then, there exists a feasible price assignment maximizing revenue for which the distance from $r$ to each vertex $v \not\in S$ belongs to the set $\mathcal{D'}$:
$$\mathcal{D'} = \set{0} \cup \longset{b_i}{i \in B_H} \cup \longset{e_i}{i \in S}$$
\end{corollary}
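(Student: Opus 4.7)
My plan is to extend the argument behind Lemma~\ref{lemma:depths} so that it also respects the hard depth constraints imposed by $S$. I start from any feasible price assignment $p^*$ that achieves the maximum revenue among feasible solutions (which exists by the stated assumption together with compactness of the feasible polytope). I would then show that if some vertex $v\notin S$ has distance $t := d(v) \notin \mathcal{D}'$, one can modify $p^*$ so as to strictly reduce the number of off-grid vertices without decreasing revenue and without violating $d(v') = e_{v'}$ for any $v'\in S$. Iterating the step finitely many times produces the required solution.

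The central step is a one-parameter perturbation. Consider the layer $T = \{w \notin S : d(w) \ge t\}$ and shift the distance of every vertex in $T$ uniformly by a parameter $\delta$, leaving the distance of every vertex outside $T$ (in particular every vertex of $S$) unchanged. This is implemented by perturbing the prices of the ``boundary'' edges of $T$, i.e.\ edges lying on shortest paths entering $T$ from outside. Revenue is piecewise-linear in $\delta$, with breakpoints only when $t+\delta$ crosses a buyer's budget or matches the distance of some vertex just outside $T$. At $\delta = 0$ at least one of the two directions has non-negative slope; pushing $\delta$ in that direction until the first breakpoint, one arrives either at a value in $\mathcal{D}'$ (a buyer's budget or a fixed depth $e_{v'}$), in which case $d(v)$ has entered $\mathcal{D}'$, or at the distance of a previously-outside vertex $w$, in which case $w$ gets absorbed into $T$ and the perturbation restarts. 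Since $T$ strictly grows in the latter case, the process terminates after finitely many iterations.

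The main obstacle is arguing that the uniform shift of $\{d(w) : w \in T\}$ can be realized by a legal non-negative price modification that keeps distances in $S$ and outside $T$ fixed. On a tree this is immediate, since $T$ is then a union of subtrees whose roots each have a unique parent edge, and adjusting those parent-edge prices by $\delta$ produces exactly the desired shift. On a cactus, however, a vertex of $T$ may be reached by shortest paths that split around a simple cycle straddling the boundary of $T$, so perturbing only one branch could break the shortest-path equality defining $d(\cdot)$ or force the shortest path to switch sides. I would handle this via the same guessing reduction used in Section~\ref{subsect:rootedCase}: for each such boundary cycle, fix the topmost edge that is redundant for shortest paths from $r$ into the cycle's subtree graph, locally reducing the structure to a tree where the subtree-edge perturbation carries through verbatim. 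A routine check that $|\delta|$ can be taken small enough to keep all edge prices non-negative then closes the argument.
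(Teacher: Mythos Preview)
Your perturbation scheme has a genuine termination gap. You define $T=\{w\notin S:d(w)\ge t\}$ and shift every depth in $T$ by the same $\delta$. But $T$ will typically contain vertices $w$ that are already on-grid, i.e.\ $d(w)\in\mathcal D'$ (for instance $d(w)=b_j$ for some buyer $j$, with $b_j>t$). After the shift these move to $b_j+\delta$, which need not lie in $\mathcal D'$, so the number of off-grid vertices can go up, not down. Relatedly, your breakpoint analysis is about $t+\delta$ only, but revenue breaks whenever $d(w)+\delta$ crosses a budget for \emph{any} $w\in T$; stopping at such a breakpoint does not put $d(v)=t+\delta$ into $\mathcal D'$. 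Finally, your tree-case claim that ``$T$ is a union of subtrees whose roots each have a unique parent edge'' fails as soon as some $v'\in S$ has $e_{v'}>t$: then $v'\notin T$ while its non-$S$ descendants are in $T$, so $T$ has holes, and the shift also requires compensating on the edges crossing from $T$ into such $v'$ and back out---not just on parent edges.

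The paper avoids all of this by simply rerunning the rounding from Lemma~\ref{lemma:depths} with $\mathcal D'$ in place of $\mathcal D$: set $d'_v$ to the least element of $\mathcal D'$ that is $\ge d_v$, realise these depths on a shortest-path tree, and price the remaining edges at $b_{\max}$. Since $e_v\in\mathcal D'$ for every $v\in S$ and the starting prices are feasible, the rounding leaves those depths untouched, so feasibility is preserved; the monotonicity and revenue comparison go through verbatim. This is a single global step with no iteration and no cycle-by-cycle case analysis.
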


\subsubsection{Solution for a rooted instance}

The input to the procedure consists of a graph $H$, buyers $B_H$ and a (possibly empty) set of constraints $S$ from Corollary \ref{coll:fixedDepths}.
For each vertex $v$ we define a set of its possible depths $\mathcal{D}_v$ in the following way:
$$ \mathcal{D}_v = \begin{cases}
						\set{e_v}, & v \in S \\
						\set{0} \cup \longset{b_i}{i \in B_H} \cup \longset{e_u}{u \in S}, & v \not\in S \end{cases} $$

For each biconnected component $C$ the algorithm calculates the values of $dp_{C, d}$ for every $d \in \mathcal{D}_v$ where $v$ is the topmost vertex of $C$.
It is possible that some values of $d$ inevitably lead to violation of the constraints on depths of vertices from $S$.
In such a case we set $dp_{C, d} = - \infty$.
For simplicity, we also assume that $dp_{C, d} = - \infty$ for each $d \not\in \mathcal{D}_v$.

The biconnected components of $H$ are processed bottom up based on the structure of the tree of biconnected components.
The algorithm handles biconnected components differently depending on whether they consist of a single edge or a cycle.
The root component $R$, which is the root of the tree of biconnected components, contains the whole $H$ in its subtree graph and is treated in yet another way.
Let us introduce useful notation:
\begin{itemize}[nosep]
	\item $cnt_{v,x}$ -- the number of buyers whose budgets are at least $x$ and whose destination vertex is $v$.
	\item $\mathcal{C}_v$ -- the set of all biconnected components whose topmost vertex is $v$.
\end{itemize}

\textbf{The case of a single edge:}
Let us denote the lower vertex of the considered biconnected component $C$ as $v$ and the upper as $u$.
Note that the subtree graph of $C$ consists of the $(u,v)$ edge and subtree graphs of biconnected components from $\mathcal{C}_v$, which are edge disjoint and share only the topmost vertex.
All simple paths to the root from vertices contained in those subtree graphs pass though $v$.
Furthermore, each simple path from $v$ to the root must contain $u$.
Basing on those observations, the algorithm calculates $dp_{C, d}$ for each $d \in \mathcal{D}_u$ according to the following formula:
$$ dp_{C, d} = \max_{d' \in \mathcal{D}_v \; ;\; d' \geq d} \left( cnt_{d', v} \cdot d' + \sum_{C' \in \mathcal{C}_v} dp_{C', d'} \right)$$
The optimal value of $d'$ is stored along with $dp_{C, d}$ in order to find the prices after calculating optimal revenue ($(d' - d)$ is the price of the $(u,v)$ edge).

\textbf{The case of the root component:}
The root must have depth $0$, hence for this component only a single value ($dp_{R,0}$) is calculated:
$ dp_{R, 0} = \sum_{C' \in \mathcal{C}_r} dp_{C', 0}$.

\textbf{The case of a cycle:}
Let us denote the considered cycle by $C$, its topmost vertex as $v$ and its subtree graph by $G_C$.
$G_C$ consists of $C$ and the subtree graphs of components from $\mathcal{C}_u$ for all $u \in C \setminus \set{v}$.

Let us examine the structure of the subproblem.
All paths from vertices from $G_C$ to the root pass through $v$.
By construction of the tree of biconnected components each vertex $s$ in $G_C \setminus \set{v}$ belongs either to $C$ or to a subtree component of $C' \in \mathcal{C}_u$ for a unique $u \in C \setminus \set{v}$.
In the latter case every path from $s$ to the root also passes through $u$.
Edges which form possible $s$-$u$ paths belong to smaller subproblems, so given the depth of $u$, the optimal prices can be calculated.
Thus, now we are only interested in the depths of vertices in $C$ (more precisely, their distance to $v$, because it's depth is fixed).
Note that these distances depend only on the edges in $C$.

Consider any prices assigned to them.
Let $T$ be a shortest-path tree of $C$ rooted in $v$.
Exactly one edge from $C$ does not belong to $T$, we will say that it is unused.
After removing this edge, the cost of a cheapest path from any vertex in $G_C$ to $v$ does not change.

\begin{wrapfigure}{r}{0.4\textwidth}
		\vspace{-0.3cm}
		\includegraphics[width=0.4\textwidth]{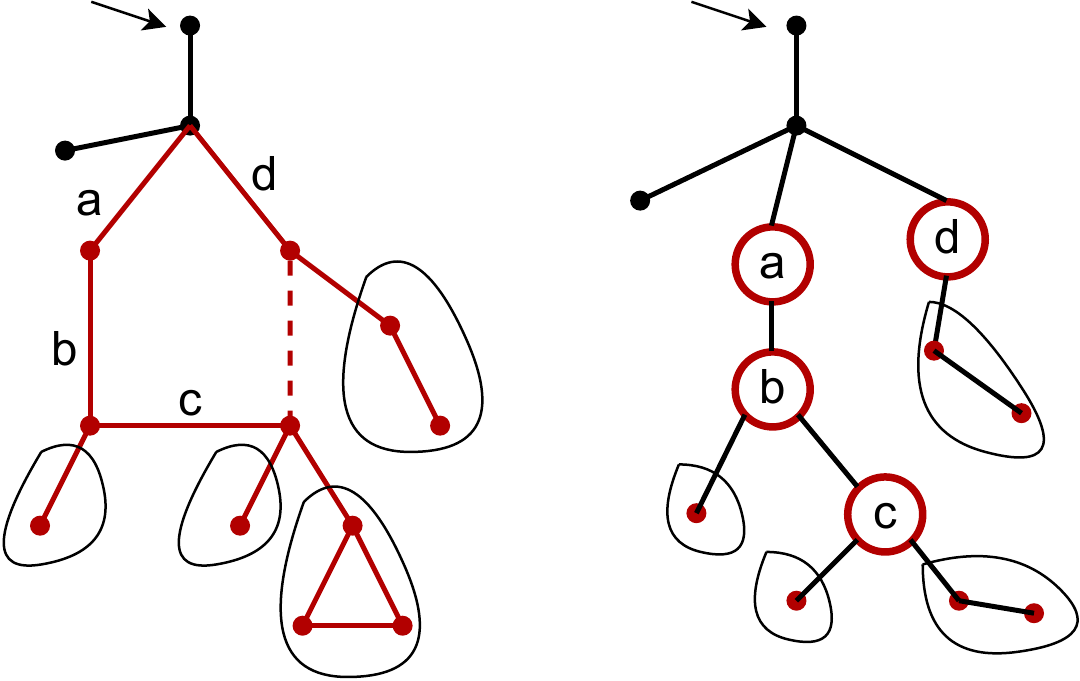}
		\caption{
		Processing a cycle for a fixed unused edge.
		Its subtree graph (without the topmost vertex) is highlighted in red in both the cactus graph (left) and the temporary tree of biconnected components (right).
		Solid lines enclose subproblems which already have been solved.
		Arrows indicate respectively the root of $G$ and the root component.
		The algorithm calculates the values of $dp^e$ for biconnected components $c$, $b$, $a$ and $d$ (in this order).
		}
		
	\end{wrapfigure}

The algorithm iterates over all edges in $C$ fixing the current one, denoted $e$, to be the unused edge.
In this step the algorithm finds an optimal solution among those price assignments which result in $e$ being unused.
First, the price of $e$ is set to $b_{max}+1$ ($b_{max} = \max \longset{b_i}{i \in B_H}$).
This effectively removes $e$ from the graph, as no buyer will ever purchase a path containing it.

Without $e$, $C$ becomes a path and edges in $C \setminus \set{e}$ constitute individual biconnected components.
The subtree graphs of components in $\mathcal{C}_u$ for $u \in C \setminus \set{v}$ remain intact, but now in the subtree graph they are descendants of the single-edge biconnected components from $C \setminus \set{e}$ instead of the cycle $C$.
However, the solutions calculated for them for every valid depth still remain valid.
Thus, the only biconnected components in $G_C \setminus \set{e}$ for which we need to calculate $dp_{C', d}$ are formed by single-edge biconnected components.
The algorithm calculates these values in a bottom up manner as described previously.
We assumed that $e$ is an unused edge, so let us denote the results as $dp^e$.

Let $C_1$ and $C_2$ be the biconnected components formed by edges of $C$ which are adjacent to $v$.
If $e$, the unused edge, happens to be adjacent to $v$, there is only one such component.
In this case $C_2$ is just a placeholder with an empty subtree graph and $dp^e_{C_2, d}$ equals zero for all $d \in \mathbb{R}$.
Note that the union of subtree graphs of $C_1$ and $C_2$ contains the same vertices and edges (except for $e$) as $G_C$.
Furthermore, the two subtree graphs can only share one vertex: $v$.
Thus, if we admit only such solutions, where $e$ is unused, then $dp_{C, d} = dp^e_{C_1, d} + dp^e_{C_2, d}$.
Since for every possible price assignment there exists an optimal allocation where one edge of $C$ is unused, it is enough to iterate over all possible edges $e \in C$:
$$ dp_{C, d} = \max_{e \in C} \left( dp^e_{C_1, d} + dp^e_{C_2, d} \right) $$
Using the above formula the algorithm computes $dp_{C, d}$ for every $d \in \mathcal{D}_v$.
Like previously, respective price assignments to edges of $C$ are stored alongside the results.


Since all the above procedures run in polynomial time and each biconnected component is processed only once, the solution is found in polynomial time.
Prices obtaining the computed maximal revenue can be easily calculated using additional information stored alongside the values of $dp_{C, d}$.  
This proves Lemma \ref{lemma:rootedCase}.

\subsection{The non-skeleton subproblem}

In order to solve the non-skeleton subproblem the algorithm utilizes a special structure of non-skeleton components on $j$-th level. 
For each of them, a rooted instance of the tollbooth problem on cactus graphs is created.
Those subproblems are solved by the procedure described in Section \ref{subsect:rootedCase}.
Resulting price assignments for individual non-skeleton components are merged by a probabilistic procedure which can, however, be derandomized.

\subsubsection{Constructing rooted instances}

Recall that each buyer $i \in B_j$ is defined by a triple $\left(u_i, v_i, b_i \right)$, which means that she has a valuation of $b_i$ for all $u_i$-$v_i$ paths.
Since the skeleton edges are given away for free, the algorithm only processes respective non-skeleton sections, which are modeled by two independent copies of $i$-th buyer: $\left( u_i, \repr_j(u_i), b_i \right)$ and $\left( v_i, \repr_j(v_i), b_i \right)$.
Each of them is added to the instance associated with the non-skeleton component containing $u_i$ and $v_i$ respectively.
If $u_i$ or $v_i$ is a skeleton vertex, the corresponding non-skeleton section is empty and can be ignored.  
Since $\repr_j(s)$ is the same for all vertices $s$ within a single non-skeleton component (Corollary \ref{coll:skeletonRepresentative}), all subproblems defined this way will indeed be rooted instances.

\begin{remark}
\label{rem:splittingBuyers}
It follows from the classification of buyers, that if $u_i$ and $v_i$ are not in $SK_j$, the non-skeleton components containing $u_i$ and $v_i$ belong to the same fragment of $D_j$, but to different fragments from $D_{j+1}$.
\end{remark}


\subsubsection{Algorithm for a single fragment}

The above observation allows us to treat all fragments in $D_{j}$ independently.
Let us consider a single fragment $H \in D_j$ and by $rev_{j, H}(p)$ let us denote the revenue generated by its selling non-skeleton edges for prices $p$ to buyers from $B_j$.
If for each buyer at most one non-skeleton section was non-empty, the rooted instance would be independent.
We could apply their solutions verbatim -- each buyer $i$ present in an instance would always be able to spend $b_i$ as assumed.
However, it's not the case -- for example a buyer $i$ may be present in two non-skeleton components and both solutions to the corresponding rooted instances may require her to pay $b_i$ for each non-skeleton section, in which case she would not buy anything.

We solve this issue by using a randomized procedure: each fragment $F \in D_{j+1}$ contained in $H$ is independently and equiprobably colored black or white.
Every non-skeleton component in a black fragment is priced according to the solution to the corresponding rooted instance.
All edges in non-skeleton components in white fragments are given away for free.

\begin{lemma}
\label{lemma:randomisedNonSkeleton}
Let $p$ be the price vector found by the above randomized algorithm and $q$ be any price vector feasible for the non-skeleton subproblem.
Then, the following inequality holds:
$$ \Exp{\rev_{j, H}(p)} \geq \frac{1}{\CrevToOpt} \rev_{j, H}(q) $$
\end{lemma}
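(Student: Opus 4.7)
The plan is to fix an arbitrary feasible price vector $q$ for the non-skeleton subproblem, upper-bound $\rev_{j,H}(q)$ by the sum of optimal rooted-instance revenues over the non-skeleton components of $H$, and then lower-bound $\Exp{\rev_{j,H}(p)}$ by that same sum, exploiting the independence of the random colouring of fragments in $D_{j+1}$.

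First, I would establish $\sum_k R_k \geq \rev_{j,H}(q)$, where $R_k$ denotes the optimal revenue of the rooted instance constructed on the $k$-th non-skeleton component $N_k$ of $H$. For every buyer $i \in B_j$, denote by $q_1^i$ and $q_2^i$ the costs of her two non-skeleton sections under $q$ (either of which may be zero if the corresponding endpoint already lies in $\mname{SK}_j$). She contributes to $\rev_{j,H}(q)$ only when $q_1^i + q_2^i \leq b_i$, which in particular gives $q_s^i \leq b_i$ for each section individually. Hence, in the rooted instance on the component $N_{k_s}$ associated with her $s$-th section, the corresponding copy $\bigl(u_i, \repr_j(u_i), b_i\bigr)$ or $\bigl(v_i, \repr_j(v_i), b_i\bigr)$ would, under the restriction of $q$ to that component, still purchase her path at cost $q_s^i$. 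Summing contributions per buyer and per component, and invoking the optimality of each $R_k$, yields $\sum_k R_k \geq \rev_{j,H}(q)$.

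Next I would lower-bound $\Exp{\rev_{j,H}(p)}$ per buyer. Write $R_k = \sum_i a_i^k$, where $a_i^k$ is buyer $i$'s payment in the optimal rooted solution on $N_k$ (zero unless $i$ is a copy there, and always at most $b_i$). Consider a buyer $i$ with both non-skeleton sections non-empty, sitting in components $N_{k_1}, N_{k_2}$ of fragments $F_1, F_2 \in D_{j+1}$. Remark \ref{rem:splittingBuyers} forces $F_1 \neq F_2$, so their colours are independent. With probability $\tfrac{1}{4}$ only $F_1$ is black (and $F_2$ white), in which case her second section is free and her total cost is $a_i^{k_1} \leq b_i$, so she pays exactly $a_i^{k_1}$; the symmetric event contributes $a_i^{k_2}$. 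Discarding the ``both black'' case gives $\Exp{\text{payment of buyer } i} \geq \tfrac{1}{4}\bigl(a_i^{k_1} + a_i^{k_2}\bigr)$. A buyer with exactly one non-empty section yields an expected payment of $\tfrac{1}{2}a_i^{k} \geq \tfrac{1}{4}a_i^{k}$ by a single coin flip.

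Summing these per-buyer bounds rearranges to $\Exp{\rev_{j,H}(p)} \geq \tfrac{1}{4}\sum_k R_k \geq \tfrac{1}{\CrevToOpt}\rev_{j,H}(q)$, which is the desired inequality. The only subtle point I anticipate is the first step, because the rooted solver charges each copy of buyer $i$ against her full budget $b_i$ rather than the residual left after her other section has been priced; this is precisely what is rescued by the elementary implication $q_1^i + q_2^i \leq b_i \Rightarrow q_s^i \leq b_i$, which lets us decouple the two sections and honestly compare the sum of rooted-instance optima with $\rev_{j,H}(q)$.
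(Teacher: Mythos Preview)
Your proposal is correct and follows essentially the same approach as the paper. Both arguments hinge on the two observations that (i) the optimal rooted-instance revenue in each non-skeleton component dominates the revenue $q$ collects there (since each copied buyer carries the full budget $b_i$), and (ii) the independent colouring of $D_{j+1}$-fragments gives each buyer the favourable configuration with probability at least $\tfrac14$; you organise the bookkeeping per buyer through the explicit intermediate $\sum_k R_k$, while the paper does it per component via $\rev_{j,C}$, but the content is the same.
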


This follows from the fact that for any non-skeleton section with probability at least $\frac{1}{4}$ it will be in the black fragment and the other non-skeleton section of the same buyer will be in a white one.
The deterministic algorithm iterates over all possible colorings and chooses the best one, which will yield results at least as good as the expected value.
Because there are at most $\OO{k}$ ($\OO{\sqrt{\log m}}$) fragments of the next level contained in $H$, this takes polynomial time.
Detailed proofs of the lemma and the corollary can be found in Appendix \ref{subsect:non_skeleton_subproblem_apx}.

\begin{corollary}
\label{lemma:derandomisedNonSkeleton}
There exists a deterministic polynomial algorithm which for a non-skeleton subproblem on $j$-th level finds prices achieving at least $\frac{\mname{NSKOPT}_j}{\CrevToOpt}$ revenue.
\end{corollary}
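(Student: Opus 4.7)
The plan is to derandomize Lemma \ref{lemma:randomisedNonSkeleton} via exhaustive enumeration over colorings, and then aggregate the per-fragment guarantee across all $H \in D_j$ to obtain the claimed global bound.

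First I would fix a fragment $H \in D_j$ and enumerate all $2^{t}$ black/white assignments to the $t$ fragments of $D_{j+1}$ contained in $H$. By invariant~1 of Lemma \ref{lemma:decomposition} we have $t = \OO{k}$, and the choice $k = \ceil{\log^{1/2} m}$ gives $2^{t} = 2^{\OO{\sqrt{\log m}}} = m^{o(1)}$, which is polynomial in $m$. For each coloring I would apply the recipe from the paragraph preceding Lemma \ref{lemma:randomisedNonSkeleton}: inside every black fragment, each non-skeleton component is priced using the rooted-instance solver of Lemma \ref{lemma:rootedCase}; inside every white fragment all non-skeleton edges are set to zero. Evaluating the resulting revenue $\rev_{j,H}$ only requires summing, over buyers in $B_j$ whose non-skeleton sections lie in $H$, the budget of each buyer whose two sections are affordable under these prices. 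Outputting the coloring that maximizes this value yields, by the standard probabilistic method applied to Lemma \ref{lemma:randomisedNonSkeleton}, a price vector $p_H$ with $\rev_{j,H}(p_H) \geq \tfrac{1}{\CrevToOpt}\,\max_q \rev_{j,H}(q)$.

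Second I would combine the price vectors across fragments of $D_j$. The fragments in $D_j$ are edge-disjoint, so the vectors $p_H$ concatenate into a single global price assignment (skeleton edges stay at zero). By Remark \ref{rem:splittingBuyers}, together with the easier observation that a single non-empty non-skeleton section sits entirely inside one $D_{j+1}$-fragment, hence one $D_j$-fragment, every buyer in $B_j$ has her entire non-skeleton contribution determined by $p_H$ for a single $H$. Consequently $\mname{NSKOPT}_j = \sum_{H \in D_j} \max_q \rev_{j,H}(q)$, and summing the per-fragment guarantees yields total revenue at least $\frac{\mname{NSKOPT}_j}{\CrevToOpt}$, as required.

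I expect the main obstacle to be the careful justification of this last decomposition across $D_j$-fragments: one must verify that buyers with only one non-empty non-skeleton section are attributed to exactly one fragment, that no buyer is double-counted, and that the per-fragment optima are simultaneously realizable as a single global price vector. All of this follows from edge-disjointness of $D_j$ plus Remark \ref{rem:splittingBuyers}, but the bookkeeping must be spelled out explicitly. The running time bound, by contrast, is a direct consequence of $k = \OO{\sqrt{\log m}}$ and Lemma \ref{lemma:rootedCase}, both already in hand.
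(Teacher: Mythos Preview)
Your proposal is correct and matches the paper's proof essentially line for line: derandomize by enumerating all $2^{\OO{k}}$ colorings per fragment (polynomial since $k = \ceil{\log^{1/2} m}$), then aggregate over $H \in D_j$ using that each buyer's non-skeleton contribution lies in a single $D_j$-fragment. The only cosmetic difference is that you assert the decomposition $\mname{NSKOPT}_j = \sum_H \max_q \rev_{j,H}(q)$, whereas the paper simply applies the per-fragment bound with the single global optimum $q = p_{opt}$ and sums; both are immediate from Remark~\ref{rem:splittingBuyers} and edge-disjointness.
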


\let\CrevToOpt\undefined

\section{Skeleton edges}
\label{sect:skeleton}
\newcommand{\Crounding}{1024} 
\newcommand{\CrevToOpt}{2048} 
\newcommand{\CmyScoreToMaxScore}{32} 
\newcommand{\CrevToScore}{4} 
\newcommand{\CrevToMyScore}{128} 
\newcommand{\CroundedRevToMyScore}{512} 

This chapter describes an algorithm solving the skeleton subproblem on a single, $j$-th level of decomposition.
Non-skeleton edges do not influence envy-freeness of a solution because the are given away for free.
Thus, a buyer $i \in B_j$ wishing to buy an $u_i$-$v_i$ path can be thought of as a buyer with the same budget wishing to buy a shortest $\repr_j(u_i)$-$\repr_j(v_i)$ path.
This chapter describes a polynomial time algorithm which finds prices for edges in $\mname{SK}_j$ generating at least $\frac{\mname{SKOPT}_j}{\CrevToOpt}$ revenue.

\subsection{Decomposing the skeleton}
\label{subsect:decomposeSkeleton}

Let us begin by exploring important properties of the skeleton subproblem. 

\begin{remark}
\label{rem:locality}
For each buyer $i \in B_j$, there exists a fragment in $D_j$ containing the skeleton representatives of both $u_i$ and $v_i$.
Furthermore, every path between the skeleton representatives contains a border vertex of $j$-th level.
\end{remark}
This property is true for $u_i$ and $v_i$, which follows from the way buyers are assigned to levels of decomposition.
It also holds for their representatives because by Definition \ref{def:skeletonRepresentative} each vertex and its representative on $j$-th level belong to the same fragment of $D_{j+1}$.
In order to take advantage of this, we decompose $\mname{SK}_j$ into smaller subgraphs.

Consider a process of compressing $\mname{SK}_j$ by applying the following operations:
\begin{enumerate}[nosep]
\item
Let there be two edges: $(u,v)$, $(v,w)$ such that $v$ is not a border vertex and has degree equal two.
Merge them into a single edge $(u,w)$ and erase $v$ from the graph.
\item
For any two vertices $u$ and $v$, if there are parallel $(u,v)$ edges, merge them.
\end{enumerate}
The process concludes when none of the above operations can be executed anymore.

\begin{definition}
\label{def:segment}
A \term{segment on $j$-th level} is a subgraph of $\mname{SK}_j$ which is contracted into a single edge by the above procedure.
The \term{endpoints of a segment} are the endpoints of the corresponding edge in the compressed graph.
The \term{cost} or \term{length} of a segment is the cost of a cheapest path between its endpoints which is fully contained within the segment.
\end{definition}
Note that the only vertices shared by segments are their endpoints.
Furthermore, every border vertex on $j$-th level is an endpoint of a segment because border vertices are never erased from $\mname{SK}_j$ during the compression.


\begin{definition}
The \term{skeleton of a fragment} $F \in D_j$ is the minimal subgraph of $G$ containing all simple paths between skeleton vertices from $F$.
It is denoted $\mname{SK}_j(F)$.
\end{definition}
By Lemma \ref{lemma:skeletonPath} all simple paths between skeleton vertices belong to the skeleton, so $\mname{SK}_j(F) \subseteq \mname{SK}_j$.

\begin{definition}
\label{def:innerOuter}
Let $F$ be a fragment in $D_j$.
A segment which is contained in $F$ is said to be an \term{inner segment} of $\mname{SK}_j(F)$.
A simple path which connects two skeleton vertices of $F$ and contains no edges from $F$ is called an \term{outer extension} of $\mname{SK}_j(F)$.
\end{definition}
Each outer extension starts and ends in a border vertex.
Thus, it consists of several whole segments, i.e. traversed from one endpoint to another.
Also, together with the inner segments they form a partition of $\mname{SK}_j$, which we prove in Appendix~\ref{subsect:decomposeSkeleton_apx}.

\begin{SCfigure}

\centering
\includegraphics[width=0.5\textwidth]{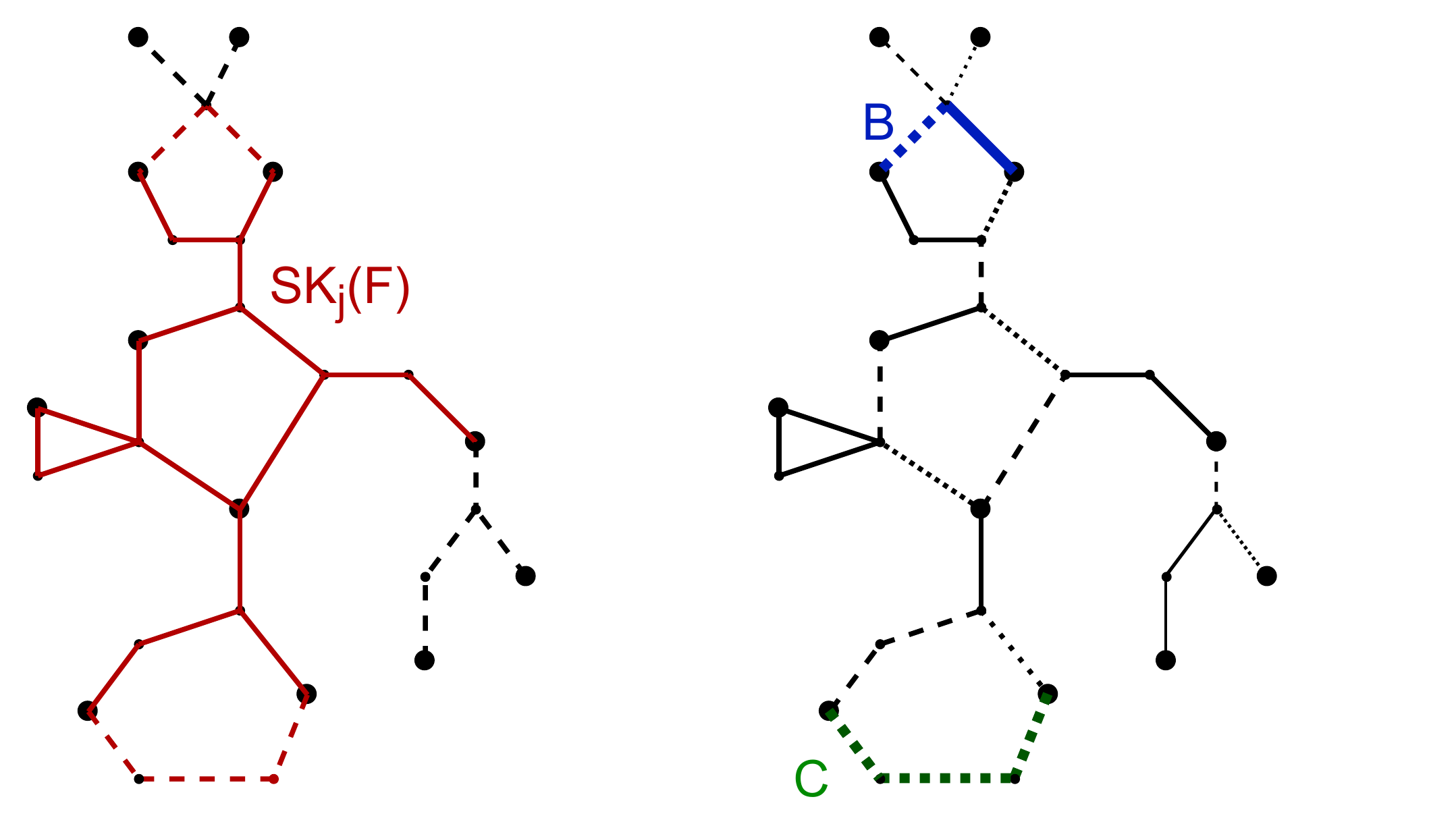}

\caption{An example skeleton on $j$-th level with highlighted border vertices.
On the left skeleton edges are grouped by the style of the lines according to fragments and on the right to segments on $j$-th level.
In the picture on the right edges from the fragment $F \in D_j$ are marked by solid lines.
$B$ and $C$ are the outer extensions of its skeleton, which is marked in red.
}

\end{SCfigure}

\begin{remark}
\label{rem:innerOuterPartition}
Inner  segments and outer extensions form an edge-disjoint partition of $\mname{SK}_j(F)$.
\end{remark}

\subsection{Price rounding}

If a simple path neither starts nor ends in a given segment, it will either traverse this segment from one endpoint to another or not at all.
Thus, for such paths it is sufficient to consider the cost of the segment and not the prices of individual edges.
Let us formalize this observation.

\begin{definition}
\label{def:involved}
A buyer $i \in B_j$ is said to be \term{involved} in a segment on $j$-th level if either $\repr_j(u_i)$ or $\repr_j(v_i)$ belongs to this segment and is not its endpoint.
\end{definition}  

\begin{remark}
For fixed costs of segments, changing prices of individual edges inside a segment influences only the involved buyers.
\end{remark}

Using this observation, the algorithm could guess the costs of segments and then distribute those costs among individual edges in a way that would maximize the revenue of involved buyers.
However, such a naive approach cannot be used because there are infinitely many combinations of segments' costs.
The following lemma allows for considering only finitely many of them.
Its full proof can be found in Appendix~\ref{subsect:priceRounding_apx}.
The main idea behind it is that optimal prices can be rounded down to fulfill the condition below without dramatically decreasing revenue. 

\begin{lemma}{\textbf{(rounding)}}
\label{lemma:rounding}
There exists a price assignment obtaining revenue of at least $\frac{\mname{SKOPT}_j}{4}$ such that each segment's length belongs to the following set:
$$ P = \longset{ \frac{m b_{max}}{2^t} }{ t \in \set{0, 1, \dots, \ceil{\log \left(\Crounding \cdot m^2 \cdot |B_j| \right)} } } \cup \set{0} $$
Here $b_{max}$ is the greatest budget of buyers in $B_j$ and $m$ is the number of edges in $G$.
\end{lemma}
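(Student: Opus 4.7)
The plan is to take an optimal price vector $p^{*}$ for the skeleton subproblem achieving $\mname{SKOPT}_j$, and, for each segment $S$ separately, uniformly rescale the prices of all edges inside $S$ so that the segment's length rounds down to an element of $P$. Concretely, if segment $S$ has length $\ell_S$ under $p^{*}$, let $\ell_S'$ be the largest element of $P$ not exceeding $\ell_S$ whenever $\ell_S$ is at least the smallest nonzero value $\epsilon := m\,b_{max}/2^{\ceil{\log(\Crounding\, m^2 |B_j|)}}$ of $P$ (note $\epsilon \le b_{max}/(\Crounding\, m\, |B_j|)$), and $\ell_S' = 0$ otherwise. I realise $\ell_S'$ by multiplying every edge price in $S$ by $\ell_S'/\ell_S$. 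Because segments are edge-disjoint (Definition~\ref{def:segment}), these rescalings act independently, and uniform scaling keeps the cheapest-path structure inside $S$ intact while multiplying all path costs in $S$ by the same factor $\ell_S'/\ell_S$.

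Next I would bound the revenue loss path-by-path. Any simple skeleton path $\pi$ decomposes along the (possibly partial) segments it intersects, and the contribution of each to $p'(\pi)$ is scaled by the corresponding $\ell_S'/\ell_S$; this factor is at least $1/2$ when $\ell_S \ge \epsilon$ and equals $0$ otherwise, but in the latter case the segment contributed at most $\epsilon$ to $p^{*}(\pi)$. Since a simple path uses at most $m$ segments,
\[
p'(\pi) \ \ge\ \frac{p^{*}(\pi)}{2}\ -\ \frac{m\epsilon}{2}.
\]
Taking the minimum over candidate $\repr_j(u_i)$--$\repr_j(v_i)$ paths, the new cheapest-path cost satisfies $c_i' \ge c_i^{*}/2 - m\epsilon/2$, where $c_i^{*}$ is the cheapest cost under $p^{*}$. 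Because prices only decreased, every buyer sold under $p^{*}$ is still sold under $p'$, and summing over them yields a total revenue of at least $\mname{SKOPT}_j/2 - |B_j|\cdot m\epsilon/2 \ge \mname{SKOPT}_j/2 - b_{max}/(2\Crounding)$.

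The main obstacle is then to argue that the additive term $b_{max}/(2\Crounding)$ is at most $\mname{SKOPT}_j/4$, which is subtle because $b_{max}$ ranges over all of $B_j$ whereas $\mname{SKOPT}_j$ is supported only by buyers with a non-trivial skeleton section ($\repr_j(u_i) \ne \repr_j(v_i)$). I would handle this via a preliminary reduction: if no buyer in $B_j$ has a non-trivial skeleton section then $\mname{SKOPT}_j = 0$ and the statement is vacuous; otherwise, the highest-budget such buyer $i^{*}$ certifies $\mname{SKOPT}_j \ge b_{i^{*}}$, by pricing a single $\repr_j(u_{i^{*}})$--$\repr_j(v_{i^{*}})$ path at cost exactly $b_{i^{*}}$ and all other skeleton edges prohibitively. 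Since buyers with a trivial skeleton section contribute nothing to the revenue or to any segment length, we may restrict attention to buyers of the first kind and hence assume $b_{max} \le \mname{SKOPT}_j$. The additive loss then becomes $b_{max}/(2\Crounding) = b_{max}/2048 \le \mname{SKOPT}_j/2048$, comfortably below $\mname{SKOPT}_j/4$, so the revenue under $p'$ is at least $\mname{SKOPT}_j/2 - \mname{SKOPT}_j/2048 \ge \mname{SKOPT}_j/4$, as required.
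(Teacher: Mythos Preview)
Your overall strategy matches the paper's --- start from an optimal price vector, round each segment's length down into $P$, and bound the per-buyer loss --- but there is a genuine gap in how you implement the rounding. You multiply \emph{every} edge in a segment $S$ by $\ell_S'/\ell_S$, whereas the paper multiplies only the edges on \emph{one chosen shortest $l$--$r$ path} inside $S$. This distinction is exactly what makes your sentence ``in the latter case the segment contributed at most $\epsilon$ to $p^{*}(\pi)$'' false. Take a cyclic segment that is a $4$-cycle $l\!-\!a\!-\!r\!-\!b\!-\!l$ with endpoints $l,r$, where the arc $l\!-\!a\!-\!r$ has total cost $\epsilon/2$ and the two edges $(l,b),(b,r)$ each cost some $M\le b_{\max}$. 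Then $\ell_S=\epsilon/2<\epsilon$, so you scale the whole segment by $0$. A buyer with $\repr_j(u_i)=b$ paid $M$ inside $S$ under $p^{*}$ but pays $0$ under your $p'$; the segment's contribution to $p^{*}(\pi)$ was $M$, not at most $\epsilon$, and your bound $p'(\pi)\ge p^{*}(\pi)/2 - m\epsilon/2$ breaks. The paper's rounding zeroes only the $l\!-\!a\!-\!r$ arc, leaving $(l,b)$ and $(b,r)$ untouched, so that buyer's revenue is preserved. In short, uniform scaling controls the segment \emph{length} but can destroy the internal distances from interior vertices to the endpoints, which is where involved buyers generate their revenue.

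There is also a smaller omission: you never ensure $\ell_S\le m\,b_{\max}$, so ``the factor is at least $1/2$ when $\ell_S\ge\epsilon$'' is unjustified (if $\ell_S>m\,b_{\max}$ then $\ell_S'=m\,b_{\max}$ and the ratio can be arbitrarily small). The paper handles this with a preliminary step capping every edge price at $b_{\max}$, which you can simply add.
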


\subsection{Pricing strategies}
\label{subsect:pricingStrategies}

Globally, the number of segments can potentially be large.
Thus, the algorithm can not explicitly iterate over all combinations of their rounded costs.
Instead, it uses the structure of paths desired by customers in $B_j$ to handle each fragment in the current decomposition separately.

Recall that for each buyer $i \in B_j$ there exists at least one fragment in $D_j$ such that both $\repr_j(u_i)$ and $\repr_j(v_i)$  belong to it (Remark \ref{rem:locality}).
The algorithm assigns each customer from $B_j$ to one of such fragments.
Let us fix a fragment $F \in D_j$ and denote the set of buyers assigned to it by $B_{j, F}$.
By definition, the skeleton of $F$ contains the skeleton sections of all paths desired by buyers from $B_{j,F}$.
Thus, revenue generated by those buyers in the skeleton subproblem under any prices $p$, denoted $\rev_{j,F}(p)$, depends only on the costs of edges in $\mname{SK}_j(F)$.

Having said that, particular fragments in $D_j$ cannot be processed completely independently, as their skeletons overlap.
More specifically, each outer extension of $\mname{SK}_j(F)$ consist of several inner segments from different fragments in $D_j$.
However, simple paths starting and ending in $F$ traverse the outer extensions only as a whole, from one endpoint to another.
Thus, the costs of individual edges, or even segments, forming the outer extensions of $\mname{SK}_j(F)$ do not influence $\rev_{j,F}(p)$.
Only the total costs of outer extensions themselves do.
An outer extension's length is a sum of costs of several inner segments from different fragments.
Thus, even if considering only price assignments satisfying the rounding lemma (\ref{lemma:rounding}), it may not belong to $P$.
Because of this, they need to be treated differently than inner segments.

\begin{definition}
For a fragment $F \in D_j$ a \term{pricing strategy} $s$ for its skeleton edges is defined by:
\begin{itemize}[nosep]
\item A single number $p_{s,i} \in P$ for each inner segment $i$ of $\mname{SK}_j(F)$.
\item Two consecutive numbers $l_{s,o}, r_{s,o}$ from $P'$ for each outer extension $o$ of $\mname{SK}_j(F)$.
$$ P' = \longset{\frac{m^2 b_{max}}{2^t} }{ t \in \set{0, 1, \dots, \ceil{\log \left(\Crounding \cdot m^3 \cdot |B_j| \right)} } } \cup \set{0} $$
Alternatively, $l_{s,o}$ and $r_{s,o}$ can be both equal to zero.
\end{itemize}
Prices of skeleton edges in $F$ \term{implement} a strategy $s$ if the length of each inner segment $i$ in $\mname{SK}_j$ is $p_{s,i}$.
A global price assignment \term{implements} $s$ if additionally the length of each outer extension $o$ of $\mname{SK}_j(F)$ is in the interval $\left( l_{s, o}, r_{s, o} \right]$ or equals zero if $l_{s, o} =  r_{s, o} = 0$.
A combination of pricing strategies is \term{valid} if the costs assigned to inner segments satisfy the requirements on the lengths of outer extensions.
\end{definition}

On the high level, the algorithm for solving the skeleton subproblem works in two phases.
First, for each fragment and pricing strategy near-optimal prices implementing that strategy are found.
Then, the algorithm constructs a valid combination of strategies generating high overall revenue.

\subsubsection{Approximating revenue}

For each $F \in D_j$, the algorithm iterates over all strategies and finds near-optimal prices of edges in inner segments of $\mname{SK}_j(F)$ implementing them.
It is done without assuming exact lengths of outer extensions in $\mname{SK}_j(F)$, but only intervals of their possible values.
Thus, it is impossible calculate the revenue generated by $B_{j,F}$.
One can approximate it, though.

\begin{definition}
\label{def:approxRevenue}
Consider a fragment $F \in D_j$, pricing strategy $s$ for it and any prices implementing $s$, denoted $p$.
The \term{approximate revenue} generated by those prices, denoted $\rev_{j,F,s}(p)$, is the revenue generated by customers from $B_{j,F}$ under the following assumptions:
\begin{itemize}[nosep]
\item The skeleton edges of $F$ are priced according to $p$.
\item Each outer extension $o$ in $\mname{SK}_j(F)$ has length $r_{s,o}$.
\end{itemize}
\end{definition}

The following lemma allows the algorithm to focus on maximizing the approximate revenue calculated locally for each fragment and pricing strategy.
Its proof can be found in Appendix \ref{subsubsect:approx_rev_apx}.

\begin{lemma}
\label{lemma:outerOverpricing}
Consider a valid combination of pricing strategies and denote the strategy for a fragment $F \in D_j$ as $s_F$.
Then, for any price assignment $p$ implementing that combination of strategies:
$$ \sum_{F \in D_j} \rev_{j, F, s_F}(p) \geq \frac{b_{max}}{512}
\quad \implies \quad
\rev(p) \geq \frac{1}{4} \sum_{F \in D_j} \rev_{j, F, s_F}(p)$$
\end{lemma}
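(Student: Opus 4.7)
The plan is to bound, buyer-by-buyer, the actual revenue contribution by essentially half of the approximate contribution up to a small additive loss that is concentrated on outer extensions of the ``$l=0$'' class, and then to absorb the accumulated additive loss using the hypothesis on the total approximate revenue.

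First, I would fix a fragment $F \in D_j$ and a buyer $i \in B_{j,F}$; let $c_i^{\mname{app}}$ and $c_i^{\mname{act}}$ denote the costs of a cheapest $\repr_j(u_i)$-$\repr_j(v_i)$ path in the approximate and actual settings, respectively. By Remark \ref{rem:innerOuterPartition}, any such path decomposes into inner segments of $\mname{SK}_j(F)$ (whose lengths are identical in the two settings) and outer extensions (whose lengths differ). Because $l_{s_F,o}$ and $r_{s_F,o}$ are consecutive in $P'$, any outer extension $o$ with $l_{s_F,o} > 0$ has actual length at least $l_{s_F,o} = r_{s_F,o}/2$, i.e.\ at least half of its approximate length. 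For extensions with $l_{s_F,o} = 0$, the approximate length $r_{s_F,o}$ is at most the smallest nonzero element of $P'$, which by construction is at most $b_{max}/(1024\, m\, |B_j|)$.

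Next, I would take $\pi$ to be an actual shortest $\repr_j(u_i)$-$\repr_j(v_i)$ path. Its approximate cost is at least $c_i^{\mname{app}}$ by optimality of the approximate shortest path. Decomposing either cost into (inner segments) $+$ (outer extensions with $l>0$) $+$ (outer extensions with $l=0$), the actual cost of $\pi$ dominates the inner part plus half of the ``$l>0$'' outer part, yielding
\[
c_i^{\mname{act}} \;\geq\; \tfrac{1}{2}\bigl(c_i^{\mname{app}} - X(\pi)\bigr),
\]
where $X(\pi)$ is the approximate cost of the ``$l=0$'' outer extensions on $\pi$. Since $\pi$ traverses at most $m$ distinct segments and each ``$l=0$'' extension contributes at most $b_{max}/(1024\, m\, |B_j|)$ to its approximate cost, we get $X(\pi)/2 \leq b_{max}/(2048\,|B_j|)$. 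Moreover, whenever $c_i^{\mname{app}} \leq b_i$, we also have $c_i^{\mname{act}} \leq c_i^{\mname{app}} \leq b_i$, so buyer $i$ pays in both settings; her actual revenue contribution therefore satisfies $R_i^{\mname{act}} \geq R_i^{\mname{app}}/2 - b_{max}/(2048\,|B_j|)$ (and the bound is trivial when $R_i^{\mname{app}} = 0$).

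Finally, since the sets $B_{j,F}$ partition $B_j$, summing over all $i \in B_j$ gives
\[
\rev(p) \;\geq\; \tfrac{1}{2} \sum_{F \in D_j} \rev_{j, F, s_F}(p) \;-\; \frac{b_{max}}{2048}.
\]
Under the hypothesis $\sum_F \rev_{j, F, s_F}(p) \geq b_{max}/512$, the additive loss $b_{max}/2048$ is at most a quarter of the approximate total, yielding $\rev(p) \geq \tfrac{1}{4} \sum_F \rev_{j, F, s_F}(p)$. The main subtlety is that the actual shortest path may differ from the approximate one; this is handled cleanly by applying the edge-wise length comparison to the \emph{actual} shortest path $\pi$ and invoking approximate-optimality only as an upper bound on $c_i^{\mname{app}}$ via $\pi$'s approximate cost, which sidesteps any case analysis on which specific path a buyer ends up choosing.
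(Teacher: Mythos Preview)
Your proposal is correct and follows essentially the same approach as the paper: for each buyer, compare the actual and approximate shortest-path costs via a per--outer-extension argument (a multiplicative factor $\tfrac12$ for extensions with $l>0$, a small additive term for those with $l=0$), then sum over buyers and absorb the additive loss using the hypothesis. Your bookkeeping is in fact slightly tighter than the paper's---by keeping the factor $\tfrac12$ on the additive term you obtain $\rev(p)\ge \tfrac12 S - b_{\max}/2048$ rather than the paper's $\tfrac12 S - b_{\max}/1024$, which is exactly what is needed for the final inequality to close at the stated threshold $S\ge b_{\max}/512$.
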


\subsubsection{Bounding the number of pricing strategies}

Since for each fragment the algorithm iterates over all pricing strategies, we need to argue that there is only polynomially many of them.
This follows from the fact that there are only $\OO{k}$ inner  segments and outer extensions in the skeleton of any fragment.
Proofs of the following claims can be found in Appendix \ref{subsubsect:strategyBounding_apx}.

\begin{lemma}
\label{lemma:partSegmentation}
For each fragment $F \in D_j$, there are $\OO{k}$ inner  segments and outer extensions in its skeleton.
\end{lemma}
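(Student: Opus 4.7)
The plan is to compress $\mname{SK}_j(F)$ into a cactus multigraph whose edges are in bijection with the inner segments and outer extensions, and then to bound the number of edges of that cactus by $\OO{k}$. Concretely, I would compress $\mname{SK}_j(F)$ by retaining exactly (i) the border vertices of $j$-th level lying in $F$, and (ii) the non-border vertices of $F$ whose degree in $\mname{SK}_j$ is at least $3$, and then contracting every other vertex, keeping parallel edges.

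A preliminary observation is that no segment of $\mname{SK}_j$ can cross the boundary of $F$. Since $D_{j+1}$ refines $D_j$, every vertex shared between $F$ and some other fragment of $D_j$ lies in at least two fragments of $D_{j+1}$ and is therefore a border vertex of $j$-th level; by Definition~\ref{def:segment} such a vertex cannot appear in the interior of any segment. Consequently each segment lies entirely inside or entirely outside $F$. Thus inner segments partition the edges of $\mname{SK}_j \cap F$ and each becomes a single edge of the compressed graph whose endpoints both sit in $F$; every outer extension enters and leaves $F$ only at border vertices of $j$-th level, its internal vertices are all contracted, and it also becomes a single edge of the compressed graph between two retained border vertices of $F$.

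The resulting multigraph $H$ is still a cactus, so Euler's formula $|E(H)| = |V(H)| - 1 + c(H)$ together with the degree-sum inequality $2|E(H)| \ge |V_1(H)| + 2|V_2(H)| + 3|V_{\ge 3}(H)|$ gives $|V_{\ge 3}(H)| \le |V_1(H)| + 2c(H) - 2$. By construction every retained non-border vertex has degree $\ge 3$ in $H$, so all degree-$\le 2$ vertices of $H$ are retained border vertices of $j$-th level in $F$. Property~$3$ of Lemma~\ref{lemma:decomposition} then yields $|V_1(H)| + |V_2(H)| = \OO{k}$, and therefore $|V(H)| = \OO{k + c(H)}$ and $|E(H)| = \OO{k + c(H)}$.

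The main obstacle is establishing $c(H) = \OO{k}$. Each cycle of $H$ comes from a cycle-type biconnected component of $G$ contained in $\mname{SK}_j(F)$, and by property~$4$ of Lemma~\ref{lemma:decomposition} its pair of associated topmost edges lies in one fragment of $D_j$, either $F$ itself or some $F' \neq F$. I would charge each cycle in the first case to the subfragment of $D_{j+1}$ inside $F$ that contains its topmost vertex, and in the second case to one of the border vertices of $j$-th level of $F$ through which the cycle must enter and leave $F$ (since the cycle's edges in $F$ form a sub-path of the cycle, whose two endpoints must be border vertices of $F$). Property~$1$ of Lemma~\ref{lemma:decomposition} bounds the number of $D_{j+1}$-subfragments inside $F$ by $\OO{k}$ and property~$3$ bounds the number of border vertices of $j$-th level in $F$ by $\OO{k}$; together with a constant bound per subfragment/border vertex (each such subfragment or vertex can be charged only a bounded number of times, since in a cactus a fixed subfragment contains the topmost vertex of at most a constant number of cycles counted in $H$, thanks to the retention condition that survival in $H$ requires at least three retained vertices on the cycle), this yields $c(H) = \OO{k}$ and therefore $|E(H)| = \OO{k}$, which is exactly the claim.
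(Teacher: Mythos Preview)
Your overall framework --- compress $\mname{SK}_j(F)$ to a cactus multigraph $H$ whose edges biject with the inner segments and outer extensions, then bound $|E(H)|$ --- is exactly what the paper does, and your degree-sum reduction $|E(H)| = \OO{k + c(H)}$ is correct. The gap is in the last step, establishing $c(H) = \OO{k}$.

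The charging argument for case~(i) does not work as written. First, survival of a cycle in $H$ requires only two retained vertices, not three: a cycle of $\mname{SK}_j(F)$ with exactly two retained vertices becomes a pair of parallel edges, which you explicitly keep and which contributes to $c(H)$. Second, and more importantly, the claim ``a fixed subfragment contains the topmost vertex of at most a constant number of cycles counted in $H$'' is simply false. Take a single vertex $v$ that is the topmost vertex of cycles $C_1,\dots,C_t$ (a flower), where each $C_i$ carries a distinct border vertex of $j$-th level as a non-topmost vertex; every $C_i$ then survives in $H$ and all $t$ of them are charged to the one subfragment of $D_{j+1}$ containing $v$. Nothing in your retention condition prevents $t$ from being as large as $\Theta(k)$. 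So the per-subfragment bound you assert is not constant, and the parenthetical justification does not yield even an $\OO{k}$ total. (Your case~(ii) charging is fine: a border vertex is non-topmost on at most one cycle, so each is charged at most once.)

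The paper avoids this circularity by arguing directly on the tree of biconnected components of $H$ rather than through $c(H)$. Every leaf component must contain a non-topmost border vertex (otherwise its edges lie on no simple path between border vertices and would not be in the skeleton), so there are $\OO{k}$ leaves. A two-coloring then finishes: components with at least two children or a non-topmost border vertex are ``black'' (there are $\OO{k}$ of these, since a vertex is non-topmost in at most one component), and every other component is ``red''. A red component is forced to be a single edge and cannot have a red child (two adjacent single-edge components with a degree-$2$ non-border vertex between them would have been merged by the segment compression), hence there are at most as many red components as black ones. Thus $H$ has $\OO{k}$ biconnected components; since every vertex of $H$ of degree greater than two is the topmost vertex of some component, $|V(H)| = \OO{k}$ and $|E(H)| = \OO{k}$ follow at once. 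If you want to rescue the degree-counting route, you need an argument of this strength for $c(H)$; the charging you propose does not supply it.
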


\begin{corollary}
\label{coll:strategiesBound}
The number of pricing strategies for a fragment in $D_j$ is polynomial in $m$ and $n$. 
\end{corollary}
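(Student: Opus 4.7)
The plan is to multiply the number of independent choices made by a pricing strategy and then show the product is polynomial in $m$ and $n$.

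I would first bound the sizes of the price menus. Directly from their definitions, both $|P|$ and $|P'|$ are of the form $\ceil{\log(\Crounding \cdot m^{\OO{1}} \cdot |B_j|)} + 2$, and using $|B_j| \le n$ this is $\OO{\log m + \log n}$. Next I would count the local choices a strategy makes: one element of $P$ per inner segment (at most $|P|$ options), and either the degenerate pair $(0,0)$ or a pair of consecutive elements of $P'$ per outer extension. Since a consecutive pair is uniquely determined by its right endpoint, this gives at most $|P'|+1$ options per outer extension.

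Invoking Lemma \ref{lemma:partSegmentation}, the skeleton of $F$ has $\OO{k}$ inner segments and $\OO{k}$ outer extensions. Recalling that $k = \ceil{\log^{1/2} m}$, multiplying the local counts yields a total of at most
$$ |P|^{\OO{k}} \cdot (|P'|+1)^{\OO{k}} \;=\; \bra{\OO{\log m + \log n}}^{\OO{\sqrt{\log m}}} $$
pricing strategies for $F$.

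The main step — really the only calculation — is verifying that this bound is polynomial in $m$ and $n$. Its logarithm is $\OO{\sqrt{\log m}\cdot(\log \log m + \log \log n)}$, and the plan is to bound each summand by AM-GM, rewriting e.g.\ $\sqrt{\log m}\cdot \log \log n = \sqrt{\log m \cdot (\log \log n)^2} \le \tfrac{1}{2}\bra{\log m + (\log \log n)^2}$, which is at most $\tfrac{1}{2}(\log m + \log n)$ once $n$ exceeds a small constant (the degenerate small cases are immediate). The analogous estimate handles $\sqrt{\log m}\cdot \log \log m = o(\log m)$. Combining, the logarithm of the strategy count is $\OO{\log m + \log n}$, so the total number of strategies is $(mn)^{\OO{1}}$, as the corollary claims.
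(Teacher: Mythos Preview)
Your proposal is correct and follows essentially the same approach as the paper: count the choices per inner segment and outer extension, invoke Lemma~\ref{lemma:partSegmentation} to bound their number by $\OO{k}$, and then verify that $(\OO{\log mn})^{\OO{\sqrt{\log m}}}$ is polynomial. The only cosmetic difference is in the final arithmetic: the paper bounds $\sqrt{\log m}\cdot\log\log mn \le \sqrt{\log mn}\cdot\sqrt{\log mn} = \log mn$ directly, whereas you split $\log\log m$ and $\log\log n$ and use AM--GM; both arguments are valid and yield the same conclusion.
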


\subsection{Solution for a single fragment and a fixed pricing strategy}
\label{subsect:singleFragment}

In this section we present a polynomial procedure which for a fixed fragment $F \in D_j$ and a pricing strategy $s_F$ finds prices of skeleton edges in $F$ implementing $s_F$.
Prices $p$ found using this method achieve approximate revenue ($rev_{j, F, s_F}(p)$) of at least $\frac{1}{\CmyScoreToMaxScore}$ of the maximal approximate revenue possible for $s_F$.
For ease of presentation we first describe a randomized procedure and then derandomize it.

The algorithm handles each inner segment in $SK_j(F)$ separately employing one of two procedures depending on whether the segment contains cycles or not.
Let us introduce necessary notation.
$S$~is the segment in question, $l$ and $r$ its endpoints and $c$ its length determined by $s_F$.
Since $c$ is fixed, prices of individual edges in $S$ are only relevant for those buyers from $B_{j,F}$ who are involved in $S$ (denoted $B_S$).
If other buyers' paths pass through $S$, they will pay $c$ (the price for a shortest $l$-$r$ path).
For each buyer $i \in B_{j,F}$ we assume (without loss of generality) that $u_i$ is the endpoint lying inside $S$. 
As far as approximate revenue is concerned, fixing a pricing strategy for $F$ determines the distances between all segments' endpoints in $F$.
Using the corresponding metric (denoted $dist_{s_F}$) for each $i \in B_{j,F}$ we define $b_{i,l}$ as $b_i-\min \set{dist_{s_F}(l, v'_i), dist_{s_F}(l, v''_i)}$ where $v'_i$ and $v''_i$ are the endpoints of the other segment buyer $i$ is involved in.
If no such segment exists, $v'_i = v''_i = \repr_j(v_i)$.
Note that $b_{i,l}$ is an upper bound on how much buyer $i$ can spend on edges in $S$ if her chosen path passes though $l$.
In a similar way for $r$ we define $b_{i,r}$.

\subsubsection{Cyclic segments}
\label{subsubsect:cyclicSegments}
	
Since $G$ is a cactus, there is no path between $l$ and $r$ outside $S$.
Thus, $B_S$ can be split into two disjoint sets: $B_{S, l}$ and $B_{S, r}$ of buyers $i$ whose all $\repr_j(u_i)$-$\repr_j(v_i)$ paths contain respectively $l$ or $r$.
Prices for edges in $S$ are chosen equiprobably from the following four solutions:

\begin{wrapfigure}[11]{r}{0.4\textwidth}
	\vspace{-0.5cm}
	\includegraphics[clip=true,width=0.4\textwidth]{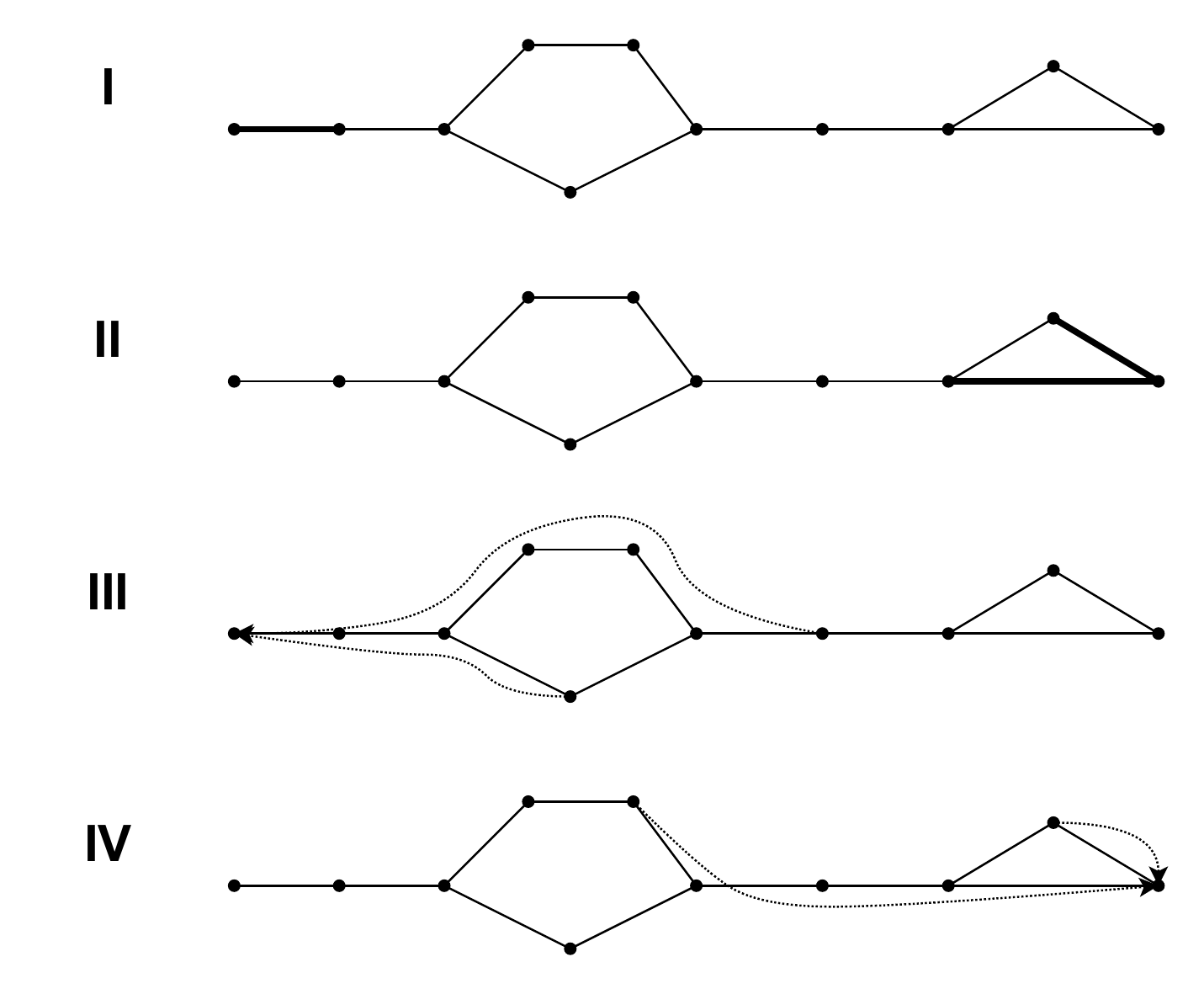}
	\end{wrapfigure}

\textbf{\nth{1}:}
Set the price of all edges in $S$ adjacent to $l$ to $c$ and of the others to zero.
This way the buyers from $B_{S, l}$ will be forced to pay $c$ for edges in $S$, but the ones from $B_{S, r}$ will not have to pay anything.

\textbf{\nth{2}:}
Set the price of all edges in $S$ adjacent to $r$ to $c$ and of the others to zero.

\textbf{\nth{3}:}
Create a rooted instance from $S$ where each buyer $i \in B_{S, l}$ is represented as $\left(l, u_i, b_{i,l} \right)$ and the depth of $r$ is set to $c$.
Then, apply its optimal solution, which by Corollary \ref{coll:fixedDepths} can be found in polynomial time.
As $b_{i,l}$ is an upper bound on how much buyer $i$ can spend in $S$, optimum revenue of the rooted instance is an upper bound on the contribution of selling edges in $S$ to $B_{S, l}$ towards approximate revenue.
Thanks to the first two solutions in the other segment she is involved in, with probability at least $\frac{1}{4}$ she will be able to spend exactly $b_{i,l}$ .
Thus, in expectation, buyers from $B_{S, l}$ spend in $S$ at most four times less than in an optimal solution implementing $s_F$.

\textbf{\nth{4}:}
Symmetrically to the previous solution, for the buyers from $B_{S, r}$.

\subsubsection{Acyclic segments}
\label{subsubsect:acyclicSegments}

By construction of the segments, if $S$ has no cycles, it must be a path.
However, a buyer from $B_S$ can potentially desire multiple paths, some passing though $l$ and some though $r$.
The algorithm chooses one of the following solutions each with probability $\frac{1}{4}$.

\begin{wrapfigure}{r}{0.4\textwidth}
\vspace{-0.5cm}
	\includegraphics[clip = true,width=0.4\textwidth]{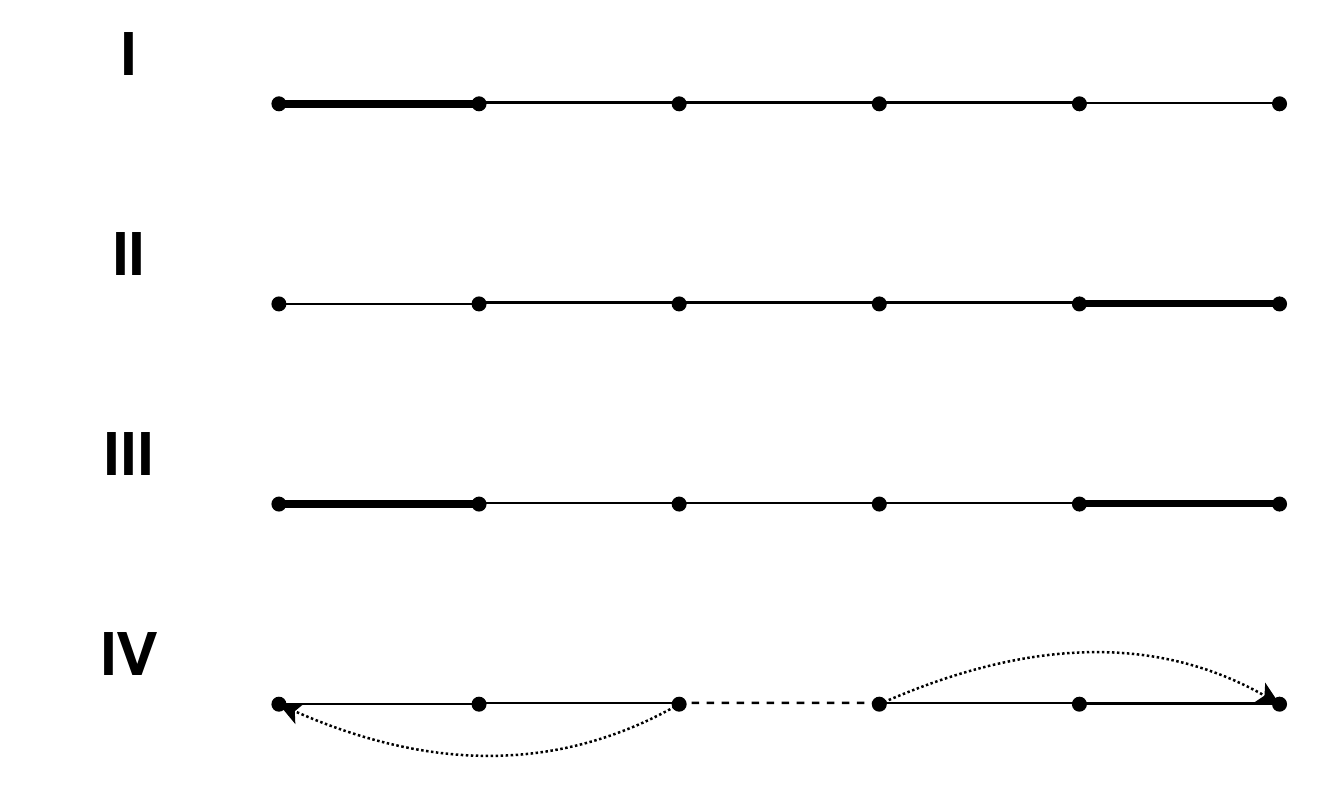}
	\end{wrapfigure}

\textbf{\nth{1} and \nth{2}:}
As in the case of a cyclic segment, the edge adjacent to $l$ or $r$ is assigned cost $c$ and the rest zero.

\textbf{\nth{3}:}
The prices of the left- and rightmost edges are set to $\frac{c}{2}$.
If $S$ has only one edge, its cost is set to $c$.
Each buyer $i \in B_S$ whose $b_{i,l}$ or $b_{i,r}$ is at least $\frac{c}{2}$ will pay for edges in $S$ at least half of what she pays in any solution (she cannot pay more than $c$).
That is if she does not pay anything in the other segment she's involved in, which happens with probability at least $\frac{1}{4}$.
Otherwise, she may refrain from buying any path at all, because it may be too expensive.

\textbf{\nth{4}:}
We restrict our attention to buyers in $i \in B_S$ for which $\max \set{b_{i,l}, b_{i,r}} < \frac{c}{2}$ (denoted $B'_S$).
For any prices of edges in $S$ implementing $s_F$, sets of vertices reachable inside $S$ from $l$ and $r$ with cost smaller than $\frac{c}{2}$ must be disjoint.
This splits $B'_S$ into disjoint sets $B'_{S, l}$ and $B'_{S, r}$ of buyers $i$ whose vertices $u_i$ belong to respectively the former or the latter set.
Note that edges sold to $B'_{S,l}$ and $B'_{S, r}$ are disjoint and separated by at least one edge not sold to anyone in $B'_S$ -- a \textbf{pivot edge}.

Our algorithm iterates over each edge $e$ in $S$ and finds a near-optimal price assignment among these with $e$ being a pivot edge.
First, $S$ is split into two parts: to the left of $e$ (with $l$) and to the right of it (with $r$).
Each buyer $i \in B'_S$ is assigned to $B'_{S, l}$ or $B'_{S, r}$ based on the part $u_i$ belongs to.
Then, a rooted instance for the left part with buyers $\left(l, u_i, b_{i,l} \right)$ for each $i \in B'_{S, l}$ is created.
For the right part, the algorithm creates an analogous instance.
Their solutions constitute a valid solution for $S$ -- buyers' budgets are smaller than $\frac{c}{2}$ and so are lengths of both parts of $S$.
Their endpoints are also endpoints of $e$, so this makes $e$ a pivot edge.
It's price will be positive to keep the cost of $S$ equal $c$.

Both rooted instances assume maximal possible budgets of buyers from $B'_S$ for edges in $S$ (for the fixed pivot edge).
Thus, the total revenue from them is not lower than what buyers in $B'_S$ could ever pay for edges in $S$ with $e$ being a pivot edge.
Like previously, each buyer is able to pay the assumed amount of money with probability at least $\frac{1}{4}$, so in expectation a quarter of this revenue is obtained by the randomized solution.
Since each price assignment results in at least one pivot edge, taking maximum over all possible pivot edges results in approximate revenue from $B'_S$ in $S$ being in expectation at most four times smaller than in an optimal solution implementing $s_F$.


\begin{lemma}
	\label{lemma:singleFragment}
	Let $p$ and $q$ be two price assignments implementing $s_F$ such that $q$ maximizes $\rev_{j, F, s_F}$ and $p$ is the result of the randomized algorithm for pricing the skeleton edges of a single fragment.
	Then, the following inequality holds:
	$$ \Exp{ \rev_{j, F, s_F}(p) } \geq \frac{1}{\CmyScoreToMaxScore} \rev_{j, F, s_F}(q)$$
	\end{lemma}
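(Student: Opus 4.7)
The plan is to analyze each inner segment of $\mname{SK}_j(F)$ independently, exploiting the fact that the randomized algorithm uses independent random bits for each segment. Writing $R_S(\cdot)$ for the total paid by $B_{j,F}$ on edges inside an inner segment $S$ and $R_{\mathrm{ext}}(\cdot)$ for their total payment on the outer extensions of $\mname{SK}_j(F)$, we have $\rev_{j,F,s_F}(\cdot) = \sum_{S} R_S(\cdot) + R_{\mathrm{ext}}(\cdot)$, so it suffices to control each $R_S$ up to a constant and verify that the outer-extension contribution is absorbed by the same factor.

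I would first argue that $R_S(q)$ is upper-bounded by the relevant rooted-instance optimum invoked by the algorithm inside $S$. The key observation is that the restriction of $q$ to $S$ is a feasible solution of that rooted instance: $q$ implements $s_F$, so the cost of $S$ equals $c$, matching the depth imposed on the non-root endpoint; moreover each buyer $i \in B_{S,l}$ pays at most $b_{i,l}$ in $S$ under $q$, because her remaining budget must cover the minimum $s_F$-distance outside $S$, which equals $b_i - b_{i,l}$. I expect this budget bookkeeping to be the main obstacle, since it has to handle the case when the other segment $S^{*}$ does not exist, the acyclic case in which a buyer may exit $S$ through either endpoint, and the splitting of $B'_S$ around a pivot edge in solution~4.

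With that surrogate upper bound in hand, the per-segment analysis proceeds as follows. For a cyclic $S$, solution~3 (chosen with probability $\tfrac{1}{4}$) attains the rooted-instance optimum for $B_{S,l}$; conditional on it, a buyer $i \in B_{S,l}$ actually pays her allocated amount $y_i \le b_{i,l}$ whenever $S^{*}$ gives her free passage, which happens with probability at least $\tfrac{1}{4}$ via solutions~1 or~2 of $S^{*}$, independently of $S$'s random bits. Together with the symmetric statement for $B_{S,r}$, this yields $\Exp{R_S(p)} \ge \tfrac{1}{16} R_S(q)$. For an acyclic $S$, I would split $B_S$ into the buyers with $\max\{b_{i,l}, b_{i,r}\} \ge \tfrac{c}{2}$ and those in $B'_S$. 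For the former, solution~3 (probability $\tfrac{1}{4}$) together with cooperation (probability $\tfrac{1}{4}$) charges $\tfrac{c}{2}$, at least half of the $\le c$ that any $q$ can ever extract, giving factor $\tfrac{1}{32}$; for $B'_S$, since every budget is below $\tfrac{c}{2}$ any $q$ implementing $s_F$ induces at least one pivot edge, so solution~4's iteration finds a pivot whose paired rooted instances upper-bound $R_S^{B'_S}(q)$, giving factor $\tfrac{1}{16}$ after including cooperation. The worse of the two sub-bounds gives $\Exp{R_S(p)} \ge \tfrac{1}{32} R_S(q)$ in either case.

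Combining the per-segment bounds by linearity of expectation yields $\Exp{\sum_S R_S(p)} \ge \tfrac{1}{32} \sum_S R_S(q)$. For the outer-extension contribution, a buyer $i$ who purchases under $q$ also purchases under $p$ whenever every inner segment on her path chooses its cooperative solution; since she is involved in at most two inner segments, this event has probability at least $\tfrac{1}{16}$, so $\Exp{R_{\mathrm{ext}}(p)} \ge \tfrac{1}{16} R_{\mathrm{ext}}(q)$. Summing and using $\tfrac{1}{32} \le \tfrac{1}{16}$ gives $\Exp{\rev_{j,F,s_F}(p)} \ge \tfrac{1}{32} \rev_{j,F,s_F}(q)$, which is exactly the claimed bound.
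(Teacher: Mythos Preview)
Your overall strategy matches the paper's: split the approximate revenue into a ``whole-segment'' part and an ``involved-buyer'' part, bound the former via the cooperative event (options~1/2 in the two involved segments, probability $\ge \tfrac{1}{16}$), and bound the latter via options~3/4 together with the rooted-instance optima. The per-segment case analysis you give for cyclic and acyclic segments is essentially the paper's Lemmas~\ref{lemma:thirdCyclic}--\ref{lemma:fourthAcyclic}.

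There is, however, a genuine gap in your decomposition. You set $R_S(\cdot)$ to be the total paid by \emph{all} of $B_{j,F}$ on edges of $S$, and then claim that $R_S(q)$ is bounded by the optimum of the rooted instance associated with $S$. That rooted instance contains only buyers from $B_S$ (those \emph{involved} in $S$); it says nothing about buyers who merely traverse $S$ from one endpoint to the other and pay $c$ there. Consequently your per-segment bound only controls the involved-buyer share of $R_S$, while your $R_{\mathrm{ext}}$ argument only controls the outer-extension payments. The contribution to $R_S(q)$ from buyers who cross $S$ end-to-end---which can be arbitrarily large compared to the involved-buyer share---is not covered by either piece of your analysis, so the final inequality does not follow from what you wrote.

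The fix is exactly what the paper does: decompose $\rev_{j,F,s_F}$ as $\inv + \full$, where $\inv$ is what involved buyers pay inside their own segments and $\full$ is what \emph{any} buyer pays for segments (inner \emph{or} outer) traversed endpoint-to-endpoint. Your cooperative-event argument then applies verbatim to all of $\full$, not just to outer extensions, because under that event the buyer's cheapest path has full-segment cost exactly $\dist_{s_F}(u'_i,v'_i)$, the same as under $q$. With this adjusted decomposition your bounds $\tfrac{1}{16}$ for $\full$ and $\tfrac{1}{32}$ for $\inv$ combine to give the claimed $\tfrac{1}{32}$.
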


We prove this lemma by splitting the revenue generated by buyers into two categories: revenue generated by selling whole segments from one endpoint to another and by selling parts of segments.
In a given segment, the latter is generated only by buyers involved in it.
The third and fourth options guarantee that those buyers in expectation contribute to the approximate revenue at least $\frac{1}{32}$ of their contribution under prices $q$.
The first two guarantee that with probability at least $\frac{1}{16}$ a~path optimal for a buyer under prices $q$ is not more expensive under prices $p$.
This happens when she does not have to pay anything in the segments she is involved in.
Thus, in that case she would pay for whole segments as much as she would under prices $q$.
Hence, for selling the whole segments prices $p$ generate in expectation at $\frac{1}{16}$ of what the optimal solution does.
A more detailed proof of this lemma can be found in Appendix~\ref{subsect:singleFragment_apx}.

By Lemma \ref{lemma:partSegmentation} there are $\OO{\sqrt{\log m}}$ segments in $F$, so the algorithm can iterate over all $4^{\OO{\sqrt{\log m}}}$ possible solutions of the randomized procedure and pick the best one.

\begin{corollary}
	\label{coll:detSingleFragment}
	For $F \in D_j$, let $q$ be a price vector implementing $s_F$ which maximizes $\rev_{j, F, s_F}$.
	There exists a deterministic polynomial procedure which finds prices $p$ implementing $s_F$ such that:
	$$\rev_{j, F, s_F}(p) \geq \frac{1}{\CmyScoreToMaxScore} \rev_{j, F, s_F}(q)$$
	\end{corollary}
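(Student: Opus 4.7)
The plan is to derandomize the procedure of Lemma~\ref{lemma:singleFragment} by brute-force enumeration of its random choices. The randomized procedure of Section~\ref{subsect:singleFragment} treats each inner segment $S$ of $\mname{SK}_j(F)$ independently, picking uniformly one of four options; for acyclic segments the fourth option additionally iterates over a pivot edge, but that iteration is a deterministic polynomial-time subroutine and does not contribute to the outer branching. Once a discrete option is chosen in every segment, the edge prices are determined: options~1 and~2 set them explicitly, while options~3 and~4 reduce to rooted instances with fixed depths, solved in polynomial time by Lemma~\ref{lemma:rootedCase}. By construction every such outcome yields prices that implement $s_F$, because each option produces a segment of length equal to the value $p_{s_F,S}$ prescribed by the strategy.

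The deterministic algorithm enumerates every tuple of per-segment choices, builds the corresponding price vector $p$, evaluates $\rev_{j,F,s_F}(p)$ according to Definition~\ref{def:approxRevenue}, and outputs the best tuple. By Lemma~\ref{lemma:partSegmentation} there are $\OO{k}$ inner segments in $\mname{SK}_j(F)$, and with the choice $k = \ceil{\log^{1/2} m}$ fixed in Section~\ref{subsect:decomposition} the number of tuples is $4^{\OO{\sqrt{\log m}}} = 2^{\OO{\sqrt{\log m}}}$, which is $m^{o(1)}$ and in particular polynomial in $m$. Since each tuple is processed in polynomial time (the heaviest step per tuple being the rooted-instance solver of Lemma~\ref{lemma:rootedCase} invoked inside options~3 and~4), the overall procedure runs in polynomial time.

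Correctness is then a one-line averaging argument: Lemma~\ref{lemma:singleFragment} asserts that under the uniform product distribution on the tuples the expectation of $\rev_{j,F,s_F}(p)$ is at least $\frac{1}{\CmyScoreToMaxScore}\rev_{j,F,s_F}(q)$, so at least one tuple attains a value no less than this expectation, and the exhaustive search finds it. The only delicate point, and the main potential obstacle, is the polynomiality check: the bound $4^{\OO{\sqrt{\log m}}}$ is subpolynomial precisely because of the choice $k = \ceil{\log^{1/2} m}$, which is exactly the motivation, noted earlier in the excerpt, for not taking $k$ any larger.
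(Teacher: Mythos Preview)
Your proposal is correct and follows essentially the same approach as the paper: enumerate all $4^{\OO{\sqrt{\log m}}}$ outcomes of the randomized procedure (using Lemma~\ref{lemma:partSegmentation} to bound the number of inner segments and the choice $k=\ceil{\log^{1/2} m}$ to ensure polynomiality), then apply the averaging argument to Lemma~\ref{lemma:singleFragment}. Your write-up is in fact more explicit than the paper's one-line justification, correctly isolating that the pivot-edge iteration inside option~4 is deterministic and does not contribute to the outer branching.
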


\subsection{Constructing a global price assignment}
\label{subsect:wholeGraph}

After applying the procedure from the previous section to each fragment $F \in D_j$ and every pricing strategy, the algorithm merges the resulting partial solutions into a global price assignment for~$\mname{SK}_j$.
The procedure presented in this section finds a valid combination of strategies maximizing total approximate revenue (score) obtained by respective price assignments in individual fragments.

\begin{definition}
\label{def:score}
Let $F$ be a fragment in $D_j$, $s$ a pricing strategy for $F$ and $p$ the prices constructed by the procedure from Section \ref{subsect:singleFragment} for $F$ and $s$.
Then, the \term{score} of $s$ equals $\rev_{j, F, s}(p)$.
The \term{score} of a valid combination of pricing strategies for a set of fragments is defined as a sum of respective scores.
\end{definition}

The only assumptions a pricing strategy for a given fragment $F \in D_j$ makes on the costs of edges outside $F$ is by setting intervals for lengths of outer extensions of $\mname{SK}_j(F)$.
Thus, to solve dependencies between fragments, it is enough to consider the outer extensions.
By definition an outer extension $o$ in $\mname{SK}_j(F)$ is a path outside $F$ between two border vertices of $F$, denoted $u$ and $v$.
Since $F$ is connected, it contains another $u$-$v$ path.
Thus, $u$ and $v$ lie on a simple cycle formed by $o$ and some of the inner segments of $\mname{SK}_j(F)$.
By giving bounds on the length of $o$ and setting the costs of inner segments, a pricing strategy for $F$ in fact imposes constraints on its length.
Thus, by controlling lengths of cycles split between multiple fragments we can solve the dependencies between fragments.

\begin{wrapfigure}{r}{0.4\textwidth}
\includegraphics[width=0.4\textwidth]{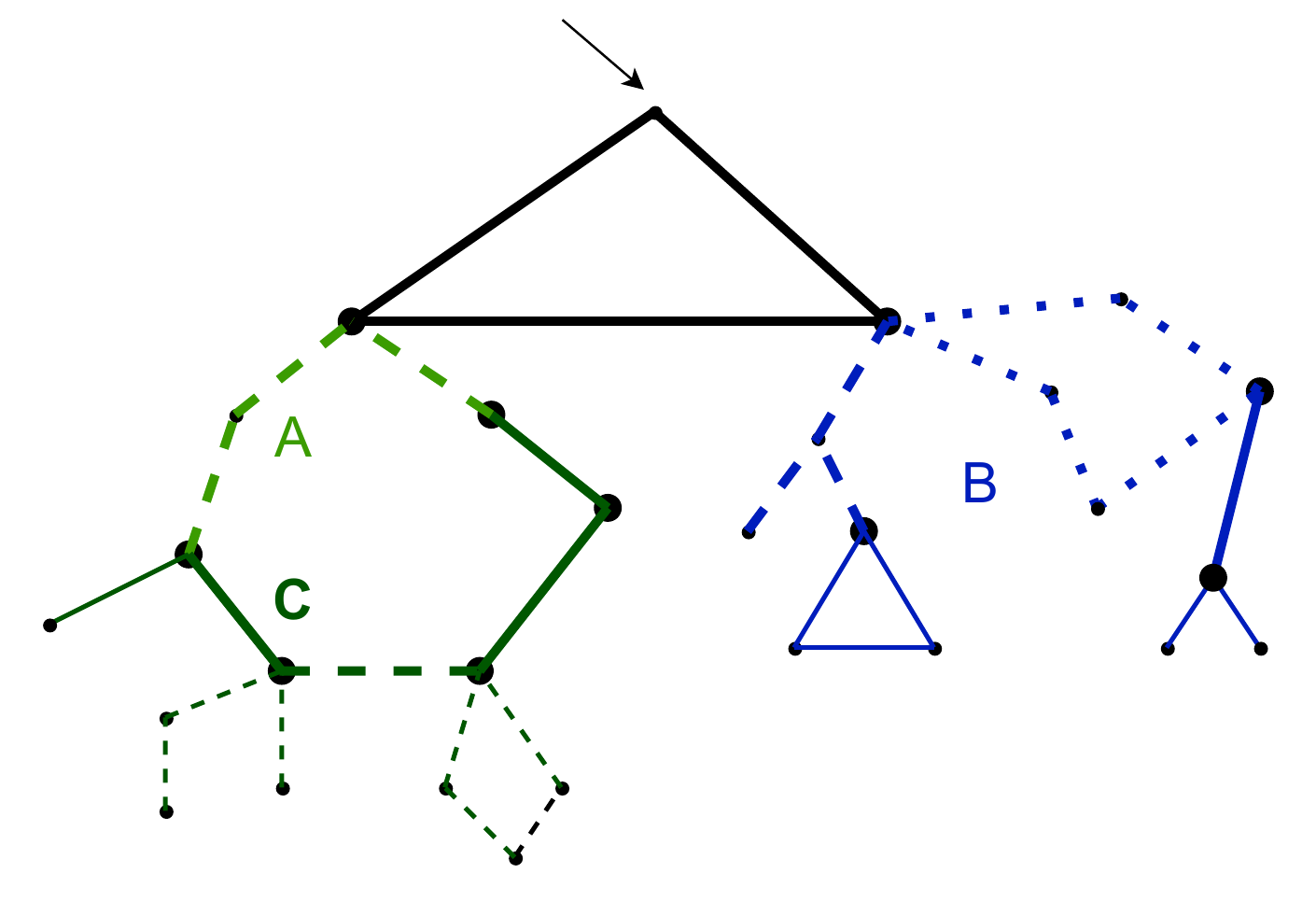}
\caption{
The whole graph forms a closed subproblem, whose topmost vertex is the root of $G$ indicated by the arrow.
It has two closed child subproblems -- $A$ (dark and light green) and $B$ (blue).
The former has exactly one child subproblem -- $C$ (light green), which is an open subproblem.
}
\end{wrapfigure}

A score-maximizing valid combination of pricing strategies is found by a recursive procedure, whose subproblems are sets of fragments in $D_j$.
For each such $S \subseteq D_j$ by $G_S$ we denote the graph formed by fragments in $S$.
It will fall into exactly one of the following categories:
\begin{itemize}[nosep]
	\item $G_S$ is an edge-disjoint union of subtree graphs of several biconnected components which share a topmost vertex.
	Then, $S$ is a \term{closed subproblem}.
	\item For a certain cycle $C$ that is split into multiple fragments, $G_S$ consists of all edges in $C$ except those contained in the same fragment as its topmost edges,
	and of the subtree graphs with topmost vertices in $C$, whose topmost edges don't belong to that fragment.
	Then, $S$ is an \term{open subproblem}.
\end{itemize}
For a closed subproblem $S$, $G_S$ and $G \setminus G_S$ do not share any cycles, so combinations of strategies for $S$ and $D_j \setminus S$ are independent.
Thus, a solution for $S$ is a single score-maximizing combination of strategies.
For an open subproblem the algorithm calculates multiple score-maximizing combinations for different assumptions about the costs of segments forming $C$.

\subsubsection{Open subproblems}
\label{subsubsect:open}

Let $u$ and $v$ be the two vertices shared by $G_S$ and $G \setminus G_S$.
There are two simple $u$-$v$ paths: the \term{upper path} in $G \setminus G_S$ and the \term{lower path} contained in $G_S$.
Since $C$ is the only cycle shared between the two subgraphs, the upper path is the only part of $G \setminus G_S$ belonging to outer extensions of $\mname{SK}_j(F)$ for $F \in S$.
The same holds for the lower path, $G_S$ and $D_j \setminus S$.
Thus, the lengths of those paths determine the compatibility of strategy combinations for $S$ and $D_j \setminus S$.

\begin{lemma}
\label{lemma:polynomialLengths}
Let $\mathcal{L}$ be the set of possible lengths of any simple path between border vertices under prices satisfying the rounding lemma (\ref{lemma:rounding}).
There are only polynomially many elements of $\mathcal{L}$.
\end{lemma}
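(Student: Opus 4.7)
The plan is to show that the length of every simple path between border vertices is an integer multiple of one common small scale, and that the multiplier is polynomially bounded.

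First, I would restrict attention to paths inside $\mname{SK}_j$: by Lemma \ref{lemma:skeletonPath}, any simple path between two border vertices (both skeleton vertices, and in fact segment endpoints) uses only skeleton edges. Next I use the structure of segments: every interior vertex of a segment is non-border and has degree exactly $2$ in $\mname{SK}_j$, so a simple path that enters a segment at an endpoint cannot leave at an interior vertex and must continue until it reaches the other endpoint. Hence the path decomposes as an edge-disjoint concatenation of endpoint-to-endpoint traversals, one per segment visited. For a path-segment this traversal is unique and has length equal to the segment's length, which lies in $P$ by Lemma \ref{lemma:rounding}; for a cycle-segment it is one of the two arcs, and I would invoke the rounding construction applied uniformly to all edges of the segment so that both arcs have lengths that are nonnegative integer multiples of $\frac{m b_{max}}{2^T}$, where $T$ is the largest exponent appearing in $P$.

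With this decomposition in hand, the counting is straightforward. Every element of $P$ equals $\frac{m b_{max}}{2^T} \cdot 2^{T-t}$ for some $0 \le t \le T = \OO{\log(m^2 |B_j|)}$. Since segments are edge-disjoint and $\mname{SK}_j$ has at most $m$ edges, a simple path traverses at most $m$ of them. Therefore the path length equals $\frac{m b_{max}}{2^T} \cdot K$ for a nonnegative integer $K \le m \cdot 2^T = \OO{m^3 |B_j|}$, giving $|\mathcal{L}| = \OO{m^3 n}$, which is polynomial in the input size.

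The main subtlety I foresee is step two for cycle-segments: Lemma \ref{lemma:rounding} as stated only controls the \emph{minimum} endpoint-to-endpoint distance inside a segment, so one has to verify (from the proof of that lemma in the appendix) that the other arc of every cycle-segment is quantised at the same granularity $\frac{m b_{max}}{2^T}$. Should the underlying construction rescale only the shortest arc of a cycle-segment rather than every edge of the segment, a trivial modification---rescaling all edges by the chosen factor $x$---preserves the revenue guarantee while making both arcs integer multiples of the required unit.
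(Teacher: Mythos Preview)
Your core argument is the paper's own: a simple path between border vertices decomposes into whole segments, its length is therefore a sum of at most $m$ elements of $P$, hence a multiple of $p_{\min}=\min(P\setminus\{0\})$ bounded above by $m^2 b_{\max}$. That is literally the entire proof in the appendix.

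The extra step you add for cycle-segments contains a real error. Rescaling \emph{all} edges of a segment by the single factor $x$ chosen so that the shorter arc lands in $P$ does not quantise the longer arc: if the arcs have lengths $\ell_1<\ell_2$ and $x$ is picked with $\ell_1 x\in P$, then $\ell_2 x=\tfrac{\ell_2}{\ell_1}\cdot\ell_1 x$ is an integer multiple of $m b_{\max}/2^T$ only when $\ell_2/\ell_1$ happens to be a dyadic rational. (A correct patch would be to round each arc separately down to an element of $P$; this only lowers prices, so the revenue bound of Lemma~\ref{lemma:rounding} survives.)

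That said, the concern is moot for how $\mathcal{L}$ is actually used. In Section~\ref{subsubsect:open} the only paths whose lengths are enumerated are arcs (and sub-arcs) of a simple cycle $C$ of $G$ split between fragments. Every edge of such an arc lies on $C$ and, because $G$ is a cactus, on no other simple cycle; hence any segment traversed by the arc must be a path-segment---an alternate endpoint-to-endpoint route inside the segment would put those edges on a second cycle. So every segment traversal contributes exactly the segment's length, which lies in $P$, and the paper's two-line argument applies without your proposed modification.
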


The algorithm iterates over all possible pairs of lengths of the upper and lower path.
For each of them it constructs a score-maximizing combination of strategies.
By Lemma \ref{lemma:polynomialLengths}, the number of possibilities ($\abs{\mathcal{L}}^2$) is polynomial.
Its proof is in Appendix \ref{subsect:wholeGraph_apx}.

Let us denote current lengths of the upper and lower path as respectively $up$ and $low$.
Let the \term{closing fragment} be the fragment in $D_j$ containing the upper path and the \term{lower fragments} the ones forming the lower path.
We denote them as $F_1, F_2, \dots F_q$ in the order of appearance on the lower path from $u$ to $v$.
The remaining fragments do not intersect with $C$ and will be processed by recursive calls.
Corresponding subproblems are formed by the connected components of $G_S$ remaining after removing the lower fragments.
Let $H$ be such a component.
Naturally, it's a union of some fragments from $S_H \subseteq S$.
If it doesn't share any cycles with the lower fragments, $S_H$ must form a closed subproblem.
Otherwise, let $C'$ be a cycle shared between $H$ and the lower fragments.
Note that at least one of the topmost edges of $C'$ must belong to upper fragments.
By construction of the decomposition, both topmost edges must belong to the same lower fragment, denoted $F_a$.
Thus, exactly one lower fragment is present in $C'$ and no other cycle is shared between $H$ and the lower fragments.
This makes $S_H$ an open subproblem and $F_a$ its closing fragment.

\textbf{Closed child subproblems:}
For each such subproblem our procedure returns a valid score-maximizing combination of strategies, which is applied verbatim to fragments forming it.

\textbf{Open child subproblems:}
For each open subproblem our procedure returns up to $\abs{\mathcal{L}}^2$ strategy combinations -- each one maximizing the score for certain lengths of its upper and lower path.
For a fixed strategy in $F_a$ the former is already determined and latter must be within a certain interval.
Since no other fragment from $D_j \setminus S_H$is present in $C'$, those are the only requirements on the solution to $S_H$.
Thus, for each possible pricing strategy in $F_a$, the algorithm finds the best compatible solution for $S_H$ and adds its score to the score of the pricing strategy.
This way the algorithm extends pricing strategies for $F_a$ to cover $H$.  
  
\textbf{Lower fragments:}
We have reduced the problem of solving $S$ to finding a valid score-maximizing combination of strategies for the lower fragments.
Each of them has exactly one outer extension which is still relevant -- the one forming $C$.
All the others are lower paths of open child subproblems and have already been handled.
Thus, it is enough to find a combination of strategies which makes the lower path's length equal $low$ and is compatible with the upper path costing $up$.
Hence, for each lower fragment we consider only pricing strategies compatible with the length of $C$ being equal $low + up$.
It remains to choose a score-maximizing combination of those strategies such that the costs of segments forming the lower path sum up to $low$, which is done using dynamic programming inspired by the knapsack problem.

Let $u_0, u_1, \dots u_q$ be such vertices that $u_0 = u$, $u_q = v$ and for the other $i$'s $u_i$ is shared by $F_i$ and $F_{i+1}$.
By $score_{up+low, i, l}$ we denote the score of the strategy for $F_i$, under which the inner segments forming the $u_{i-1}$-$u_i$ path have total cost $l$ and the interval for the outer extension consisting of the rest of $C$ contains $up + low - l$.
If there is no such strategy, we set $score_{up + low, i,l} = - \infty$.
For each $l \in \mathcal{L}$ and $i \leq q$, let $dp_{up + low, i, l}$ be the maximum score of a combination of strategies for segments $F_1, F_2, \dots F_i$, which sets the cost of the $u_0$-$u_i$ path contained in the lower path to $l$. 
Our algorithm calculates the values of $dp$ using the following formula:
$$ dp_{up + low, i, l} = \max_{d \in \mathcal{L}} score_{up + low, i,d} + dp_{up + low, i - 1, l - d}$$
Note that $dp_{up + low, q, low}$ is the maximum score of any valid combination of pricing strategies for fragments in $S$ with the lower path of length $low$ and the upper of length $up$.
Along with the values of $dp$, our algorithm stores an optimal combination of strategies.

\subsubsection{Closed subproblems}
\label{subsubsect:closed}

Let $v$ be the topmost vertex of $G_S$.
Since the fragments in $S$ containing $v$ do not share any other vertex, their skeletons do not overlap and their pricing strategies may be chosen independently.
Let us consider the connected components of $G_S$ resulting from removing those fragments.
Similarly as in the case of an open subproblem, each such component forms either an open or closed child subproblem, which is solved by recursive calls to respective procedures.
Solutions to the closed subproblems are applied verbatim.
For each open child subproblem one of the fragments containing $v$ is the closing fragment.
Thus, they are processed in the same way as previously: with each pricing strategy for a fragment containing $v$ we associate the score-maximizing compatible solution to each subproblem, whose closing fragment it is.

For each fragment in $S$ containing $v$ the algorithm iterates over all possible strategies for it and chooses one with the greatest score (including the scores from respective child subproblems).
This determines a valid score-maximizing combination of pricing strategies for all fragments in $S$.

The whole graph is a closed subproblem, a solution to which defines the global solution.

\subsection{Summing up}

The above algorithm constructed a price assignment for all skeleton edges.
In this section we show that its revenue is indeed only a constant factor away from $\mname{SKOPT}_j$.

First, we have restricted our attention to prices, for which costs of inner segments on $j$-th level belong to the set $P$ from Lemma~\ref{lemma:rounding}.
Let $Q_j$ denote the set of all such price assignments.
Then, for every $F \in D_j$ and every possible strategy $s_F$ the algorithm found prices $p_{s_F}$ achieving approximate revenue at most $32$ times smaller than the maximal approximate revenue under $s_F$ (Corollary \ref{coll:detSingleFragment}).
In the end our algorithm constructs a global solution $p$ which corresponds to a score-maximizing combination of strategies.
Thus, the total approximate revenue of $p$ is at most $32$ times smaller then the maximal approximate revenue.
In the following $s_{q,F}$ denotes the strategy implemented by prices $q$ in $F$ -- note each $q \in Q_j$ implements exactly one combination of strategies.

\begin{equation}
\label{math:scoreLowerBound}
\sum_{F \in D_j} \rev_{j, F, s_{p,F}}(p) \geq \frac{1}{\CmyScoreToMaxScore} \max_{q \in Q_j} \sum_{F \in D_j} \rev_{j, F, s_{q,F}}(q)
\end{equation}

By Lemma \ref{lemma:outerOverpricing} high score (approximate revenue) of the combination of pricing strategies results in high revenue.
The following lemma provides a lower bound for approximate revenue.
A full proof can be found in Appendix~\ref{subsect:SummingUp_apx}.
The observation behind it is that the main difference between the approximate revenue and the real one is overestimating the price of outer extensions by a multiplicative factor of $2$ or a relatively small additive factor.
Thus, by decreasing the optimal prices by those factors, we can guarantee this overestimation not to make the paths too expensive for buyers who generate revenue in the optimal solution.

\begin{lemma}
\label{lemma:maxScoreLowerBound}
There exists a price assignment $q \in Q_j$ with total approximate revenue at most $\CrevToScore$ times smaller than the maximal revenue obtained by prices in $Q_j$.
\end{lemma}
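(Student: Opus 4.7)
The plan is to take $q^* \in Q_j$ attaining $R := \max_{q' \in Q_j} \rev(q')$, halve every edge price of $q^*$, and zero out the edges of any inner segment whose resulting length falls below the smallest positive element of $P$; call the resulting price vector $q$. Membership $q \in Q_j$ is immediate since $P$ is geometric with ratio~$2$, and the zeroing contributes an additive cost of at most $b_{max}/(\Crounding m|B_j|)$ per inner segment, hence at most $b_{max}/(\Crounding |B_j|)$ per buyer's path of at most $m$ segments.

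The argument then rests on the following inequality, exploiting that $P'$ is also geometric with ratio~$2$ and its smallest positive element $\delta'$ satisfies $\delta' \leq b_{max}/(\Crounding m |B_j|)$: for every outer extension $o$ of every fragment $F$, the overestimate $r_{s_{q,F},o}$ satisfies $r_{s_{q,F},o} \leq 2\ell_o + \delta'$, where $\ell_o$ is the actual length of $o$ under $q$. Summing over the at most $k = \OO{\sqrt{\log m}}$ outer extensions on any $\repr_j(u_i)$-$\repr_j(v_i)$ path (Lemma~\ref{lemma:partSegmentation}), the approximate cost of that path under $(q, s_{q,F})$ is at most twice its actual cost under $q$ plus an additive $k\delta'$. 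Combining with the halving, this is bounded above by the $q^*$-cost of the path plus $\OO{b_{max}/|B_j|}$. Hence every buyer $i$ whose $q^*$-cost $c_i^*$ satisfies $c_i^* \leq b_i - \OO{b_{max}/|B_j|}$ still pays under approximate revenue, contributing at least her actual cost under $q$, i.e., at least $c_i^*/2$ up to a negligible rounding error. Borderline buyers whose approximate cost slips slightly above $b_i$ are rescued by an additional uniform shrink of $q$ by a factor $(1-1/|B_j|)$, which absorbs the budget slack while losing only a negligible fraction of revenue.

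Summing contributions over all buyers yields $\sum_{F \in D_j} \rev_{j,F,s_{q,F}}(q) \geq R/2 - b_{max}/\Crounding$, and combining with the bound $R \geq \mname{SKOPT}_j/4 \geq b_{max}/4$ from Lemma~\ref{lemma:rounding} gives $\sum_{F \in D_j} \rev_{j,F,s_{q,F}}(q) \geq R/2 - R/256 \geq R/\CrevToScore = R/4$. The main obstacle I anticipate is the careful bookkeeping of the two distinct additive errors -- from rounding small inner segments to zero at the bottom of $P$, and from overestimating small outer extensions at the bottom of $P'$ -- and verifying that the uniform-shrinking step indeed absorbs the budget slack for every paying buyer without losing too much revenue in aggregate. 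This is precisely what the denominators $\Crounding m^2 |B_j|$ and $\Crounding m^3 |B_j|$ baked into the definitions of $P$ and $P'$ are engineered to handle, with the constant $\Crounding = 1024$ providing ample slack between $R/2$ and the target $R/4$.
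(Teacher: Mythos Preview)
Your construction—halve every edge price of the optimizer $q^*\in Q_j$ and then zero out segments whose length drops below $p_{\min}$—coincides with the paper's. The gap is the ``additional uniform shrink of $q$ by a factor $(1-1/|B_j|)$''. After such a shrink the segment lengths are $(1-1/|B_j|)$ times an element of $P$, hence no longer in $P$, so the resulting vector is not in $Q_j$; but the lemma demands a vector in $Q_j$. Even if membership were ignored, the shrink does not obviously help: the overestimates $r_{s_{q,F},o}$ are taken from the fixed grid $P'$ and do not contract continuously with the prices.

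The reason you felt the need to shrink is that your intermediate bound $r_{s_{q,F},o}\le 2\ell_o+\delta'$ leaves an additive slack in the approximate cost, which can push it past $b_i$ for a buyer whose $q^*$-cost equals $b_i$. The paper eliminates this slack by comparing $r_{s_{q,F},o}$ directly with the \emph{original} length $d'_o$ of the outer extension under $q^*$, rather than with $2\ell_o$. Since halving gives $\ell_o\le d'_o/2$, and under $q^*$ one has $l_{s_{q^*,F},o}\ge d'_o/2$, it follows that $\ell_o\le l_{s_{q^*,F},o}$ and hence $r_{s_{q,F},o}\le l_{s_{q^*,F},o}<d'_o$ (with the case $d'_o<2p_{\min}$ forcing $\ell_o=0$ and $r_{s_{q,F},o}=0$). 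Thus the approximate cost of any path under $(q,s_{q,F})$ is at most its cost under $q^*$, with no additive error, so every buyer who paid under $q^*$ also pays in approximate revenue. The lower bound on her payment then follows from the per-edge inequality $q(e)\ge q^*(e)/2-p_{\min}$, yielding $\sum_F \rev_{j,F,s_{q,F}}(q)\ge R/2-b_{\max}/\Crounding\ge R/4$, exactly your target but reached without the illegitimate shrink.
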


By combining Inequality \ref{math:scoreLowerBound}, Lemma \ref{lemma:maxScoreLowerBound} and the rounding lemma (\ref{lemma:rounding}), we achieve the following bound on the score of $p$.
$$ \sum_{F \in D_j} \rev_{j, F, s_{p,F}}(p) \geq 
\frac{1}{\CmyScoreToMaxScore} \max_{q \in Q_j} \sum_{F \in D_j} \rev_{j, F, s_{q,F}}(q) \geq
\frac{1}{\CrevToScore \cdot \CmyScoreToMaxScore} \max_{q' \in Q} \rev_j(q') \geq
\frac{1}{4 \cdot 32 \cdot 4} \mname{SKOPT}_j$$

Obviously, $\mname{SKOPT}_j \geq b_{max}$, so by Lemma \ref{lemma:outerOverpricing}:
$$ \rev_j(p) \geq \frac{1}{4} \sum_{F \in D_j} \rev_{j, F, s_{p,F}}(p) \geq \frac{1}{\CrevToOpt} \mname{SKOPT}_j $$

By proving this inequality we have shown that the polynomial algorithm for the skeleton subproblem achieves constant approximation ratio.

\let\Crounding\undefined
\let\CrevToMyScore\undefined
\let\CrevToOpt\undefined
\let\CmyScoreToMaxScore\undefined
\let\CrevToScore\undefined
\let\CroundedRevToMyScore\undefined

\section{Concluding remarks}
\label{sect:conclusion}
In Sections \ref{sect:nonSkeleton} and \ref{sect:skeleton} we have presented polynomial time constant factor approximation algorithms for the non-skeleton and skeleton subproblems.
Thus, we have shown that prices achieving at least a constant fraction of optimal revenue can be found in polynomial time for each of the $L$ levels of decomposition.
Recall that by setting $k = \ceil{\log^{\frac{1}{2}}{m}}$ we ensure $L$ to be $\OO{\frac{\log m}{\log \log m}}$.
Hence, our polynomial algorithm for the tollbooth problem on cactus graphs yields an $\OO{\frac{\log m}{\log \log m}}$ approximation guarantee on revenue.

It remains an open question whether there exist polynomial time algorithms giving sublogarithmic guarantees on revenue for further generalizations of the tollbooth problem, for example for the cases where the underlying graphs are only assumed to have bounded treewidth.

\bibliography{bibliography}

\newpage
\appendix

\section{Omitted proofs and technical details}
\label{sect:details}
\subsection{Balanced decomposition}
\label{subsect:decomp_apx}

Here we present in full detail the procedure for constructing the balanced decomposition, which completes the proof of the following lemma.

\newtheorem*{lemma:decomposition}{Lemma \ref{lemma:decomposition}}

\begin{lemma:decomposition}
Consider a family of decompositions $D_1, D_2, \dots D_L$ of a cactus graph $G$ satisfying the following invariants for each valid $j$:
\begin{enumerate}[nosep]
\item Each fragment in $D_j$ is split into $\OO{k}$ fragments in $D_{j+1}$.
\item The maximal number of edges in a fragment forming $D_{j+1}$ is $\OMG{k}$ times smaller than in $D_j$.
\item Each fragment forming $D_j$ contains at most $\OO{k}$ border vertices of $j$-th level.
\item Each pair of associated edges belongs to the same fragment of $D_j$.
\item All fragments forming $D_j$ are connected subgraphs of $G$.
\end{enumerate}
For $k$ being an unbounded and nondecreasing function of $m$ (the number of edges in $G$), such a family can be found in polynomial time.
\end{lemma:decomposition}
	
	Decompositions from the above lemma are defined recursively with $D_1$ consisting of a single part -- the whole graph $G$.
	The procedure for obtaining $D_{j+1}$ from $D_j$ has two phases, the first one reduces the number of edges and the second the number of border vertices.
	
	\subsubsection{Reducing the number of edges}
	\label{subsubsec:reduceEdges}
	We devise a procedure for splitting each fragment in such a way that the size of subparts in bounded.
	As all fragments are treated in the same way, we describe an algorithm which for one of them (denoted $F = \left<V, E \right>$) finds a decomposition satisfying the following conditions: 
	\begin{itemize}
	\item The number of subparts is between $\flfrac{k}{4}$ and $k$.
	\item The number of edges in each subpart is at most $4 \cdot \cefrac{|E|}{k}$.
	\item Associated edges belong to the same subpart.
	\end{itemize}
	
	The main idea is to partition $F$ into connected subgraphs with the number of edges between $\cefrac{|E|}{k}$ and $ 4 \cdot \cefrac{|E|}{k}$.
	We allow a single exception, one of the subparts may have fewer edges.
	From this the bound on the number of subparts follows immediately.
	
	All cycles in $F$ contain a pair of associated edges.
	By erasing one of them from each cycle, we can obtain a tree.
	Let us call this tree $T$.
	The algorithm first greedily partitions the edges of $T$ and then adds edges from $F \setminus T$ into respective subparts, so that the fourth invariant from Lemma \ref{lemma:decomposition} is preserved.
	
	The algorithm for partitioning the tree also needs to deal with pairs of associated edges, which may still be present in $T$.
	This is because the cycles of $G$, on which the associated edges lie, do not have to be fully contained in $F$ so it is possible that a pair of associated edges is not a part of any cycle in $F$.

	Let us root $T$ in one of its topmost vertices (there can be two of them).
	This way every pair of associated edges present in $T$ shares their upper endpoint.
	Using this property the algorithm makes sure they end up in the same subpart.
	
	The algorithm for partitioning $T$ maintains two kinds of subparts: open, to which edges of $T$ can still be added and closed, to which they cannot.
	This approach yields the following procedure for partitioning edges from $F$:

	\begin{enumerate}[noitemsep]
		\item From each pair of associated edges which is lying on a cycle in $F$ erase an arbitrarily chosen edge. Let us denote the resulting tree by $T$.
		\item Root $T$ in one of its topmost vertices and process every vertex $v$ in the order of non-increasing depths:
		\begin{enumerate}[nosep]
			\item For every child $u$ of $v$ in $T$: If there is an open subpart containing $u$, add the $(u,v)$ edge to it. Otherwise, initialize a new open subpart consisting only of $(u,v)$.
			\item Merge all pairs of subparts from the previous step containing edges associated with each other.
			\item Mark each open subpart as closed if it contains at least $\cefrac{|E|}{k}$ edges.
			\item As long as there are at least two open subparts, merge them. If their size reached $\cefrac{|E|}{k}$, mark the resulting subpart as closed. 
		\end{enumerate}
		\item Add each edge from $F \setminus T$ to the subpart containing the edge associated with it.
	\end{enumerate}

	After $v$ has been processed in the second step there is at most one open subpart in the subtree of $v$ and it contains strictly less than $\cefrac{|E|}{k}$ edges.
	Hence, open subparts created in the step 2a contain at most $\cefrac{|E|}{k}$ of them.
	This results in the size of closed subparts created in the second step being between $\cefrac{|E|}{k}$ and $2 \cefrac{|E|}{k}$.
	When edges from $F \setminus T$ are added, the subparts created so far can at most double their size because each edge is associated with at most one other edge.
	Thus, each subpart can contain at most $4 \cefrac{|E|}{k}$ edges.
	Furthermore, after the algorithm concludes, there may be only one subpart with fewer than $\cefrac{|E|}{k}$ edges.

	\subsubsection{Reducing the number of border vertices}
	\label{subsubsec:reduceBorder}
	
	Each subpart of $G$ created in the previous step already contains a number of border vertices, those include border vertices of $(j-1)$-th level as well as those resulting from splitting fragments from $D_j$ in the previous step (Section \ref{subsubsec:reduceEdges}). 
	Here we refer to them as old border vertices.
	Together with the new ones resulting from splitting subparts from Section \ref{subsubsec:reduceEdges}, they are border vertices of $j$-th level.
	We restrict our attention to a single subpart $F'$ generated by the previous step and assume that it contains at most $24k$ old border vertices.
	Let us define a procedure for splitting $F'$ into at most three connected subparts in such a way that:
	\begin{itemize}
		\item Each subpart contains at most $18k$ border vertices of $j$-th level.
		\item Edges associated with each other are in the same subpart.
	\end{itemize}
	
	Let $b$ be the number of old border vertices in $F'$.
	First, $F'$ is rooted in one of its vertices and a local tree of its biconnected components is created.
	Although the tree is constructed locally (as if $F'$ was the whole graph $G$), pairs of associated edges are still defined globally (Definition \ref{def:associated} still refers to the global root of $G$).
	Then, among the biconnected components whose subtree graphs in the local tree of biconnected components contain more than $\flfrac{b}{2}$ old border vertices, the algorithm chooses the one with a minimal subtree graph.
	Since the subtree of the root component is the whole fragment with $b$ border vertices, the subtree graph of at least one biconnected component must satisfy this condition.
	We fix one such component and call it the pivot component.
	If it is not the root component, it can be either a single edge or a cycle.
	If it is, we handle it as in the case of an edge, where the root vertex is considered to be the lower vertex.
	
	\textbf{The case of a single edge:}
	Consider the connected components of $F'$ resulting from removing the lower vertex $v$ of the pivot component.
	Each such subgraph contains at most $\cefrac{b}{2} + 1$ old border vertices (we include $v$ in all of them).
	This partition can, however, split pairs of associated edges.
	Let edges $(v,u_1)$ and $(v, u_2)$ be associated and belong to distinct subparts.
	Note that in this case both subparts can only have one edge adjacent to $v$ -- otherwise these edges would lie on at least two cycles in $G$.
	If the number of old border vertices contained in the union of those two subparts is at most $\cefrac{b}{2} + 1$, the two subparts are merged.
	Otherwise, edge $(v, u_1)$ is transferred to the subpart of $(v, u_2)$.
	In this case the subpart of $u_2$ gained only one additional vertex and the subpart of $u_1$ is still connected, as the edge $(v, u_1)$ must have been its only edge adjacent to $v$.
	
	In the partition defined this way every subpart (fragment on $(j+1)$-th level of decomposition) contains at most $\cefrac{b}{2} + 2$ old border vertices.
	Furthermore, at most two new border vertices are created in $F'$ on $j$-th level (one is $v$ and the other one results from transferring the associated edge to another subpart, which can happen only once).
	Hence, there are at most $\cefrac{b}{2} + 4$ border vertices on level $j$ in each subpart.
	
	However, such a partition may result in too many fragments on $(j+1)$-th level contained in $F'$.
	Thus, as long as there are two subparts whose union contains no more than $\cefrac{b}{2} + 4$ border vertices of $j$-th level, they are merged.
	Afterwords, there can be at most three subparts in the partition. 
	
	\textbf{The case of a cycle:}
	Let us denote the cycle being the pivot component as $C$.
	Consider connected components of $F'$ resulting from removing all the edges of $C$.
	
	If no such component has more than $\flfrac{b}{2}$ old border vertices, we can choose a number of consecutive vertices on $C$ such that their connected components contain between $\flfrac{b}{3}$ and $\cefrac{2b}{3}$ old border vertices in total.
	Let us paint all the vertices in their connected components black and the rest of the vertices in $F$ white.
	All the edges between pairs of two black and two white vertices are painted respectively black and white.
	Note that both the black and white subgraphs are connected.
	All white edges will form the first subpart and black ones the second.
	The edges between vertices of different colors can be assigned to either one.
	It is possible not to split pairs of associated edges since they share a vertex and thus cannot be both painted and have distinct colors.
	
	Each subpart contains at most $\cefrac{2b}{3}$ old border vertices.
	New border vertices in $F'$ can be created only by the edges connecting vertices with distinct colors.
	From the construction it follows that there are exactly two such edges and hence at most two new border vertices in $F'$.

	If one of the connected components resulting from removing edges of $C$ contains at least $\flfrac{b}{2} + 1$ old border vertices, we implement a different approach.
	Let $v$ be the vertex from $C$ belonging to this component.
	First, we add all edges from $C$ back to the graph, then remove $v$ from it.
	Note that each resulting connected component contains at most $\cefrac{b}{2} + 1$ old border vertices (including $v$ if it's an old border vertex).
	Thus, we can use the previously described procedure for the case of a single edge, where $v$ is treated like the lower vertex.
	
	\textbf{}
	Regardless of whether the pivot component is a cycle or a single edge, the above procedure splits $F'$ into at most three subparts.
	Each of them contains at most $\cefrac{2b}{3} + 2$ border vertices of $j$-th level.
	With $b \leq 24 k$ this gives an upper bound of $18k$ on their number in a single subpart of $F'$.

	\subsubsection{Recursive decomposition}
	
	Using subprocedures from Sections \ref{subsubsec:reduceEdges} and \ref{subsubsec:reduceBorder} we can define the recursive decomposition.
	Each fragment of $D_j$ is split into subparts in two phases.
	The first one produces subparts which have a limited number of edges.
	The second splits each of those subparts into at most three fragments to limit the number of border vertices on $j$-th level in each fragment forming $D_{j+1}$.
	Those fragments form $D_{j+1}$ and the process continues until all fragments consist of at most two edges.
	
	Recall that both phases create connected subparts and do not split associated edges.
	The first step produces at most $k$ subparts and the second splits each of those subparts into at most three smaller ones.
	Thus, each fragment in $D_j$ is split into $3k$ fragments in $D_{j+1}$.
	Reducing the number of edges during the first phase guarantees that the second invariant holds too.
	The following lemmas imply that our construction satisfies the third invariant and thus conclude the proof of Lemma \ref{lemma:decomposition}.

	\begin{lemma}
	\label{lemma:vertexShare}
	Consider a fragment $F$ split into $s$ edge-disjoint connected subparts in such a way, that the topmost edges of each biconnected component belong to the same subpart.
	Then, at most $2s-2$ vertices belong to more than one subpart.   
	\end{lemma}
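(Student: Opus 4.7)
The plan is a charging argument based on an auxiliary tree on the $s$ subparts. First I would define a parent relation on the subparts: for each subpart $P_i$ whose topmost vertex $r_i$ differs from the global root $r_G$, the vertex $r_i$ has one or two ``upward'' edges in $F$ (one if $r_i$'s main component is a bridge, two if it is a cycle), and one of these edges lies in some subpart $P_{\pi(i)}$, which I declare to be the parent of $P_i$. After identifying the subparts that contain $r_G$ with a single root, this yields a tree $T^\star$ on the subparts with at most $s - 1$ edges.

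Next I would charge each shared vertex of $F$ to at most two edges of $T^\star$, which immediately gives the claimed bound of $2(s-1) = 2s-2$. The charging rests on the observation that for any shared vertex $v \neq r_G$, the number of subparts in which $v$ is \emph{not} the topmost vertex is at most two: such a subpart must contain an edge of $v$'s main component incident to $v$ and pointing toward the topmost vertex of that component, and the cactus structure allows at most two such candidate edges (two cycle-neighbour edges at $v$ when the main component is a cycle, or a unique upward edge when it is a bridge); edge-disjointness of the subparts then caps the count at two. Moreover, the topmost-edge constraint forces all subparts in which $v$ is itself the topmost vertex to share a common parent in $T^\star$, which concentrates the charges of $v$ onto a small neighbourhood of tree edges.

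The main obstacle will be verifying that no single edge of $T^\star$ collects more than two charges. This reduces to a case analysis: for each tree edge $(P_i, P_{\pi(i)})$, the shared vertices that can charge it are $r_i$ together with at most one additional vertex arising from a cycle of $F$ split between $P_i$ and $P_{\pi(i)}$. The topmost-edge hypothesis is essential here: were two distinct cycles of $F$ to produce further claimants across the same tree edge, one of those cycles would have its pair of topmost edges straddling the two subparts, contradicting the assumption. Once this case analysis is completed, summing over the $s-1$ edges of $T^\star$ concludes the bound.
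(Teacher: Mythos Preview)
Your plan differs from the paper's and, as written, has a gap at its foundation: the ``topmost vertex'' $r_i$ of a subpart is not well defined in a cactus. When $P_i$ is an arc of a cycle $C$ that does not include $C$'s pair of topmost edges---exactly the situation the hypothesis allows for all but one subpart meeting $C$---both endpoints of that arc can reach the root without passing through the other, so neither is canonically ``above'' the other. Your narrative presumes $r_i$ exists and then branches on whether ``$r_i$'s main component'' is a bridge or a cycle, but the ambiguity precedes that branching. The same issue weakens the claim that all subparts whose topmost vertex is $v$ share a common parent: when $v$ lies on a cycle strictly below that cycle's apex, the two cycle edges at $v$ are not forced into the same subpart (only the apex edges are), so without a tie-breaking rule the parents can differ. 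A concrete test: the $4$-cycle $v_1v_2v_3v_4$ rooted at $v_1$, with $P_1=\{v_4v_1,v_1v_2\}$, $P_2=\{v_2v_3\}$, $P_3=\{v_3v_4\}$. Here $v_3$ is shared between $P_2$ and $P_3$, yet neither is the other's parent under any natural $T^\star$, and your rule (``$r_i$ plus one extra vertex from a cycle split between $P_i$ and $P_{\pi(i)}$'') does not say which tree edge absorbs $v_3$. The case analysis you defer would have to resolve exactly such ties, and the proposal gives no mechanism for doing so.

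The paper avoids all of this by first turning $F$ into a tree: delete one topmost edge from each cycle. The hypothesis then guarantees each subpart breaks into at most two pieces; if $q$ of them do, the resulting tree carries $s+q$ edge-disjoint connected subtrees, and in a tree such a family shares at most $s+q-1$ vertices by a direct vertex/edge count. Restoring the deleted edges introduces at most $s-1-q$ new shared vertices (one for each multi-subpart cycle whose owning subpart did not split), and $(s+q-1)+(s-1-q)=2s-2$. Your charging idea can probably be salvaged, but the natural repair---fix a spanning tree of $F$ so that $r_i$ becomes unambiguous---is essentially this same move, after which the direct count is shorter than building and analysing $T^\star$.
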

	
	\begin{proof}
	
	Consider a cycle $C$ in $F$, whose edges belong to multiple subparts.
	Since the topmost edges of $C$ must belong to the same subpart, at least one of them does not contain any topmost edge.
	Let us denote it by $S$.
	In such a case, the topmost vertex of $C$ does not belong to $S$, so $S$ must be fully contained in the subtree graph of $C$.
	Thus, for fixed $S$ there can be only one such cycle $C$.
	Furthermore, if $S$ contains a topmost vertex of $F$, no such cycle exists.
	Hence, there are at most $s-1$ cycles whose edges belong to multiple subparts.
	
	Let us transform $F$ into a tree $T$ by removing one of the topmost edges from each cycle.
	Consider a cycle $C$ and let $S$ be the subpart containing its topmost edges.
	If all edges in $C$ belong to $S$, this subpart is still connected.
	Otherwise, let $v$ be the topmost vertex and $u$ be the other endpoint of the erased edge.
	If $u$ is incident to another edge of $S$, $S$ no longer is connected.
	Let us denote the two connected components of $S$ as $S_1$ and $S_2$.
	This can happen in only $q \leq s - 1$ cycles.
	Otherwise, the remaining edges of $S$ still form a connected subgraph.
	
	In the tree $T$ there are $s + q$ connected subparts, so there are $s + q - 1$ vertices shared by them.
	Let us check how this changes when a previously removed $(u,v)$ edge is added to the cycle $C$.
	If all edges in $C$ belong to the same subpart ($S$), no vertex is added to $S$.
	Let us consider the opposite case.
	If $u$ has been incident to another edge in $S$, the subparts $S_1$ and $S_2$ are in fact a single subpart $S$, which contained $u$ and $v$ even before adding the $(u,v)$ edge.
	Otherwise, $u$ is added to $S$ and shared with the subparts it previously belonged to.
	This can happen at most $s - 1 - q$ times.
	
	Summarizing, at most $2s - 2$ vertices of $F$ belong to multiple subparts contained in $F$.
	\end{proof}
	
	\begin{lemma}
	\label{lemma:borderVertexBound}
	For all $j$ each fragment of $D_j$ contains at most $26k$ border vertices of level $j$.
	\end{lemma}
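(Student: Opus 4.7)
I proceed by induction on $j$, carrying the statement $b_j(F) \leq 26k$ for every $F \in D_j$, where $b_j(F)$ denotes the number of level $j$ border vertices inside $F$. The key step is to decompose $b_j(F)$ into two disjoint parts. A level $j$ border vertex is a vertex lying in at least two fragments of $D_{j+1}$. For $v \in F$, either both witnessing $D_{j+1}$-fragments lie inside $F$ (the \emph{internal} part $I_j(F)$), or at least one of them lies in a different $D_j$-fragment $F'$ (the \emph{external} part $E_j(F)$). In the second case $v \in F \cap F'$ so $v$ is a level $(j-1)$ border vertex, and the converse holds too: every level $(j-1)$ border vertex inside $F$ also belongs to some $D_{j+1}$-fragment sitting inside a neighboring $D_j$-fragment. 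Hence $E_j(F) = b_{j-1}(F)$.

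To bound $I_j(F)$, I apply Lemma \ref{lemma:vertexShare} twice. Phase 1 splits $F$ into at most $k$ edge-disjoint connected subparts which keep associated edges together, contributing at most $2k - 2$ internally shared vertices. Phase 2 further splits each phase-1-subpart into at most three edge-disjoint connected subparts (again respecting associated edges), and therefore contributes at most $2 \cdot 3 - 2 = 4$ shared vertices per phase-1-subpart and thus at most $4k$ in total. Summing gives $I_j(F) \leq 6k - 2$.

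To bound $E_j(F) = b_{j-1}(F)$, I invoke the output bound of Phase 2 applied one iteration earlier: the $D_j$-fragment $F$ arises as one of at most three subparts produced by Phase 2 from a phase-1-subpart of the unique $D_{j-1}$-fragment containing $F$, and the Phase 2 analysis shows $F$ contains at most $\lceil 2b/3 \rceil + 2$ level $(j-1)$ border vertices, where $b$ is the number of ``old'' border vertices in that phase-1-subpart. The inductive observation needed is that $b$ is bounded by the level $(j-2)$ border vertex count of the parent $D_{j-1}$-fragment (which is $\geq$ what lies in its phase-1-subpart) plus the $\leq 2k - 2$ new vertices contributed by Phase 1 at that earlier iteration. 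Applying the same $\lceil 2b/3 \rceil + 2$ bound recursively one level deeper shows that $b \leq (16k + 2) + (2k - 2) = 18k \leq 24k$, which validates the Phase 2 precondition and yields $b_{j-1}(F) \leq \lceil 2 \cdot 24k / 3 \rceil + 2 = 16k + 2$.

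Combining the two parts gives $b_j(F) \leq (6k - 2) + (16k + 2) = 22k \leq 26k$, closing the induction. The base case $j=1$ is immediate: the only fragment of $D_1$ is $G$ itself, which has no level $0$ border vertices, so $b_1(G) = I_1(G) \leq 6k - 2$. The main subtlety is the interlocking of the induction with Phase 2's precondition: the claim $b_{j-1}(F) \leq 16k + 2$ depends on $b \leq 24k$, which in turn needs the previous level's bound to be good enough. The cushion between the derived $22k$ and the claimed $26k$ is exactly what makes this chain close cleanly.
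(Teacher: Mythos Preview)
Your overall strategy matches the paper's: split the level-$j$ border vertices into those inherited from level $j-1$ (your $E_j(F)$) and those created by the split of $F$ (your $I_j(F)$), bound the latter by Lemma~\ref{lemma:vertexShare}, and bound the former by the output guarantee of Phase~2 applied one iteration earlier. That is exactly what the paper does.

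However, the induction as you have stated it does not close. Your declared hypothesis is $b_{j'}(F')\le 26k$ for $F'\in D_{j'}$, but in order to apply the Phase~2 output bound $\lceil 2b/3\rceil+2$ at level $j-1$ you must first check the precondition $b\le 24k$, where $b$ counts the old border vertices in the phase-1 subpart. You bound $b$ by $b_{j-2}(F')+(2k-2)$ with $F'\in D_{j-1}$. From your hypothesis you only get $b_{j-2}(F')\le b_{j-1}(F')\le 26k$, hence $b\le 28k-2$, which is too large. Your sentence ``applying the same $\lceil 2b/3\rceil+2$ bound recursively one level deeper'' is precisely where the circularity sits: that deeper application again needs its own Phase~2 precondition, which needs the bound two levels deeper, and so on. The ``cushion'' between $22k$ and $26k$ does not help here; the hypothesis itself is too weak.

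The fix, which is what the paper does, is to carry the stronger statement as the inductive hypothesis: every $F\in D_j$ contains at most $18k$ border vertices of level $j-1$. The base case is trivial, and the inductive step is exactly the computation you already wrote: from $18k$ add $\le 2k-2$ from Phase~1 to get $\le 20k-2\le 24k$ old border vertices in a phase-1 subpart, so Phase~2's precondition holds and its output has $\le 18k$ level-$j$ border vertices. The lemma's $26k$ bound then follows a posteriori by adding the $\le 6k-2$ internal vertices from Lemma~\ref{lemma:vertexShare}, giving $24k-2\le 26k$. Your decomposition $b_j(F)=I_j(F)+E_j(F)$ and the bounds on each piece are fine; only the choice of what to induct on needs to be strengthened.
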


	\begin{proof}
	
		In order to prove the lemma we will show that each fragment of $D_j$ contains at most $18 k$ border vertices of level $j-1$.
		This is sufficient because each fragment of $D_j$ is split to at most $3k$ fragments of $D_{j+1}$, which by Lemma \ref{lemma:vertexShare} can only increase the number of border vertices to $24k - 2$.
		We will prove this by induction.

		For $j = 1$ the statement is trivially true because there are no border vertices of level $0$ ($D_1$ consists of only a single fragment -- the whole graph).
		
		Consider any fragment $F \in D_{j+1}$ for $j > 0$ by $F'$ we denote the fragment in $D_j$ being a superset of $F$.
		Let us assume that $F'$ contains at most $18k$ border vertices of level $j-1$.
		The procedure for reducing the number of edges has split it into at most $k$ subparts.
		Thus, by Lemma \ref{lemma:vertexShare} the one containing $F$ cannot have more than $20k - 2$ old border vertices.
		Hence, the assumptions of the procedure for reducing the number of border vertices are satisfied and it produces fragments of $D_{j+1}$ (including $F$) containing at most $18k$ border vertices on level $j$.

	\end{proof}

\subsection{The skeleton}
\label{sect:skeletonPath}

Here we provide a proof of Lemma \ref{lemma:skeletonPath} from Section \ref{subsect:skeleton}.

\newtheorem*{lemma:skeletonPath}{Lemma \ref{lemma:skeletonPath}}
\begin{lemma:skeletonPath}
	Every simple path connecting two skeleton vertices passes only though skeleton edges. 
	\end{lemma:skeletonPath}

\begin{proof}
	We assume that there are at least two border vertices of $j$-th level.
	Otherwise, the lemma is trivially true.
	The proof is by contradiction.
	Let us say there is a simple path between two skeleton vertices passing though a number of non-skeleton edges.
	Consider a maximal part of this path consisting only of non-skeleton edges.
	It must start and end at skeleton vertices, let us denote them by $u$ and $v$.
	From the construction of $\mname{SK}_j$ it follows that there exists another $u$-$v$ path fully contained in the skeleton.
	Those two paths are of course disjoint, so $u$ and $v$ lie on a simple cycle.
	Consider an arbitrary skeleton edge on this simple cycle and a path between two border vertices, $s$ and $t$, which passes through it.
	Let $u'$ be such vertex on this $s$-$t$ path that lies on the cycle and is the closest one to $u$.
	Similarly, let $v'$ be the one closest to $v$.
	Without loss of generality we assume that $u'$ appears before $v'$ on the $s$-$t$ path.
	Consider the following path:
	\begin{itemize}
	\item From $s$ to $u'$ along the original $s$-$t$ path.
	\item From $u'$ to $u$ along the skeleton $u$-$v$ path.
	\item From $u$ to $v$ along the non-skeleton $u$-$v$ path.
	\item From $v$ to $v'$ along the skeleton $u$-$v$ path. 
	\item From $v'$ to $t$ along the original $s$-$t$ path.
	\end{itemize}
	It follows from the choice of $u'$ and $v'$ that the above path has no cycles (is simple).
	This leads to contradiction and concludes the proof.
	\end{proof}

\newcommand{\CrevToOpt}{4}
\subsection{The rooted case}
\label{subsect:rootedCase_apx}

Here follow the proofs omitted in Section \ref{subsect:rootedCase}

\newtheorem*{lemma:depths}{Lemma \ref{lemma:depths}}
\begin{lemma:depths}
For any rooted instance of the tollbooth problem on cactus graphs there exists an optimal solution, such that the distance from each vertex to the root belongs to the set $\mathcal{D}$ containing zero and buyers' budgets.
$$ \mathcal{D} = \set{0} \cup \longset{b_i}{i \in B_H}$$
\end{lemma:depths}

\begin{proof}
Let us call a price assignment regular if it satisfies the condition from the above lemma.
Consider optimal prices that are not regular.
We prove the lemma by showing the existence of a regular price assignment which generates at least as much revenue.

By $T$ let us denote a shortest-path tree of $H$ rooted in $r$, which is an acyclic subgraph of $H$ containing a shortest path from every vertex to the root.
Let us increase the price of all edges in $H \setminus T$ to the maximal budget $b_{max} = \max B_H$.
While it does not change the distances from vertices to the root, it guarantees that each vertex has a shortest path to the root contained in $T$ under any regular prices.

Let us consider a vertex $v$ and let $d_v$ be the distance from $v$ to $r$ under initial prices.
We set $d'_v$ to the smallest element of $\mathcal{D}$ greater or equal than $d_v$ or to $b_{max}$ if no such element exists.
Note that if for any two vertices $u$ and $v$ $d_u \geq d_v$, then $d'_u \geq d'_v$.
Hence, the values of $d'_v$ are non-decreasing on the paths from $r$ to every leaf in $T$.
Thus, there exist prices for edges in $T$ such that the distance from $r$ to $v$ equals $d'_v$ for every $v$.

Consider a vertex $v$ and customers whose destination vertex is $v$.
If $d'_v > d_v$, none of them has a budget from $[d_v, d'_v)$.
Hence, they now generate no less revenue than under the original prices.
If $d'_v < d_v$, then $d_v > b_{max}$, so in this case no buyer could have been allocated a $v$-$r$ path under the original prices.
Thus, the regular prices result in at least as much revenue as optimal ones.
\end{proof}


\newtheorem*{coll:fixedDepths}{Corollary \ref{coll:fixedDepths}}
\begin{coll:fixedDepths}
Consider a rooted instance of the tollbooth problem on cactus graphs and a subset $S$ of vertices of $H$ such that for each $v \in S$ required depth $e_v$ of this vertex is given.
Prices of edges in $H$ are said to be feasible if the cost of a cheapest $r$-$v$ path equals $e_v$ for each $v \in S$.
Let us assume, that there is at least one such price assignment.
Then, there exists a feasible price assignment maximizing revenue for which the distance from $r$ to each vertex $v \not\in S$ belongs to the set $\mathcal{D'}$:
$$\mathcal{D'} = \set{0} \cup \longset{b_i}{i \in B_H} \cup \longset{e_i}{i \in S}$$
\end{coll:fixedDepths}

The proof of this corollary follows immediately from the proof of Lemma \ref{lemma:depths}.
Note that all buyers with destination vertices in $S$ generate the same revenue under all feasible price assignments.
The argument for vertices outside $S$ remains the same.

\subsection{The non-skeleton subproblem}
\label{subsect:non_skeleton_subproblem_apx}

Before proceeding to the proofs, let us begin by introducing additional definitions:
\begin{itemize}
\item A price vector $p$ is said to be feasible for the non-skeleton subproblem if it assigns zero to all skeleton edges.
\item For any feasible prices $p$, $\rev_j(p)$ is the revenue generated by buyers from $B_j$ under prices $p$.
\item For a non-skeleton component $C$ and feasible prices $p$ by $\rev_{j, C}(p)$ let us denote the part of $\rev_j(p)$ resulting from selling edges from $C$.
It is well-defined, i.e. is the same regardless of which shortest path between $u_i$ and $v_i$ is assigned to $i$-th buyer for all $i \in B_j$.
For a fragment $F$ from $D_j$ or $D_{j+1}$ let us define $\rev_{j, F}(p)$ in a similar way.
Since each such $F$ is a union of a subgraph of the skeleton and non-skeleton components, $\rev_{j, F}(p)$ is also well-defined. 
\item For any subgraph $S$ of $G$, by $B_{j, S}$ let us denote the set of such buyers $i \in B_j$ that $u_i$ or $v_i$ is not a skeleton vertex and belongs to $S$.
\end{itemize}



\newtheorem*{lemma:randomisedNonSkeleton}{Lemma \ref{lemma:randomisedNonSkeleton}}
\begin{lemma:randomisedNonSkeleton}
Let $p$ be the price vector found by the randomized algorithm and $q$ be any price vector feasible for the non-skeleton subproblem.
Then, the following inequality holds:
$$ \Exp{\rev_{j, H}(p)} \geq \frac{1}{\CrevToOpt} \rev_{j, H}(q) $$
\end{lemma:randomisedNonSkeleton}

\begin{proof}

Consider a non-skeleton component $C$ and a fragment $F$ in $D_{j+1}$ containing $C$.
Assume that $F$ has been painted black and all the other subparts of $H$ have been painted white.
Then, by Remark \ref{rem:splittingBuyers} each buyer $i \in B_{j, C}$ has a budget of $b_i$ for purchasing a path in $C$ from her destination vertex to its skeleton representative because her other non-skeleton section is either empty or in a white fragment of $D_{j+1}$.
Note that this exactly reflects the situation modeled by the rooted instance associated with $C$.
Only buyers from $B_{j,C}$ contribute to $\rev_{j,C}$, so $p$ maximizes $\rev_{j, C}$.
$$ \rev_{j, C}(p) \geq \rev_{j, C}(q) $$ 

In reality, however, the other fragments from $D_{j+1}$ contained in $H$ also can be black.
Nevertheless, each of them is white independently of the color of $F$ with probability $\frac{1}{2}$.
Thus each buyer $i \in B_{j,C}$ can spend up to $b_i$ on her non-skeleton section in $C$ with probability at least $\frac{1}{2}$.
$$ \LExp{\rev_{j,C}(p)}{\text{F is black}} \geq \frac{1}{2} \rev_{j, C}(q) $$
$$ \Exp{\rev_{j, C}(p)} \geq \frac{1}{4} \rev_{j, C}(q) $$

By summing the above inequality over all non-skeleton components in $H$, we obtain the desired lower bound:
$$ \Exp{\rev_{j, H}(p)} \geq \frac{1}{4} \rev_{j, H}(q)$$

\end{proof}

\newtheorem*{coll:derandomisedNonSkeleton}{Corollary \ref{lemma:derandomisedNonSkeleton}}
\begin{coll:derandomisedNonSkeleton}
There exists a deterministic polynomial algorithm which for a non-skeleton subproblem on $j$-th level finds prices achieving at least $\frac{\mname{NSKOPT}_j}{\CrevToOpt}$ revenue.
\end{coll:derandomisedNonSkeleton}

\begin{proof}
Let us begin the proof by derandomizing the procedure from Lemma \ref{lemma:randomisedNonSkeleton} for pricing edges in a single fragment $H \in D_j$.
Note that in at least one of its possible outcomes the computed prices generate at least as must revenue as the expected value.
Each possible outcome is determined by assigning one of two colors for each fragment $F \in D_{j+1}$ contained in $H$.
Since the number of them is $\OO{k}$, there are only $2^{c k}$ possibilities (for a constant $c$).
As $k = \ceil{ \log^{\frac{1}{2}}{m}}$, this number can be bounded by a polynomial in $m$.
The algorithm iterates over all possible color assignments and chooses the solution maximizing revenue in $H$.
Prices $p$ found this way achieve revenue not smaller than the expected revenue achieved by the randomized procedure.
Thus, for any feasible prices $q$:
\begin{equation}
\label{math:fragment}
\rev_{j, H}(p) \geq \frac{1}{4} \rev_{j, H}(q)
\end{equation}

The algorithm performs this derandomized procedure for each fragment of $D_j$ and prices the edges accordingly.
Let us denote the resulting price assignment by $p_{full}$.
Recall that the randomized procedure does not make any assumptions on prices of edges outside the single part $H$.
Thus, Inequality \ref{math:fragment} holds for $p_{full}$, too.
By taking $q = p_{opt}$, an optimal solution to the non-skeleton subproblem on $j$-th level, we obtain the desired upper bound on generated revenue.

$$ \sum_{H \in D_j} \rev_{j, H}(p_{full}) \geq \sum_{H \in D_j} \frac{1}{\CrevToOpt} \rev_{j, H}(p_{opt}) $$
$$ \rev_j(p_{full}) \geq \frac{1}{\CrevToOpt} \rev_j(p_{opt}) = \frac{1}{\CrevToOpt} \mname{NSKOPT}_j$$

\end{proof}

\let\CrevToOpt\undefined

\newcommand{\Crounding}{1024} 
\newcommand{\CrevToOpt}{2048} 
\newcommand{\CmyScoreToMaxScore}{32} 
\newcommand{\CrevToScore}{4} 
\newcommand{\CrevToMyScore}{128} 
\newcommand{\CroundedRevToMyScore}{512} 


\subsection{Decomposing the skeleton}
\label{subsect:decomposeSkeleton_apx}

\newtheorem*{rem:innerOuterPartition}{Remark \ref{rem:innerOuterPartition}}
\begin{rem:innerOuterPartition}
Inner  segments and outer extensions form an edge-disjoint partition of $\mname{SK}_j(F)$.
\end{rem:innerOuterPartition}

\begin{proof}

Let us begin the proof by showing that inner segments form an edge-disjoint partition of $\mname{SK}_j(F) \cap F$, that is all skeleton edges inside $F$.
Since every border vertex is an endpoint of a segment, each segment is fully contained within a certain fragment in the current decomposition.
Thus, as segments form a disjoint cover of $\mname{SK}_j$, the inner segments of the skeleton of $F$ form a disjoint cover of $\mname{SK}_j(F) \cap F$.

By definition, each edge in $\mname{SK}_j(F) \setminus F$ belongs to an outer extension of $\mname{SK}_j(F)$.
Note that every outer extension defines a simple cycle in $\mname{SK}_j(F)$ consisting of two non-empty paths: the outer extension itself and a path in $F$ connecting its endpoints.
Thus, if an edge belonged to two distinct outer extensions, it would lie on two simple cycles, which leads to contradiction.
\end{proof}

\subsection{Price rounding}
\label{subsect:priceRounding_apx}

\newtheorem*{lemma:rounding}{Lemma \ref{lemma:rounding}}
\begin{lemma:rounding}
	\textbf{(rounding)}
	There exists a price assignment obtaining revenue of at least $\frac{\mname{SKOPT}_j}{4}$ such that each segment's length belongs to the following set:
	$$ P = \longset{ \frac{m b_{max}}{2^t} }{ t \in \set{0, 1, \dots, \ceil{\log \left(\Crounding \cdot m^2 \cdot |B_j| \right)} } } \cup \set{0} $$
	Here $b_{max}$ is the greatest budget of buyers in $B_j$ and $m$ is the number of edges in $G$.
	\end{lemma:rounding}
	
	\begin{proof}
	Consider any prices generating revenue $\mname{SKOPT}_s$.
	If a price of any edge is greater than $b_{max}$, it can be lowered to $b_{max}$ without loss of revenue.
	After such a modification, the length of any path is at most $m \cdot b_{max}$.
	We further round the prices down so that the price of each segment belongs to $P$.
	Let us consider a segment $s$ with cost $p_s$ and let $u$, $v$ be its endpoints.
	We choose a $u$-$v$ path of length $p_s$ fully contained within $s$ and set a maximal such $x \in \left[0, 1 \right]$ that $p_s \cdot x \in P$.
	Note that either $x > \frac{1}{2}$ or $x = 0$. The latter implies $p_s < \frac{b_{max}}{\Crounding \cdot m \cdot |B_j|}$.
	Next we multiply the cost of the edges along the chosen path by $x$.
	By applying this procedure to all segments in the graph, we ensure that segments' lengths belong to $P$.
	Let us fix any path, which initially has length $d$.
	After rounding the prices down, its cost is at least $\frac{d}{2} - \frac{b_{max}}{\Crounding \cdot |B_j|}$.
	As a consequence, the cost of a cheapest path desired by a buyer, which initially was equal $d$, will be between $\frac{d}{2} - \frac{b_{max}}{\Crounding \cdot |B_j|}$ and $d$ after the rounding.
	Summing these inequalities for all buyers results in a lower bound on the revenue of $\frac{\mname{SKOPT}_j}{2} - \frac{b_{max}}{\Crounding}$.
	As $\mname{SKOPT}_j \geq b_{max}$, this concludes the proof.
	
	\end{proof}

\subsection{Pricing strategies}
\label{subsect:pricingStrategies_apx}

\subsubsection{Approximating revenue}
\label{subsubsect:approx_rev_apx}

\newtheorem*{lemma:outerOverpricing}{Lemma \ref{lemma:outerOverpricing}}
\begin{lemma:outerOverpricing}
	Consider a valid combination of pricing strategies and denote the strategy for a fragment $F \in D_j$ as $s_F$.
	Then, for any price assignment $p$ implementing that combination of strategies:
	$$ \sum_{F \in D_j} \rev_{j, F, s_F}(p) \geq \frac{b_{max}}{512}
	\quad \implies \quad
	\rev(p) \geq \frac{1}{4} \sum_{F \in D_j} \rev_{j, F, s_F}(p)$$
	\end{lemma:outerOverpricing}

\begin{proof}

Consider a fragment $F \in D_j$ and a buyer $i$ from $B_{j,F}$, who desires paths connecting $u_i$ and $v_i$.
We fix any such path and by $d$, $d'$ denote its length given respectively the prices $p$ or prices $p$ modified in such a way, that the length of each outer extension $o$ of $\mname{SK}_j(F)$ equals $r_{s_F, o}$.
If $l_{s_F, o} = 0$, then $r_{s_F,o} \leq \frac{b_{max}}{\Crounding \cdot m \cdot |B_j|}$.
Thus, due to decreasing the cost of each outer extension $o$ from $r_{s_F, o}$ to its original cost, the path's length can decrease by at most $\frac{b_{max}}{\Crounding |B_j|}$.
Otherwise, $r_{s_F, o} = 2 l_{s_F, o}$ so strategy $s_F$ assumes each outer extension $o$ of $\mname{SK}_j(F)$ to be at most two times longer than it is under prices $p$.
Hence, $d \geq \frac{d'}{2} - \frac{b_{max}}{\Crounding |B_j|}$.
Of course, $d \leq d'$.

Since the above inequalities hold for all $u_i$-$v_i$ paths, it is also true for the distance between $u_i$ and $v_i$.
Hence, every buyer contributing to $\rev_{j, F, s_F}(p)$ also contributes to $\rev_{j,F}(p)$.
By summing the inequality for all those buyers, we obtain:
$$ \rev_{j, F}(p) \geq \frac{\rev_{j, F, s_F}(p)}{2} - |B_{j,F}| \cdot \frac{b_{max}}{\Crounding |B_j|}$$
When applying to all parts of the current decomposition, we have:
$$ \rev_j(p) \geq \frac{1}{2} \sum_{F \in D_j} \rev_{j, F, s_F}(p) - \frac{b_{max}}{\Crounding}$$
As $ \sum_{F \in D_j} \rev_{j, F, s_F}(p) \geq \frac{b_{max}}{512}$, this concludes the proof.

\end{proof}

\subsubsection{Bounding the number of pricing strategies}
\label{subsubsect:strategyBounding_apx}

\newtheorem*{lemma:partSegmentation}{Lemma \ref{lemma:partSegmentation}}
\begin{lemma:partSegmentation}
For each fragment $F \in D_j$, there are $\OO{k}$ inner  segments and outer extensions in its skeleton.
\end{lemma:partSegmentation}

\begin{proof}

Let us consider a graph resulting from converting each inner  segments and outer extension from $\mname{SK}_j(F)$ into a single edge, let us denote it by $H$.
We will prove that $H$ contains $\OO{k}$ vertices, which is sufficient, since $H$, being a cactus, can have at most twice as many edges as vertices.

Like in $\mname{SK}_j(F)$, all edges of $H$ must lie on a path between border vertices on $j$-th level.
If it was not the case, no edge of the corresponding segment would lie on such a path in $\mname{SK}_j(F)$ and hence the whole segment would not belong to the skeleton on $j$-th level.

Consider a tree of biconnected components of $H$ rooted in a vertex denoted $r$.
Each one of its leaves may be either a cycle or a single edge.
Either way, at least one of its non-topmost (relative to $r$) vertices must be a border one.
If it was not the case, edges of that biconnected component would not lie on any simple path connecting border vertices.
Since a vertex can be non-topmost in only one biconnected component, this results in an upper bound of $\OO{k}$ on the number of leaf components in the tree.

Let us define a coloring of biconnected components in $H$.
A biconnected component is said to be black either if it has at least two children in the rooted tree of biconnected components or one of its non-topmost vertices is a border vertex.
Otherwise, it is colored red.
Since the numbers of border vertices and leaf components are bounded, there are only $\OO{k}$ black components.

Note that a red component must be single edge in $H$.
It's because only its topmost vertex and its child's topmost vertex can have degree greater than two.
Any other vertex in a red component, having degree two and not being a border vertex, would be removed during the compression, so would not be an endpoint of a segment.

Let us say that, hypothetically, a red component is a child of another red component.
Then the corresponding biconnected components could be merged using the first operation of the compressing procedure from Definition \ref{def:segment}.
Hence, every red biconnected component can have only black children.
Thus, there can be only as many non-leaf red components as there are black components.
From this follows that there are $\OO{k}$ biconnected components in $H$.

Note that every vertex in $H$ with degree greater than two is a topmost vertex of a biconnected component.
Hence, each vertex in $H$ is either a topmost vertex of a biconnected component or a border vertex.
There are only $\OO{k}$ vertices and thus $\OO{k}$ edges in $H$.
Since each inner or outer extension in $\mname{SK}_j(F)$ corresponds to an edge in $H$, there are only $\OO{k}$ of them in the skeleton of $F$.
	
\end{proof}

\newtheorem{coll:strategiesBound}{Corollary \ref{coll:strategiesBound}}
\begin{coll:strategiesBound}
For each fragment on $j$-th level of decomposition, the number of pricing strategies is polynomial in $m$ and $n$. 
\end{coll:strategiesBound}

\begin{proof}
Let us fix a single fragment $F \in D_j$.
A pricing strategy assigns one of $|P|$ possible values to each inner segment of $\mname{SK}_j(F)$ and one of $|P'|$ intervals to each outer extension.
Note that $|P| \leq |P'|$ and $|P'|$ is $\OO{\log m n}$ ($n = |B|$, so $n \geq |B_j|$).
Thus, there exist such constants $c_1$ and $c_2$ that the number of strategies can be bounded by $\bra{ c_1 \cdot \log m n} ^ { c_2 k}$, which can be bounded by a polynomial in $n$~and~$m$:
$$ \bra{ c_1 \cdot \log m n} ^ { c_2 k} \leq  2 ^ {\bra{ \log c_1 \cdot \log \log m n } \cdot 2 c_2 \sqrt{\log m} } \leq 2^{2 c_2 \log c_1 \cdot \log n m } = \bra{nm} ^ {2 c_2 \log c_1} $$
\end{proof}

\subsection{Solution for a single fragment and a fixed pricing strategy}
\label{subsect:singleFragment_apx}

Before proceeding to the proof of Lemma~\ref{lemma:singleFragment}, we provide a detailed analysis of approximate revenue generated by possible price assignments for an individual segment $S$.

Let $B'$ be any subset of $B_S$.
By $\mname{CONTRIB}_{S, B', s_F}$ let us denote the maximum cost paid by buyers from $B'$ for the edges of $S$ in any envy-free solution based on prices implementing $s_F$.
In other words, $\mname{CONTRIB}_{S, B', s_F}$ is the maximum possible contribution towards $\rev_{j, F, s_F}$ resulting from buyers of $B'$ purchasing edges in $S$.

\begin{lemma}
	\label{lemma:thirdCyclic}
	If the third option was chosen in a cyclic segment $S$, the expected approximate revenue under $s_F$ generated by buyers from $B_{S, l}$ purchasing edges from $S$ is at least $\frac{1}{4} \mname{CONTRIB}_{S, B_{S, l}, s_F}$
	\end{lemma}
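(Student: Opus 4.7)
The plan is to compare the randomized contribution under option~3 to the optimum revenue $\mname{OPT}_R$ of the rooted instance solved in that option, and then observe that $\mname{OPT}_R \geq \mname{CONTRIB}_{S, B_{S,l}, s_F}$.

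First I would argue $\mname{OPT}_R \geq \mname{CONTRIB}_{S, B_{S,l}, s_F}$. Fix prices $p$ implementing $s_F$ that realize the contribution $\mname{CONTRIB}_{S, B_{S,l}, s_F}$. For each buyer $i \in B_{S,l}$ who actually purchases under~$p$, her path must leave~$S$ through~$l$ (by definition of $B_{S,l}$), then traverse whole segments of the skeleton whose lengths are fixed by~$s_F$, and possibly enter her second segment~$S'$ through an endpoint $v'_i$ or $v''_i$; hence the portion of her path outside~$S$ costs at least $\min\{\dist_{s_F}(l, v'_i), \dist_{s_F}(l, v''_i)\} = b_i - b_{i,l}$. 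Since her total payment is at most~$b_i$, her payment on edges of~$S$ is at most $b_{i,l}$. Therefore the restriction of~$p$ to~$S$, together with the enforced depth~$c$ of~$r$, is feasible for the rooted instance built in option~3 and yields revenue at least $\mname{CONTRIB}_{S, B_{S,l}, s_F}$.

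Let $p^*$ denote the optimal rooted-instance prices used in option~3 and, for each $i \in B_{S,l}$, let $x_i$ be the distance from~$l$ to~$u_i$ under~$p^*$, so that $\mname{OPT}_R = \sum_{i : x_i \leq b_{i,l}} x_i$. Conditioning on option~3 being chosen in~$S$, I would lower-bound the probability that each such~$i$ actually purchases her path, since her contribution to the approximate revenue on~$S$ is then exactly~$x_i$ and~$0$ otherwise. The core combinatorial step is to show this probability is at least $\frac{1}{4}$. If $i$ is not involved in any second segment, then $v'_i = v''_i = \repr_j(v_i)$ and her total cost deterministically equals $x_i + (b_i - b_{i,l}) \leq b_i$, so she always purchases. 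Otherwise, relabel so that $v'_i$ realizes $\min\{\dist_{s_F}(l, v'_i), \dist_{s_F}(l, v''_i)\}$; each of options~1 and~2 in~$S'$ zeroes the cost from every internal vertex of~$S'$ to exactly one of its two endpoints, and between them the two options cover both endpoints. Hence with probability $\frac{1}{4}$ the option chosen independently for~$S'$ zeroes the $v_i$-to-$v'_i$ cost inside~$S'$, and on this event the buyer can route her path through~$l$ and~$v'_i$ at total cost $x_i + \dist_{s_F}(l, v'_i) + 0 \leq b_i$, so she purchases.

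Summing over $i \in B_{S,l}$, the expected contribution of $B_{S,l}$ on edges of~$S$, conditional on option~3 being chosen in~$S$, is at least $\frac{1}{4}\mname{OPT}_R \geq \frac{1}{4}\mname{CONTRIB}_{S, B_{S,l}, s_F}$. I expect the main obstacle to be the careful verification that one of options~1 or~2 in~$S'$ always zeroes the cost to the specific endpoint $v'_i$ the buyer wants to enter through; this requires a brief case analysis handling both cyclic and acyclic~$S'$ uniformly and uses the cactus structure of~$G$ to rule out any alternative $v_i$-to-$v'_i$ routes in~$S'$ that could bypass both endpoints.
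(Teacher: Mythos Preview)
Your proposal is correct and follows essentially the same two-step approach as the paper: first show that the optimum of the rooted instance dominates $\mname{CONTRIB}_{S,B_{S,l},s_F}$ because each $b_{i,l}$ upper-bounds what buyer~$i$ can spend in~$S$, then argue that with probability at least~$\tfrac{1}{4}$ the independent choice in~$S'$ picks the one among options~1 and~2 that gives her a free internal path to the endpoint realizing the minimum in the definition of~$b_{i,l}$. Your write-up is in fact slightly more explicit than the paper's (you name~$x_i$, treat the case of no second segment separately, and flag the endpoint case analysis), but the argument is the same.
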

	
	\begin{proof}
	Note that a buyer $i \in B_{S, l}$ cannot pay more than $b_{i,l}$ for edges in $S$ under any prices implementing $s_F$.
	Thus, revenue achieved in the artificial rooted instance is not smaller than $\mname{CONTRIB}_{S, B_{S, l}, s_F}$.
	
	With probability of at least one-fourth a buyer $i \in B_{S, l}$ is able to afford to pay $b_{i, l}$ for a $\repr_j(u_i)$-$l$ path, as the first two options in the segment of $\repr_j(v_i)$ (regardless whether it's cyclic or not) guarantee her a free path from $\repr_j(v_i)$ to $v'_i$ or $v''_i$.
	Thus, the expected revenue is at least one-fourth of the revenue from the hypothetical rooted instance, which concludes the proof.
	\end{proof}

The same reasoning can be applied to the fourth solution and buyers from $B_{S,r}$ yielding a lemma analogous to Lemma \ref{lemma:thirdCyclic}.

	\begin{lemma}
		\label{lemma:thirdAcyclic}
		For an acyclic segment $S$, let $B''_S$ be the set of buyers $i \in B_S$ with $\max \set{b_{i,l}, b_{i,r}} \geq \frac{c}{2}$.
		If the third solution was chosen in $S$, the expected approximate revenue under $s_F$ generated in $S$ by buyers from $B''_S$ is at least $\frac{1}{8} \mname{CONTRIB}_{S, B''_S, s_F}$.
		\end{lemma}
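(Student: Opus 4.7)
The plan is to mirror the structure of Lemma~\ref{lemma:thirdCyclic}: for each buyer $i \in B''_S$ separately, I would show that her expected contribution to the approximate revenue from $S$ under option~3 is at least a constant fraction of her maximum possible contribution (which is at most $c$), and then sum over $B''_S$. The key asymmetry compared to the cyclic case is that in an acyclic segment a buyer can freely choose to exit through either endpoint, so the argument must be uniform in the exit direction.

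First I would fix $i \in B''_S$ and, without loss of generality, assume $b_{i,l} \geq c/2$ (the $b_{i,r} \geq c/2$ case being symmetric). Under option~3, the only edges of $S$ with non-zero price are the leftmost and rightmost ones, each of cost $c/2$. Because $\repr_j(u_i)$ lies strictly inside $S$ and $S$ is a simple path, any skeleton path for buyer $i$ must cross exactly one of these two extremal edges in order to leave $S$. Hence whenever she actually buys, she pays \emph{exactly} $c/2$ for edges in $S$, irrespective of the endpoint through which her path exits.

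Next I would show that she does buy with probability at least $\frac{1}{4}$. Let $v^\star$ be the endpoint of the other segment in which she is involved that realises $\min\set{dist_{s_F}(l, v'_i), dist_{s_F}(l, v''_i)}$ (or $v^\star = \repr_j(v_i)$ if no such other segment exists, in which case no additional randomness is required and she trivially buys). With probability $\frac{1}{4}$ the option chosen for that other segment is the one that concentrates its full cost on the edge adjacent to the \emph{opposite} endpoint, so that the path from $v^\star$ to $\repr_j(v_i)$ inside that segment becomes free. Conditioned on option~3 in $S$ and this independent event, buyer $i$'s total path cost is at most $\tfrac{c}{2} + dist_{s_F}(l, v^\star) = \tfrac{c}{2} + (b_i - b_{i,l}) \leq b_i$, so she buys. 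Her expected contribution to the approximate revenue from $S$ is therefore at least $\tfrac{1}{4} \cdot \tfrac{c}{2} = \tfrac{c}{8}$. Finally, I would observe that under any envy-free solution implementing $s_F$ each buyer in $B''_S$ contributes at most $c$ to the revenue from $S$, so summing the per-buyer lower bound $\tfrac{c}{8}$ over $B''_S$ gives expected contribution at least $\tfrac{c}{8}\,|B''_S| \geq \tfrac{1}{8}\,\mname{CONTRIB}_{S, B''_S, s_F}$.

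The main obstacle is the probability-$\tfrac{1}{4}$ argument: one has to verify that the options drawn in $S$ and in the other segment are independent, treat the degenerate subcase where $i$ is involved only in $S$ (so that $v'_i = v''_i = \repr_j(v_i)$ and no probabilistic event is needed), and confirm that freeing the endpoint that already realises the minimum in $dist_{s_F}$ is precisely what suffices for buyer $i$'s budget to cover her chosen path.
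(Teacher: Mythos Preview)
Your proposal is correct and follows essentially the same argument as the paper's proof: under option~3 every exit from $S$ costs exactly $c/2$; with probability at least $\tfrac14$ the other segment's first/second option frees the path from the appropriate endpoint to $\repr_j(v_i)$, so buyer~$i$ can afford a path; and since no buyer can ever pay more than $c$ inside a path-segment of length $c$, the $\tfrac18$ factor follows. Your write-up is in fact a bit more explicit than the paper's (you spell out the per-buyer bound and the role of $v^\star$), but the structure and the key ideas are identical.
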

		
		\begin{proof}
				
			With probability at least $\frac{1}{4}$, $\dist_{s_F}(v''_i, \repr_j(v_i)) = 0$, the same holds for $\dist_{s_F}(v'_i, \repr_j(v_i))$.
			Thus, for each buyer $i \in B''_s$ with probability at least $\frac{1}{4}$ there exists a $\repr_j(u_i)$-$\repr_j(v_i)$ path of length at most $b_i$.
			In such a case, she would pay $\frac{c}{2}$ for edges in $S$.
			Since $S$ is a path connecting $l$ and $r$, it is impossible for a buyer to pay more than $c$ for a simple path within $S$ under any prices implementing $s_F$.
			Thus, each customer from $B''_s$ with probability at least $\frac{1}{4}$ pays at least half as much for edges in $S$ as they would given any other price assignment.
			Hence, the expected contribution of $B''_S$ towards $\rev_{j, F, s_F}$ generated by purchasing edges from $S$ is at most eight times smaller than its maximal possible value.
			
			\end{proof}

			\begin{lemma}

\label{lemma:fourthAcyclic}
If the fourth option was chosen in an acyclic segment $S$, the expected approximate revenue under $s_F$ generated by buyers from $B'_S$ by purchasing edges from $S$ is at least $\frac{1}{4} \mname{CONTRIB}_{S, B'_S, s_F}$.
\end{lemma}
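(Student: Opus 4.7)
The plan is to mirror the structure of the proofs of Lemmas \ref{lemma:thirdCyclic} and \ref{lemma:thirdAcyclic}, with the extra ingredient of iteration over candidate pivot edges. First I would fix any price vector $q$ implementing $s_F$ which realises $\mname{CONTRIB}_{S, B'_S, s_F}$ and argue that under $q$ a pivot edge must exist. Every buyer in $B'_S$ has $\max\{b_{i,l}, b_{i,r}\} < c/2$, so no buyer from $B'_S$ ever pays $\geq c/2$ for edges in $S$ under $q$; hence the sets of vertices reachable from $l$, respectively from $r$, at cost $< c/2$ inside $S$ must be disjoint and separated by at least one edge $e^*$. Moreover the partition of $B'_S$ induced by $e^*$ under $q$ coincides with the partition $B'_{S,l}, B'_{S,r}$ the algorithm would use for pivot $e^*$.

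Next I would compare the algorithm at the specific pivot $e^*$ with $q$. Under $q$, each buyer $i \in B'_{S,l}$ pays at most $b_{i,l}$ in $S$, because $b_{i,l}$ is, by its definition via $\dist_{s_F}$, an upper bound on what $i$ can spend in $S$ when routing through $l$ for \emph{any} prices implementing $s_F$; the same statement with $r$ and $b_{i,r}$ holds for $B'_{S,r}$. Thus $\mname{CONTRIB}_{S, B'_S, s_F}$ is bounded above by the sum of optima of the two rooted instances the algorithm constructs with pivot $e^*$. Corollary \ref{coll:fixedDepths} says these optima are attained by the computed prices, and because the algorithm maximises over all candidate pivot edges, the total amount $\sum_{i \in B'_S} x_i$ charged inside $S$ satisfies $\sum_i x_i \geq \mname{CONTRIB}_{S, B'_S, s_F}$.

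Finally I would take expectation over the randomness in the other segment $S'$ containing $\repr_j(v_i)$, conditioned on option 4 being used in $S$. Among options 1 and 2 in $S'$, each chosen with probability $\tfrac{1}{4}$, one makes exactly the endpoint of $S'$ realising $\min\{\dist_{s_F}(l, v'_i), \dist_{s_F}(l, v''_i)\}$ reachable from $\repr_j(v_i)$ for free; symmetrically for buyers in $B'_{S,r}$ one of the two options frees the endpoint attaining the min relative to $r$. Under that event, buyer $i$'s full path costs at most $x_i + (b_i - b_{i,l}) \leq b_i$ (respectively $\leq b_i$ via $b_{i,r}$), so she purchases it and actually contributes $x_i$ to the approximate revenue. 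Summing over $B'_S$ gives $\Exp{\text{approx.\ rev.\ from } B'_S \text{ in } S} \geq \tfrac{1}{4}\sum_i x_i \geq \tfrac{1}{4}\mname{CONTRIB}_{S, B'_S, s_F}$.

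The main obstacle is the bookkeeping at the boundary between the deterministic pivot-edge optimisation inside $S$ and the independent randomisation used in $S'$. One must verify carefully that the upper bound $x_i \leq b_{i,l}$ extracted from the rooted-instance optimum is valid against every feasible $q$ (not merely against the algorithm's own prices), and that the probability-$\tfrac{1}{4}$ event in $S'$ indeed frees the endpoint matching the $\min$ appearing in the definition of $b_{i,l}$, so that the buyer's affordability inequality $x_i + b_i - b_{i,l} \leq b_i$ is tight in the relevant direction.
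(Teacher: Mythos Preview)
Your proposal is correct and follows essentially the same route as the paper's proof: fix a realiser $q$ of $\mname{CONTRIB}_{S,B'_S,s_F}$, observe that under $q$ some edge $e^*$ is a pivot so that the two rooted instances at $e^*$ upper-bound the contribution under $q$, use the algorithm's maximisation over pivots, and then invoke the probability-$\tfrac14$ event in the other segment (options~1/2) to realise the budgets $b_{i,l}$, $b_{i,r}$. The paper additionally spends a line verifying that the algorithm's constructed price for the chosen pivot edge is positive (so the two halves really have length $<c/2$ and the partition is well-defined), which you could add for completeness, but otherwise the arguments coincide.
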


\begin{proof}

It is sufficient to prove the lemma for a fixed pivot edge because the algorithm iterates over all possibilities.
We will first show that the edge designated to be a pivot edge, denoted $e$, indeed is one.
Then, we will prove that the found prices result in high approximate revenue compared to other solutions, in which no buyer purchases $e$.

Since the budgets of buyers in the rooted instances are smaller than $\frac{c}{2}$, the distances from both $l$ and $r$ to the pivot edge must be too.
Thus, the price of $e$ is positive, and it indeed divides the vertices within the distance of $\frac{c}{2}$ from $l$ and $r$.

No customer $i \in B'_{S,l}$ can pay more than $b_{i,l}$ for edges inside $S$ under any prices implementing $s_F$ as long as she does not buy $e$.
The same holds for $B'_{S,r}$ and $b_{i,r}$.
Thus, in this case the revenue achieved in the two rooted instances by optimal solutions is not smaller than the maximal approximate revenue generated by $B'_S$ in $S$.

Thanks to the first two options, each customer $i \in B'_S$ with probability at least $\frac{1}{4}$ pays nothing for edges in the other segment she is involved in, so she has a budget of $b_{i,l}$ (or $B_{i,r}$) to pay for edges in $S$.
Thus, the expected approximate revenue in $S$ from $B'_S$ is at most four times smaller than the total revenue obtained in the two rooted instances, which concludes the proof.
\end{proof}

\newtheorem*{lemma:singleFragment}{Lemma \ref{lemma:singleFragment}}
\begin{lemma:singleFragment}
	Let $p$ and $q$ be two price assignments implementing $s_F$ such that $q$ maximizes $\rev_{j, F, s_F}$ and $p$ is the result of the randomized algorithm for pricing the skeleton edges of a single fragment.
	Then, the following inequality holds:
	$$ \Exp{ \rev_{j, F, s_F}(p) } \geq \frac{1}{\CmyScoreToMaxScore} \rev_{j, F, s_F}(q)$$
	\end{lemma:singleFragment}

	\begin{proof}

		
		Let us fix any two assignments of paths to the buyers such that each buyer $i \in B_{j,F}$ is assigned a $u_i$-$v_i$ path, which is the cheapest under prices $p$ or $q$ respectively.
		Of course, only as long as its cost does not exceed $b_i$.
		Furthermore, each outer extension $o$ of $\mname{SK}_j(F)$ is assumed to have length equal $r_{s_F, o}$.
		
		By $\inv_{j, S, s_F}(p)$ let us denote the total cost paid by buyers from $B_S$ for edges in the inner segment $S$ and by $\inv_{j, F, s_F}(p)$ its sum over all inner segments of $\mname{SK}_j(F)$.
		The total cost paid by buyers from $B_{j, F}$ in segments they are not involved in is denoted $\full_{j, F, s_F}(p)$.
		Note that this includes the whole revenue generated by $B_{j, F}$ in outer extensions of $\mname{SK}_j(F)$.
		Obviously $\rev_{j, F, s_F}(p) = \inv_{j, F, s_F}(p) + \full_{j, F, s_F}(p)$.
		The same equality holds for $q$, for which both values are defined in the same way.
		We will prove the lemma by separately showing lower bounds for revenue generated by selling parts of segments to involved buyers and selling whole segments to all others.
		
		\textbf{Selling whole segments:}
		Consider any buyer $i \in B_{j,F}$ buying a $\repr_j(u_i)$-$\repr_j(v_i)$ path under the prices $q$.
		By $u'_i$ and $v'_i$ let us denote respectively the first and last endpoint of a segment on this path.
		Obviously, she contributes $\dist_{s_F}(u'_i, v'_i)$ towards $\full_{j, F, s_F}(q)$.
		Let $u''_i$ and $v''_i$ be the other endpoints of segments she is involved in.
		If she is not involved in the given segment, for example because $\repr_j(u_i) = u'_i$, we set $u'_i = u''_i$ (or $v'_i = v''_i$).
		
		Note that the buyer $i$ cannot be 'doubly' involved in a segment.
		In other words, if $\repr_j(u_i)$ and $\repr_j(v_i)$ are not endpoints of segments, they must belong to different ones. 
		Thus, there is such a combination of the first two options for the segments she is involved in, that there exist free $\repr_j(u_i)$-$u'_i$ and $\repr_j(v_i)$-$v'_i$ paths in those segments.
		In that case $\repr_j(v_i)$-$v''_i$ and $\repr_j(u_i)$-$u''_i$ paths in those segments are not cheaper under the prices $p$ than they are under prices $q$.
		Thus, under the prices $p$ a cheapest $\repr_j(u_i)$-$\repr_j(v_i)$ path has length $\dist_{s_F}(u'_i, v'_i)$ and leads through $v'_i$ and $u'_i$.
		Although other shortest $\repr_j(u_i)$-$\repr_j(v_i)$ paths can exist, their cost must also include paying $\dist_{s_F}(u'_i, v'_i)$ for traversing whole segments. 
		Thus, $i$ contributes $\dist_{s_F}(u'_i, v'_i)$ towards $\full_{j, F, s_F}(p)$.
		
		Options for all the inner segments of $\mname{SK}_j(F)$ are chosen independently and each one with probability $\frac{1}{4}$.
		Hence, the situation described above occurs with probability at least $\frac{1}{16}$, which yields the following inequality:
		\begin{equation}
		\label{math:wholeSegments}
		\Exp{ \full_{j, F, s_F}(p) } \geq \frac{1}{16} \full_{j, F, s_F}(q)
		\end{equation}
		
		\textbf{Selling parts of segments:}
		
		Let us consider a single inner segment $S$ of $\mname{SK}_j(F)$ and let $B_S$ be the set of buyers involved in $S$.
		
		If $S$ contains at least one cycle, let us split $B_S$ into two disjoint sets: $B_{S, l}$ and $B_{S, r}$, which are defined in Section \ref{subsubsect:cyclicSegments}.
		If the third option is chosen, Lemma \ref{lemma:thirdCyclic} guarantees that the expected total cost paid by $B_{S, l}$ for edges in $S$ is at most four times smaller than it is under prices $q$.
		If the fourth option is chosen, the same holds for $B_{S, r}$.
		Since both options are chosen with probabilities equal $\frac{1}{4}$, this reasoning yields the following bound:
		\begin{equation}
		\label{math:cyclicSegments}
		\Exp{\inv_{j, S, s_F}(p) \geq \frac{1}{16} \inv_{j, S, s_F}(q)}
		\end{equation}
		
		If $S$ contains no cycles, its prices were constructed using the procedure from Section \ref{subsubsect:acyclicSegments}.
		Let us consider $B'_S$ and $B''_S$ separately.
		By Lemma \ref{lemma:thirdAcyclic}, choosing the third option guarantees the expected approximate revenue from $B'_S$ in $S$ under prices $p$ to be at most eight times smaller than under prices $q$.
		If the fourth option was chosen, by Lemma \ref{lemma:fourthAcyclic} $B''_S$ contributes towards $\Exp{\inv_{j, S, s_F}(p)}$ at least one-fourth of what it contributes towards $inv_{j, S, s_F}(q)$.
		Both options are chosen with probability $\frac{1}{4}$ so this results in the following lower bound:
		\begin{equation}
		\label{math:acyclicSegments}
		\Exp{\inv_{j, S, s_F}(p)} \geq \frac{1}{32} \inv_{j, S, s_F}(q)
		\end{equation}
		
		Applying Inequalities \ref{math:cyclicSegments} and \ref{math:acyclicSegments} to all inner segments of $\mname{SK}_j(F)$ and combining them with \ref{math:wholeSegments} concludes the proof:
		$$ \Exp{ \rev_{j, F, s_F}(p) } \geq \frac{1}{\CmyScoreToMaxScore} \rev_{j, F, s_F}(q)$$
		
		\end{proof}

\subsection{Constructing a global price assignment}
\label{subsect:wholeGraph_apx}

\begin{lemma}
Let $\mathcal{L}$ be the set of possible lengths of any simple path between border vertices under prices satisfying the rounding lemma (\ref{lemma:rounding}).
There are only polynomially many elements of $\mathcal{L}$.
\end{lemma}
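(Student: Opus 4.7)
The plan is to show that a simple path between border vertices decomposes into segment traversals, and that each traversal contributes a value drawn from a polynomially-sized dyadic set, so the possible path lengths lie in a polynomially-sized lattice of dyadic rationals.

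First I would observe that any simple $u$-$v$ path between border vertices is contained in $\mname{SK}_j$ by Lemma \ref{lemma:skeletonPath}, and that it must enter and leave each segment it uses at its two endpoints: segments only share their endpoints, and since $u$ and $v$ are themselves segment endpoints, a simple path cannot start or end strictly inside a segment. Therefore the length of the path is a sum of \emph{traversal lengths}, one per segment used. For an acyclic segment, this traversal length equals the segment length, which by the rounding lemma (\ref{lemma:rounding}) lies in $P$. For a cyclic segment, there are exactly two possible traversal routes — the cheaper one has length in $P$, and the longer one I would constrain by a companion rounding step that scales the edges of the alternative arc analogously to the one in the rounding lemma, forcing its length into a set $\tilde P$ of the same dyadic form $\{m b_{max}/2^t\}$, without asymptotically changing revenue.

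Next I would exploit the dyadic structure. Write $T_{\max} = \lceil \log(\Crounding \cdot m^2 \cdot |B_j|) \rceil$, so that $P \cup \tilde P$ consists of $0$ together with values of the form $m b_{max} \cdot 2^{T_{\max}-t}/2^{T_{\max}}$ for $t \in \{0,1,\dots,T_{\max}\}$. Each such value is a nonnegative integer multiple of the common denominator $m b_{max}/2^{T_{\max}}$. Summing at most $m$ such contributions along a simple path shows that every element of $\mathcal{L}$ has the form
\[
\frac{m b_{max}}{2^{T_{\max}}} \cdot k, \qquad k \in \mathbb{Z}_{\geq 0}.
\]
Since a simple path has at most $m$ edges of price at most $b_{max}$ each, its total length is at most $m \cdot b_{max}$, which forces $k \leq 2^{T_{\max}} = \mathrm{poly}(m,|B_j|)$. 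Hence $|\mathcal{L}|$ is polynomially bounded.

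The main obstacle is the cyclic-segment case of the first step: the rounding lemma as stated only pins down the \emph{cheapest} endpoint-to-endpoint route of each segment, whereas a generic simple path between border vertices may traverse the longer arc of a cyclic segment. I expect the cleanest resolution to be a straightforward extension of the rounding argument: apply the same multiplicative scaling (or a small additive rounding) to the alternative arc, and check — exactly as in the proof of the rounding lemma — that the induced loss of revenue across all paths desired by buyers in $B_j$ is at most an additive $b_{max}/\Crounding$, which is absorbed by $\mname{SKOPT}_j \geq b_{max}$. Once that refinement is in place, the dyadic counting argument of the previous paragraph goes through verbatim.
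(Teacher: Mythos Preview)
Your core argument---decompose the path into whole segments and use the dyadic structure of $P$---is exactly the paper's proof. The paper, however, does not worry about the longer arc of a cyclic segment: it reads ``length of a simple path between border vertices'' as the sum of the \emph{costs} of the segments the path traverses (each in $P$ by the rounding lemma), which is precisely the quantity the algorithm later enumerates in the dynamic program of Section~\ref{subsubsect:open}. Under that reading your companion rounding step for the alternative arc is unnecessary, and the bound follows in one line: every element of $\mathcal{L}$ is a sum of at most $m$ values from $P$, hence a nonnegative multiple of $p_{min}=\min(P\setminus\{0\})$ bounded above by $m^2 b_{max}$, giving polynomially many possibilities.
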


\begin{proof}
Any simple path between two border vertices consists of several whole segments.
Thus, its length must be a sum of elements of $P$ from the rounding lemma.
Because all positive elements of $P$ are in the form $\frac{m b_{max}}{2^t}$, $\mathcal{L}$ consists of multiples of $p_{min} = \min P \setminus \set{0} $ not greater than $m^2 b_{max}$.
Since $p_{min} \geq \frac{b_{max}}{2 \cdot \Crounding \cdot m \cdot |B_j|}$, $|\mathcal{L}|$ is bounded by a polynomial.
\end{proof}

\subsection{The lower bound on approximate revenue}
\label{subsect:SummingUp_apx}

\newtheorem*{lemma:maxScoreLowerBound}{Lemma \ref{lemma:maxScoreLowerBound}}
\begin{lemma:maxScoreLowerBound}
	There exists a price assignment $q \in Q_j$ with total approximate revenue at most $\CrevToScore$ times smaller than the maximal revenue obtained by prices in $Q_j$.
	\end{lemma:maxScoreLowerBound}
	
	\begin{proof}
	
	The lemma is proven by constructing the prices $q$.
	Let $q' \in Q_j$ be a price assignment maximizing $\rev_j(q')$.
	Consider any segment $S$ in on $j$-th level.
	We define the main edges of $S$ as the ones lying on any shortest (under prices $q'$) path connecting its endpoints and fully contained inside it.
	By $p_{min}$ let us denote the smallest positive element of $P$ from the rounding lemma (\ref{lemma:rounding}).
	Prices $q$ for edges of $S$ are defined in the following way:
	\begin{itemize}[nosep]
	\item If the length of $S$ is greater than $p_{min}$ (equivalently at least $2 p_{min}$), the prices of main edges of $S$ are two times smaller than in $q'$.
	Otherwise, they are set to zero.
	\item All the other edges in $S$ have the same prices as in $q'$.
	\end{itemize}
	Note that the length of $S$ under prices $q$ belongs to the set $P$ from Lemma \ref{lemma:rounding}.
	
	Since $\rev_j(q') = \sum_{F \in D_j} \rev_{j, F}(q')$, it is enough to show the inequality for a fixed part $F \in D_j$.
	Consider an outer extension $o$ in $\mname{SK}_j(F)$.
	Let $d$ and $d'$ be its lengths under prices $q$ and $q'$ respectively.
	If $d' < 2 p_{min}$, then $d = 0$ and $r_{s_{F, q}, o} = 0$, i.e. the strategy implemented in $F$ assumes that the length of $0$ is exactly zero.
	Otherwise, $l_{s_{F, q'}, o} \geq \frac{d'}{2}$ and $d \leq \frac{d'}{2}$ so $d \leq l_{s_{F, q'}, o}$.
	Hence, $r_{s_{F, q}, o} \leq l_{s_{F, q'}, o} < d'$.
	Thus, in both cases the strategy for $F$ implemented by $q$ assumes the length of $o$ to be at most the actual cost of $o$ under prices $q'$.
	
	Note that using the upper bounds on lengths of outer extensions instead of their actual lengths is the only difference between calculating approximate and actual revenue.
	Thus, the length of each simple path desired by a buyer from $B_{j, F}$ is not more expensive with regard to $\rev_{j, F, s_{F, q}}(q)$ than to $\rev_{j, F}(q')$.
	Hence, a buyer contributing to $\rev_{j, F}(q')$ also contributes to $\rev_{j, F, s_{F, q}}(q)$.
	Now we are going to show that she does not contribute much less.
	
	When calculating $\rev_{j, F, s_{F, q}}(q)$ the cost of each path is never understated in comparison to $\rev_{j, F}(q)$.
	Thus, it is enough to show that under prices $q$ the paths in $\mname{SK}_j(F)$ are not much shorter than under $q'$.
	Note that if the price assigned to an edge by $q$ is zero, then its price in $q'$ could have been at most $p_{min}$.
	Otherwise, it is at most two times cheaper under $q$ then under $q'$.
	Hence, for each $e \in \mname{SK}_j(F)$, $q(e) \geq \frac{q'(e)}{2} - p_{min}$.
	Thus, the distance between any two vertices under prices $q$ is not smaller than $\frac{d'}{2} - m \cdot p_{min}$, where $d'$ is the distance between them under $q'$.
	Since by calculating approximate revenue the algorithm overstates the distances, this results in the following bound on approximate revenue: 
	$$ \rev_{j, F, s_{F, q}} (q) \geq \frac{1}{2} \rev_{j, F}(q') - m \cdot | B_{j,F}| \cdot  p_{min} $$
	$$ \sum_{F \in D_j}  \rev_{j, F, s_{q,F}}(q) \geq \frac{1}{2} \rev_j(q') - m \cdot  |B_j| \cdot \frac{b_{max}}{\Crounding \cdot m \cdot |B_j|} $$
	Since $\rev_j(q') \geq b_{max}$, this concludes the proof.
	\end{proof}

\let\Crounding\undefined
\let\CrevToMyScore\undefined
\let\CrevToOpt\undefined
\let\CmyScoreToMaxScore\undefined
\let\CrevToScore\undefined
\let\CroundedRevToMyScore\undefined

\end{document}